\documentclass[11pt]{article}

\usepackage[parfill, skip=.25\baselineskip plus 1pt, indent=17pt]{parskip}

\usepackage[a4paper, margin=1in]{geometry}

\usepackage{amsmath,array}
\usepackage{amssymb}
\usepackage{amsthm}
\usepackage[inline,shortlabels]{enumitem}
\usepackage{graphicx}
\usepackage{hyperref}
\usepackage{times}
\usepackage[cal=boondoxo]{mathalfa}
\usepackage{upgreek}

\allowdisplaybreaks

\setlist[enumerate,1]{label=\roman*., leftmargin=8mm}
\setlist[itemize,1]{label=\textbullet, leftmargin=6mm}

\newtheorem{theorem}{Theorem}[section]
\newtheorem{corollary}[theorem]{Corollary}
\newtheorem{remark}{Remark}[theorem]
\newtheorem{proposition}[theorem]{Proposition}
\newtheorem{lemma}[theorem]{Lemma}
\newtheorem{definition}[theorem]{Definition}

\numberwithin{equation}{section}

\newcommand{\<}{\langle}
\renewcommand{\>}{\rangle}

\DeclareMathOperator{\Span}{span}

\newcommand{\norm}[1]{\left\lVert #1\right\rVert}
\newcommand{\abs}[1]{\left\lvert #1\right\rvert}
\newcommand{\brac}[1]{\left( #1\right)}
\newcommand{\sbrac}[1]{\left[ #1\right]}

\newcommand{\set}[1]{\left\{#1\right\}}
\newcommand{\bs}[1]{\boldsymbol{\mathbf{#1}}}
\newcommand{\mat}[1]{\begin{pmatrix}#1\end{pmatrix}}

\newcommand{\ph}{\,\cdot\,}

\newcommand{\Z} {\mathbb{Z}}

\newcommand{\C} {\mathbb{C}}
\newcommand{\R} {\mathbb{R}}

\newcommand{\lpc}{\Delta}
\newcommand{\grad}{\nabla}

\newcommand{\mb}{\mathbf}

\renewcommand{\d}{\textup{d}}

\renewcommand{\Re}{\operatorname{Re}}
\renewcommand{\Im}{\operatorname{Im}}

\renewcommand{\L}{\mathcal L}

\newcommand{\nmodes}{n^u}
\newcommand{\Mor}{n^-}

\title{A turning point principle for liquid Lane--Emden stars}

\author{King Ming Lam\thanks{Delft Institute of Applied Mathematics, Delft University of Technology, 2628 CD Delft, Netherlands. Email: \href{mailto:K.M.Lam@tudelft.nl}{K.M.Lam@tudelft.nl}.}}

\date{}

\begin{document}

\maketitle

\begin{abstract}
Upon fixing the adiabatic index, the spherically symmetric liquid Lane--Emden stars governed by the Euler--Poisson system with a ``stiffened gas'' equation of state $p=\rho^\gamma-1$ form a one-parameter family, naturally parametrised by the central density $\kappa=\bar\rho_\kappa(0)\in(1,\infty)$. In contrast to the gaseous case, the liquid free boundary breaks the self-similarity of the family and bends the mass--radius curve, creating extrema of the mass when $\gamma<2(d-1)/d$. We formulate a turning point principle in the spirit of Zel'dovich, Wheeler and Thorne, and of its rigorous relativistic counterpart from the works of Had\v{z}i\'c, Lin and Rein: the number of radial growing modes equals the negative Morse index of the linearised operator, is locally constant along the family, and can change only at the turning points of the mass--radius curve, the bending orientation there dictating whether a growing mode is gained or lost. We describe the large central density limit in which the gaseous tail --- read off from a planar dynamical system --- determines whether the mass--radius curve spirals, and hence whether the number of growing modes remains bounded or tends to infinity. As corollaries we recover and sharpen the known radial (in)stability results for liquid Lane--Emden stars and, in combination with existing non-radial analysis, obtain a turning point criterion for non-radial stability.
\end{abstract}

\tableofcontents

\section{Introduction}

\subsection{The Euler--Poisson system and liquid Lane--Emden stars}

We consider the free-boundary Euler--Poisson system, a fundamental model of a self-gravitating compressible fluid. The unknowns are the fluid density $\rho\ge0$, the velocity $\mb u$, the pressure $p\ge0$, and the gravitational potential $\phi$, solving
\begin{alignat}{3}
\partial_t\rho+\grad\cdot(\rho\mb u) &=0
\qquad &&\text{ in } \ \Omega(t), \label{E:CONT}\\
\rho(\partial_t + \mb u\cdot\nabla)\mb u + \grad p + \rho\grad\phi &=\mb 0 
\qquad &&\text{ in } \ \Omega(t), \label{E:MOM}\\
\lpc\phi &=4\pi\rho 
\qquad &&\text{ in } \ \R^d,\label{E:GRAV}
\end{alignat}
with the asymptotic condition $\lim_{|\mb x|\to\infty}\phi(t,\mb x)=0$ that models an isolated star. Here $d\ge3$, $\Omega(t):=\{\mb x:\rho(t,\mb x)>0\}$ is the (moving) interior of the support of the star, and we close the system with the liquid polytropic equation of state
\begin{equation}\label{E:EOS}
  p=\rho^\gamma-1,\qquad\text{where}\qquad \gamma\in[1,2).
\end{equation}
At the vacuum free boundary $\partial\Omega(t)$ we impose that the pressure matches the vacuum, $p=0$ on $\partial\Omega$, and that the boundary moves with normal velocity $\mb u\cdot\mb n$. We refer to~\eqref{E:CONT}--\eqref{E:EOS} as the liquid (EP)$_\gamma$-system. The constant in~\eqref{E:EOS} has been normalised to $1$, so that the density jumps from $0$ in the vacuum to the value $1$ on the liquid surface, where $p=0$.

\begin{remark}[Normalisation of the gravitational coupling]\label{R:coupling}
In~\eqref{E:GRAV} we keep the three-dimensional constant $4\pi$ in every dimension $d\ge3$, following~\cite{Lam_2024}, rather than the geometrically natural constant $\upomega_d:=|\mathbb S^{d-1}|=2\pi^{d/2}/\Gamma(d/2)$, i.e. the surface area of the unit sphere in $d$ dimensions. This is just a choice of the normalisation of the gravitational constant (physically, just a choice of units), our results in this paper are not affected by this choice, as different choices are related by rescaling. Indeed, every statement we prove is invariant under such choices --- the growing-mode counts, the $\kappa$-locations and orientations of the turning points, the thresholds $\gamma_*,\gamma_\sharp$, and the discriminant~\eqref{E:DISC}, which accordingly contains no $\pi$ --- only the explicit constants in displayed formulas change.
\end{remark}

The simplest non-trivial solutions are the time-independent, spherically symmetric steady states, the \emph{liquid Lane--Emden stars}, in which pressure and self-gravity are in exact balance. They represent stars in hydrostatic
equilibrium. Writing $r=|\mb x|$, the steady momentum equation $\grad p+\rho\grad\phi=\mb 0$ reduces to the ODE
\begin{align}\label{E:LEODE}
0=
\begin{cases}
\dfrac{\gamma}{\gamma-1}\lpc\bar\rho^{\gamma-1}+4\pi\bar\rho &\quad\text{ when }\quad\gamma>1,\\[2mm]
\lpc\ln\bar\rho+4\pi\bar\rho &\quad\text{ when }\quad\gamma=1,
\end{cases}
\qquad \grad\bar\rho(\mb 0)=\mb 0,
\end{align}
where $\lpc=r^{1-d}\partial_r(r^{d-1}\partial_r)$. For each central value $\bar\rho(0)>0$ this ODE has a unique decreasing solution $f_0$, and a \emph{liquid} star is obtained by truncating $f_0$ at the radius where it reaches the surface density $1$: setting $\bar\rho:=f_0\mb 1_{[0,R]}$ with $f_0(R)=1$. Only profiles with $\bar\rho(0)>1$ produce a genuine liquid star, so the family is naturally parametrised by the central density
\begin{align}\label{E:KAPPA}
\kappa:=\bar\rho_\kappa(0)\in(1,\infty),
\end{align}
which plays here the role that the central redshift plays for relativistic stars. As $\kappa\searrow1$ the star shrinks to a point (small relative central density), while $\kappa\to\infty$ is the highly compressed regime. Details of the liquid Lane--Emden stars can be found in~\cite{Lam_2024}.

\subsection{Turning point principles and related literature}

In our universe, we observe both stable stars like our
Sun and unstable stars such as exploding supernovae and stars collapsing into black holes. Thus an important question in astrophysics is that of stellar stability.

A recurring theme in the stability theory of self-gravitating bodies is that dynamical (in)stability can be diagnosed from bulk, observable quantities rather than from a direct spectral computation. For relativistic stars this is the content of the \emph{turning point principle}, also called the static criterion or the $M(R)$-method. Fixing an equation of state, the spherically symmetric steady states of the Einstein--Euler system form a one-parameter family parametrised by the central redshift $\kappa$, and one plots the mass $M_\kappa$ against the radius $R_\kappa$. Zel'dovich~\cite{Zeldovich_1963} and Wheeler (see Harrison, Thorne, Wakano and Wheeler~\cite{Harrison_Thorne_Wakano_Wheeler_1965}, pp.~60--66) proposed that along such a family stability can be exchanged for instability, and vice versa, only at the extrema of $\kappa\mapsto M_\kappa$, and that the orientation of the bend at such an extremum determines whether a growing mode is gained or lost. The proposal grew out of the first systematic linear stability theory for relativistic stars: Chandrasekhar~\cite{Chandrasekhar_1964} formulated the radial spectral problem variationally, and~\cite{Harrison_Thorne_Wakano_Wheeler_1965} (see also Bardeen, Thorne and Meltzer~\cite{Bardeen_Thorne_Meltzer_1966}) recognised the equivalent characterisation of spectral stability through the second variation of the mass at fixed baryon number --- the variational germ of every turning point statement. Refinements and heuristic derivations of the static approach go back to Thorne~\cite{Thorne_1966}, its textbook accounts include Zel'dovich and Novikov~\cite{Zeldovich_Novikov_1971} and Shapiro and Teukolsky~\cite{Shapiro_Teukolsky_1983}, and to this day the resulting criterion $\partial M/\partial\rho_c>0$ is the standard first stability test applied to neutron star models. An abstract account of turning point instabilities was given by Sorkin~\cite{Sorkin_1981, Sorkin_1982}; it underlies the turning point criterion of Friedman, Ipser and Sorkin~\cite{Friedman_Ipser_Sorkin_1988} for the axisymmetric secular stability of uniformly rotating relativistic stars, and the relation between such thermodynamic turning points and the dynamical ones considered here is discussed by Schiffrin and Wald~\cite{Schiffrin_Wald_2014}.

For half a century the principle remained a remarkably successful heuristic; it was placed on a rigorous footing only recently. Lin and Zeng~\cite{Lin_Zeng_2022} introduced a separable Hamiltonian framework for the linearised Euler--Poisson system (resting on their general index theory for linear Hamiltonian PDEs~\cite{Lin_Zeng_Memoirs_2022}) and proved the turning point principle for Newtonian gaseous stars along one-parameter families of steady states with general equations of state. Building on this framework, Had\v{z}i\'c, Lin and Rein~\cite{Hadzic_Lin_Rein_2021} proved for the Einstein--Euler system that steady states of small central redshift are spectrally stable while strongly relativistic ones --- those with very large central redshift --- are unstable, together with an exponential trichotomy for the linearised flow, and Had\v{z}i\'c and Lin~\cite{Hadzic_Lin_2021} then established the full turning point principle. In the relativistic setting the number of growing modes is expressed as the difference of the negative Morse index of a Schr\"odinger-type reduced operator and a winding index that reads off the bending of the mass--radius curve; one consequence is that the number of growing modes tends to infinity as the central redshift grows and the mass--radius curve spirals. The spiral itself has an independent history: for the Tolman--Oppenheimer--Volkoff equation it was analysed by Makino~\cite{Makino_2000} and, by numerical and dynamical-systems methods, by Nilsson and Uggla~\cite{Nilsson_Uggla_2000} and Heinzle, R\"ohr and Uggla~\cite{Heinzle_Rohr_Uggla_2003}; for the parallel dynamical-systems theory of Newtonian stellar models see Heinzle and Uggla~\cite{Heinzle_Uggla_2003} and also~\cite{Heinzle_2002}. The planar system that we extract in Section~\ref{S:proof} to govern the gaseous tail of the liquid family belongs to this circle of ideas; what the turning point principle adds is that this system encodes not merely the shape of the mass--radius curve but the exact growing-mode count along it.

It should be stressed that a turning point principle is a genuinely dynamical assertion, not a soft consequence of the geometry of the family. For the gravitational Vlasov--Poisson system, the kinetic sibling of the Euler--Poisson system, Ramming and Rein~\cite{Ramming_Rein_2017} proved that the mass--radius relation of the King and related steady state families is a spiral; yet these steady galaxies are nonlinearly stable~\cite{Guo_Rein_2001, Lemou_Mehats_Raphael_2012}, so the Newtonian kinetic mass--radius spiral carries no instability at all and the turning point principle in the above form \emph{fails}. In the relativistic kinetic setting, Zel'dovich and Podurets~\cite{Zeldovich_Podurets_1965} proposed a binding-energy turning point criterion for the Einstein--Vlasov system; the instability of strongly relativistic galaxies was proven in~\cite{Hadzic_Lin_Rein_2021}, but the recent numerics of G\"unther, Straub and Rein~\cite{Gunther_Straub_Rein_2021} give strong evidence against the classical binding-energy hypothesis, exhibiting stable configurations beyond the first binding-energy maximum; see the review~\cite{Rein_2023}. Whether a given one-parameter family of self-gravitating equilibria obeys an exact turning point law is therefore a delicate question that must be settled model by model, and settling it completely for a natural Newtonian free-boundary family is one purpose of the present paper.

In this paper we formulate and prove the corresponding turning point principle for the Newtonian \emph{liquid} Lane--Emden stars. The analogy with the relativistic setting is more than formal. For the classical gaseous Lane--Emden stars~\cite{Lane_1870, Emden_1907, Chandrasekhar_1939} the family is exactly self-similar, the mass--radius curve is a single power law without extrema, and the stability is therefore independent of the central density: the stars are linearly stable precisely when $\gamma\ge 2(d-1)/d$, which in dimension $d=3$ is the classical threshold $\gamma\ge4/3$ of Lin~\cite{Lin_1997} and Jang and Makino~\cite{Jang_Makino_2020}. The turning point principle of~\cite{Lin_Zeng_2022} is thus consistent with, but degenerate along, this family; genuine Newtonian turning points require the scale invariance to be broken, as it is for the white dwarf equation of state with its limiting Chandrasekhar mass~\cite{Chandrasekhar_1931}. The nonlinear theory reflects the same threshold: nonlinear instability for $6/5\le\gamma<4/3$ was proven by Jang~\cite{Jang_2008, Jang_2014}, with the expanding dynamics near the unstable stars recently detailed by Cheng, Cheng and Lin~\cite{Cheng_Cheng_Lin_2025}, while nonlinear stability for $4/3<\gamma<2$, conditional on global existence, was obtained variationally by Rein~\cite{Rein_2003} and Luo and Smoller~\cite{Luo_Smoller_2009} and made unconditional for spherically symmetric perturbations by Lin, Wang and Zhu~\cite{Lin_Wang_Zhu_2024}. At the critical exponent $\gamma=4/3$ the Lane--Emden star is embedded in the family of collapsing and expanding Goldreich--Weber stars~\cite{Goldreich_Weber_1980, Makino_1992} and is nonlinearly unstable despite its neutral linear stability, the expanding Goldreich--Weber stars being themselves nonlinearly stable~\cite{Hadzic_Jang_2018, Hadzic_Jang_Lam}.

Imposing the liquid surface density truncates the self-similar profile and bends the mass--radius curve, creating genuine mass extrema once $\gamma<2(d-1)/d$. The density jump at the liquid surface places the problem outside the gaseous framework: the linearised operator acquires a surface term and a Robin boundary condition (see~\eqref{E:TE} below), and it is exactly this boundary contribution that produces the turning points. In~\cite{Lam_2024} the author showed that this renders the liquid stars linearly stable at small relative central density and, in dimensions $d<10$, linearly unstable at large central density when $\gamma<2(d-1)/d$, in striking parallel with the relativistic dichotomy of~\cite{Hadzic_Lin_Rein_2021}; nonlinear instability at large central density was subsequently obtained by Hao and Miao~\cite{Hao_Miao_2024} taking the linear analysis as input, and the non-radial linear theory was developed in~\cite{Lam_nonradial}. The liquid (``stiffened gas'') equation of state~\eqref{E:EOS} is a standard model of nearly incompressible fluids; at $\gamma=1$ it reduces to $p=\rho-1$, the equation of state of the \emph{hard phase} of Christodoulou's two-phase model of gravitational collapse~\cite{Christodoulou_1995}, itself modelled on Zel'dovich's stiff equation of state for matter at ultrahigh density~\cite{Zeldovich_1962}; for relativistic free-boundary problems of this liquid type see~\cite{Oliynyk, Miao_Shahshahani_Wu_2021}. The Cauchy problem underlying our linear theory is by now well understood: local well-posedness for the liquid Euler--Poisson system~\eqref{E:CONT}--\eqref{E:EOS} was proven by Ginsberg, Lindblad and Luo~\cite{Ginsberg_Lindblad_Luo_2020}, following the liquid free-boundary Euler theory of Lindblad~\cite{Lindblad_2005}, Trakhinin~\cite{Trakhinin_2009} and Coutand, Hole and Shkoller~\cite{Coutand_Hole_Shkoller_2013}; for the gaseous counterparts see~\cite{Coutand_Shkoller_2012, Jang_Masmoudi_2015, Gu_Lei_2016, Hadzic_Jang_2019, Ifrim_Tataru_2024}. A sharp linear theory of the kind organised by a turning point principle is precisely the input that such nonlinear frameworks require, as~\cite{Hao_Miao_2024} illustrates in the present context.

The turning point principle proven below organises these statements into a single counting law along the mass--radius curve, and exhibits the planar dynamical system governing the gaseous tail as the exact Newtonian analogue of the relativistic mass--radius spiral. Beyond recovering the (in)stability results of~\cite{Lam_2024}, it determines the precise number of growing modes at every non-critical central density, locates the onset of instability at the first mass maximum, and resolves the large central density asymptotics into a spiral/node dichotomy governed by the explicit discriminant~\eqref{E:DISC}, with dimension ten as the threshold. Two features seem worth highlighting. First, in our comoving formulation the conservation of mass is built into the admissible perturbations, so that the growing-mode count is the plain negative Morse index of the reduced operator: no mean-zero constraint is imposed and no winding index needs to be subtracted, in contrast with the constrained reductions of~\cite{Lin_Zeng_2022, Hadzic_Lin_Rein_2021, Hadzic_Lin_2021}, a simplification of the bookkeeping that may be useful in other free-boundary stability problems. Second, in combination with the non-radial analysis of~\cite{Lam_nonradial}, the principle upgrades to a criterion for \emph{non-radial} linear stability read off entirely from the mass--radius curve (Corollary~\ref{C:nonradial}); no analogous rigorous non-radial result is currently available for relativistic stars. In the 1920s, in his debate with Eddington, Jeans argued that liquid stars are more stable than gaseous ones~\cite{Jeans_1927, Jeans_1928, Eddington_1928}. His specific model was wrong, but the intuition has a modern counterpart: as found in~\cite{Lam_2024, Lam_nonradial}, liquid stars resemble relativistic stars more closely than gaseous ones do --- the scale-breaking constant in the pressure playing the role that the speed of light plays relativistically --- and the turning point principle established here, complete with its spiral, is one more manifestation of that resemblance.

\subsection{Informal statement of the main results}

Throughout we fix $d\ge3$ and $\gamma\in[1,2)$, and write
\begin{align}\label{E:GAMMASTAR}
\gamma_*:=\frac{2(d-1)}{d},\qquad \gamma_\sharp:=\frac{2d}{d+2},\qquad \gamma_\sharp<\gamma_*<2.
\end{align}
The threshold $\gamma_*$ is the mass critical index that separates stable from possibly unstable adiabatic indices; $\gamma_\sharp$ is the energy critical index that separates the gaseous profiles of compact support ($\gamma>\gamma_\sharp$) from those of infinite support ($\gamma\le\gamma_\sharp$). For each $\kappa\in(1,\infty)$ we denote by $(\bar\rho_\kappa,R_\kappa)$ the liquid Lane--Emden star of central density $\kappa$, by $M_\kappa$ its mass, and by $\nmodes(\kappa)$ the number of its radial growing modes; precise definitions are given in Section~\ref{S:results}.

Our first result, Theorem~\ref{T:COUNT}, identifies the number of growing modes with the negative Morse index of the linearised radial operator $\L_\kappa$ and pins down its kernel. In contrast with the relativistic case, we use a comoving (Lagrangian) formulation that builds in the conservation of total mass, so that no constraint needs to be imposed and no winding index is subtracted: the count is simply $\nmodes(\kappa)=\Mor(\L_\kappa)$. The kernel is non-trivial exactly at the turning points of $\kappa\mapsto M_\kappa$, the marginal mode being the variation of the steady family.

Our second result, Theorem~\ref{T:TPP}, is the turning point principle: $\nmodes(\kappa)$ is locally constant and changes only at the extrema of the mass, by $\pm1$ according to the bending orientation of the mass--radius curve, in exact analogy with~\cite{Hadzic_Lin_Rein_2021, Hadzic_Lin_2021}. It further describes the global shape of the curve: there are no extrema and permanent stability when $\gamma\ge\gamma_*$, and a first onset of instability at a mass maximum when $\gamma<\gamma_*$ and, in addition, $\gamma\ge\gamma_\sharp$ or the discriminant~\eqref{E:DISC} below is negative --- a hypothesis that is automatic in every dimension $d<10$. In the strictly infinite-support regime $\gamma<\gamma_\sharp$, the curve converges, as $\kappa\to\infty$, to the point representing the singular self-similar liquid star, and it does so along a spiral (with $\nmodes(\kappa)\to\infty$) precisely when the discriminant
\begin{align}\label{E:DISC}
\mathcal D(\gamma,d):=(d-2)^2(2-\gamma)^2-8(2(d-1)-d\gamma)
\end{align}
of the planar tail dynamics is negative, as it automatically is for every $d<10$; when $\gamma_\sharp\le\gamma<\gamma_*$ the curve instead returns to the origin with finitely many turning points and the growing-mode count stabilises, to exactly $1$ in the explicitly solvable borderline case $\gamma=\gamma_\sharp$. In the one remaining case --- the node regime $\gamma<\gamma_\sharp$, $\mathcal D(\gamma,d)\ge0$, possible only when $d\ge10$ --- the curve converges with an asymptotic tangent direction and the count stabilises to a finite value, but the stabilised value, and with it the large central density (in)stability, is left open; see Remark~\ref{R:open}.

We state these results in detail in Section~\ref{S:results}, and prove them in Section~\ref{S:proof} by a direct variational analysis of the reduced radial operator $\mathcal L_\kappa$, combined with an adaptation of the planar dynamical-systems method of~\cite{Hadzic_Lin_Rein_2021} for the gaseous tail. The turning point principle we establish is the liquid Newtonian analogue of that proven, via the separable-Hamiltonian framework, by~\cite{Lin_Zeng_2022} for gaseous stars and by~\cite{Hadzic_Lin_Rein_2021, Hadzic_Lin_2021} in the relativistic setting; our comoving formulation lets us bypass the constrained reduction and winding-index subtraction of that framework.

\section{Formulation and results}\label{S:results}

We denote by $L^2(\Omega,w)$ the weighted $L^2$ space with weight $w$, and denote by $H^k(\Omega,w)$ the weighted Sobolev space with weight $w$ --- all derivatives up to order $k$ being measured in $L^2(\Omega,w)$.

\subsection{The liquid Lane--Emden family and the mass--radius curve}

We first record the scaling structure of the family, which is the Newtonian counterpart of the central-redshift parametrisation of relativistic stars.

\begin{definition}[Liquid Lane--Emden family]\label{D:family}
For $\kappa\in(1,\infty)$ let $f_0=f_{0,\kappa}$ be the unique decreasing solution of~\eqref{E:LEODE} with $f_0(0)=\kappa$, and let $R_\kappa>0$ be defined by $f_{0,\kappa}(R_\kappa)=1$ (a proof of the existence and uniqueness of $f_{0,\kappa}$, and the fact that it decreases through the value $1$ at a finite radius, so that $R_\kappa$ exists, can be found in \cite{Lam_2024}). The liquid Lane--Emden star of central density $\kappa$ is
\begin{align}
\bar\rho_\kappa:=f_{0,\kappa}\mb 1_{[0,R_\kappa]},
\end{align}
and its mass is
\begin{align}\label{E:MASS}
M_\kappa:=4\pi\int_0^{R_\kappa}y^{d-1}\bar\rho_\kappa(y)\d y.
\end{align}
We call the planar curve
\begin{align}
\mathcal C:=\set{(R_\kappa,M_\kappa):\kappa\in(1,\infty)}\subset\R_{>0}^2
\end{align}
the \emph{mass--radius curve}, drawn with the radius on the horizontal axis.
\end{definition}

\begin{lemma}[Lane--Emden profile bound]\label{L:PROFBND}
The liquid Lane--Emden profile $\bar\rho_\kappa$ from Definition \ref{D:family} satisfies the hydrostatic identity
\begin{align}\label{E:HYDRO}
\partial_y\bar\rho_\kappa^\gamma=-\bar\rho_\kappa\,y^{1-d}m_\kappa(y)\qquad\text{on }(0,R_\kappa],\qquad\text{where}\qquad m_\kappa(y):=4\pi\int_0^y s^{d-1}\bar\rho_\kappa\,\d s.
\end{align}
In particular, we have the following bounds:
\begin{alignat}{3}
1&\le\quad\,\bar\rho_\kappa(y)&&\le\kappa,\qquad\qquad &y\in[0,R_\kappa],\label{E:RHOBOUNDS}\\
\frac{4\pi}{d}&\le-\frac{\partial_y\bar\rho_\kappa^\gamma(y)}{y}&&\le\frac{4\pi}{d}\kappa^2,\qquad\qquad &y\in(0,R_\kappa].\label{E:PROFBND}
\end{alignat}
\end{lemma}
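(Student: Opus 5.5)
The plan is to derive the hydrostatic identity directly from the ODE~\eqref{E:LEODE} by integrating once in $r$, and then to read both bounds off it using monotonicity.

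\emph{Hydrostatic identity.} For $\gamma>1$, multiply~\eqref{E:LEODE} by $y^{d-1}$ and write $\lpc=y^{1-d}\partial_y(y^{d-1}\partial_y)$. This gives
\[
\partial_y\Bigl(y^{d-1}\tfrac{\gamma}{\gamma-1}\partial_y\bar\rho^{\gamma-1}\Bigr)=-4\pi y^{d-1}\bar\rho .
\]
Integrate from $0$ to $y$. The boundary term at $0$ vanishes because $\grad\bar\rho(\mb 0)=\mb 0$, $d\ge3$ and the profile is smooth. The result is $y^{d-1}\tfrac{\gamma}{\gamma-1}\partial_y\bar\rho^{\gamma-1}=-m_\kappa(y)$. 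Since $\tfrac{\gamma}{\gamma-1}\partial_y\bar\rho^{\gamma-1}=\gamma\bar\rho^{\gamma-2}\partial_y\bar\rho=\bar\rho^{-1}\partial_y\bar\rho^\gamma$, this is exactly~\eqref{E:HYDRO}. For $\gamma=1$ the same computation applies with $\partial_y\ln\bar\rho=\bar\rho^{-1}\partial_y\bar\rho$. On $(0,R_\kappa]$ we have $\bar\rho_\kappa=f_{0,\kappa}$, so no truncation issue arises; at $y=R_\kappa$ we use the one-sided derivative of $f_0$.

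\emph{Density bounds~\eqref{E:RHOBOUNDS}.} By~\eqref{E:HYDRO}, $m_\kappa>0$ and $\bar\rho>0$, so $\bar\rho^\gamma$ is strictly decreasing on $(0,R_\kappa]$. Hence $\bar\rho_\kappa(R_\kappa)=1\le\bar\rho_\kappa(y)\le\bar\rho_\kappa(0)=\kappa$. This is consistent with the decreasing solution quoted from~\cite{Lam_2024} in Definition~\ref{D:family}.

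\emph{Gradient bounds~\eqref{E:PROFBND}.} Divide~\eqref{E:HYDRO} by $-y$:
\[
-\frac{\partial_y\bar\rho_\kappa^\gamma}{y}=\bar\rho_\kappa(y)\,\frac{m_\kappa(y)}{y^d}.
\]
Using $1\le\bar\rho\le\kappa$ on $[0,y]$ gives $\tfrac{4\pi}{d}y^d\le m_\kappa(y)\le\tfrac{4\pi}{d}\kappa y^d$. Multiplying by $\bar\rho_\kappa(y)\in[1,\kappa]$ yields the two-sided bound $\tfrac{4\pi}{d}\le\cdot\le\tfrac{4\pi}{d}\kappa^2$.

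\emph{Main obstacle.} The only delicate point is justifying the vanishing boundary term at $y=0$ when integrating the radial ODE. This follows from the regularity of $f_0$ at the origin established in~\cite{Lam_2024}: $y^{d-1}\partial_y f_0\to0$ because $\partial_yf_0$ is bounded and $d\ge3$. Everything else is elementary.
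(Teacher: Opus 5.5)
Your proposal is correct and follows the paper's proof. You integrate \eqref{E:LEODE} once to get the hydrostatic identity, use monotonicity between $\bar\rho_\kappa(0)=\kappa$ and $\bar\rho_\kappa(R_\kappa)=1$ to obtain \eqref{E:RHOBOUNDS}, and insert these bounds into the identity to obtain \eqref{E:PROFBND}. The only minor differences are that you derive the monotonicity from the identity itself rather than quoting it, and you explicitly justify the vanishing boundary term at the origin and the one-sided derivative at $R_\kappa$, both of which the paper leaves implicit.
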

\begin{proof}
Since $\bar\rho_\kappa$ is decreasing with $\bar\rho_\kappa(0)=\kappa$ and $\bar\rho_\kappa(R_\kappa)=1$, we get \eqref{E:RHOBOUNDS}. By integrating we can rewrite \eqref{E:LEODE} as
\[\partial_y\bar\rho_\kappa^\gamma(y)=-4\pi\bar\rho_\kappa(y)y^{1-d}\int_0^y s^{d-1}\bar\rho_\kappa(s)\d s.\]
Using this and~\eqref{E:RHOBOUNDS} we get
\begin{align*}
\frac{4\pi}{d}\le-\frac{\partial_y\bar\rho_\kappa^\gamma(y)}{y}&=\bar\rho_\kappa(y)y^{-d}4\pi\int_0^y s^{d-1}\bar\rho_\kappa\,\d s\le\frac{4\pi}{d}\kappa^2,\qquad\qquad y\in(0,R_\kappa].\qedhere
\end{align*}
\end{proof}

The gaseous Lane--Emden profiles obey a one-parameter scaling symmetry which we use throughout. Let $\bar\rho_*$ denote the gaseous Lane--Emden profile of unit central density, i.e. the solution of~\eqref{E:LEODE} with $\bar\rho_*(0)=1$, defined on its maximal interval of existence, and let
\begin{align}\label{E:MSTAR}
m_*(S):=4\pi\int_0^{S}s^{\,d-1}\bar\rho_*(s)\,\d s.
\end{align}

\begin{lemma}[Scaling of the family]\label{L:scaling}
For every $\kappa>0$ the rescaled profile $\bar\rho_\kappa(y)=\kappa\,\bar\rho_*(\kappa^{(2-\gamma)/2}y)$ solves~\eqref{E:LEODE} with central density $\kappa$. Consequently, with $S_\kappa:=\bar\rho_*^{-1}(1/\kappa)$, the liquid star of central density $\kappa>1$ has
\begin{align}\label{E:RMK}
R_\kappa=\kappa^{-{1\over 2}(2-\gamma)}\,S_\kappa,
\qquad
M_\kappa=\kappa^{{1\over 2}(d\gamma-2(d-1))}\,m_*(S_\kappa).
\end{align}
\end{lemma}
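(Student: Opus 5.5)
The plan is a direct substitution into~\eqref{E:LEODE}, followed by reading off $R_\kappa$ and $M_\kappa$ by the change of variables $s=\kappa^{(2-\gamma)/2}y$. Set $\lambda:=\kappa^{(2-\gamma)/2}$ and $g(y):=\kappa\bar\rho_*(\lambda y)$.

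For $\gamma>1$ we have $g^{\gamma-1}(y)=\kappa^{\gamma-1}\bar\rho_*^{\gamma-1}(\lambda y)$. The radial Laplacian $\lpc=r^{1-d}\partial_r(r^{d-1}\partial_r)$ is homogeneous of degree $-2$ under dilations, so $\lpc[h(\lambda\,\cdot)](y)=\lambda^2(\lpc h)(\lambda y)$. This gives
\begin{align*}
\frac{\gamma}{\gamma-1}\lpc g^{\gamma-1}(y)=\kappa^{\gamma-1}\lambda^2\frac{\gamma}{\gamma-1}(\lpc\bar\rho_*^{\gamma-1})(\lambda y)=-4\pi\kappa^{\gamma-1}\lambda^{2}\bar\rho_*(\lambda y).
\end{align*}
Since $\kappa^{\gamma-1}\lambda^2=\kappa^{\gamma-1+2-\gamma}=\kappa$, this equals $-4\pi g(y)$, so $g$ solves the equation. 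For $\gamma=1$ we have $\lambda=\kappa^{1/2}$ and $\ln g=\ln\kappa+\ln\bar\rho_*(\lambda\cdot)$. The constant is annihilated by $\lpc$, and the same scaling of $\lpc$ applies. In both cases $g'(0)=\lambda\kappa\bar\rho_*'(0)=0$ and $g(0)=\kappa$, and $g$ is decreasing because $\bar\rho_*$ is. By the uniqueness recalled in Definition~\ref{D:family} (from~\cite{Lam_2024}), $g=f_{0,\kappa}$, and therefore $\bar\rho_\kappa=g\mb 1_{[0,R_\kappa]}$. Here $\bar\rho_\kappa$ denotes the untruncated profile where it is evaluated beyond $R_\kappa$; the identity is stated on the domain of existence.

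Next I locate $R_\kappa$. The defining relation $f_{0,\kappa}(R_\kappa)=1$ reads $\bar\rho_*(\lambda R_\kappa)=1/\kappa$. The profile $\bar\rho_*$ is strictly decreasing on its maximal interval (again by~\cite{Lam_2024}, or because~\eqref{E:HYDRO} makes the derivative strictly negative for $y>0$), so $\bar\rho_*^{-1}(1/\kappa)$ is well defined. The existence of $R_\kappa$ guarantees that the value $1/\kappa<1$ is attained. Hence $\lambda R_\kappa=S_\kappa$, that is, $R_\kappa=\kappa^{-(2-\gamma)/2}S_\kappa$.

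For the mass I substitute $s=\lambda y$ in~\eqref{E:MASS}:
\begin{align*}
M_\kappa=4\pi\int_0^{R_\kappa}y^{d-1}\kappa\bar\rho_*(\lambda y)\,\d y=\kappa\lambda^{-d}\,4\pi\int_0^{S_\kappa}s^{d-1}\bar\rho_*(s)\,\d s=\kappa^{1-d(2-\gamma)/2}m_*(S_\kappa).
\end{align*}
The exponent is $1-d+d\gamma/2=\tfrac12(d\gamma-2(d-1))$, as claimed.

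No step is a real obstacle. The only points needing care are the exponent bookkeeping, the separate treatment of the logarithmic case $\gamma=1$, and justifying that $\bar\rho_*^{-1}(1/\kappa)$ is well defined via strict monotonicity and uniqueness of the ODE solution.
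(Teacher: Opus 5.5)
Your proof is correct and follows the paper's argument. You verify the scaling by direct substitution, identify the rescaled profile with $f_{0,\kappa}$ by uniqueness, read off $R_\kappa$ from $\bar\rho_*(\kappa^{(2-\gamma)/2}R_\kappa)=1/\kappa$, and obtain $M_\kappa$ via $s=\kappa^{(2-\gamma)/2}y$. The one cosmetic difference is that you check the second-order form using the degree $-2$ homogeneity of $\lpc$, treating $\gamma=1$ separately. The paper instead checks the integrated hydrostatic form $\partial_y\bar\rho^\gamma=-4\pi\bar\rho\,y^{1-d}\int_0^y s^{d-1}\bar\rho\,\d s$, which covers all $\gamma\in[1,2)$ at once.
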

\begin{proof}
A direct computation shows that $y\mapsto\kappa\,\bar\rho_*(\kappa^{(2-\gamma)/2}y)$ solves~\eqref{E:LEODE}: the pressure term $\partial_y\bar\rho^\gamma$ picks up the factor $\kappa^{\gamma}\kappa^{(2-\gamma)/2}=\kappa^{(2+\gamma)/2}$, while the gravitational term $\bar\rho\,y^{1-d}\int_0^ys^{d-1}\bar\rho\,\d s$ picks up $\kappa^{2}\kappa^{-(2-\gamma)/2}=\kappa^{(2+\gamma)/2}$ as well, so the two scale identically; see also the self-similarity discussion in~\cite{Lam_2024}. Being decreasing with central value $\kappa$, the rescaled profile coincides with the unique profile $f_{0,\kappa}$ of Definition~\ref{D:family}, i.e. $\bar\rho_\kappa(y)=\kappa\,\bar\rho_*(\kappa^{(2-\gamma)/2}y)$. The liquid radius is then determined by $\bar\rho_\kappa(R_\kappa)=1$, i.e. $\bar\rho_*(\kappa^{(2-\gamma)/2}R_\kappa)=1/\kappa$, which is the first identity of~\eqref{E:RMK} with $S_\kappa=\bar\rho_*^{-1}(1/\kappa)$. For the mass, substituting $s=\kappa^{(2-\gamma)/2}y$ in~\eqref{E:MASS},
\begin{align*}
M_\kappa&=4\pi\int_0^{R_\kappa}y^{d-1}\kappa\bar\rho_*(\kappa^{(2-\gamma)/2}y)\d y
=\kappa^{1-\frac d2(2-\gamma)}4\pi\int_0^{S_\kappa}s^{d-1}\bar\rho_*(s)\d s\\
&=\kappa^{\frac12(d\gamma-2(d-1))}m_*(S_\kappa).\qedhere
\end{align*}
\end{proof}

The exponent of $\kappa$ in $M_\kappa$ vanishes precisely at $\gamma=\gamma_*=2(d-1)/d$, which is the origin of the threshold in the stability theory: for $\gamma>\gamma_*$ the prefactor grows in $\kappa$, for $\gamma<\gamma_*$ it decays, and the competition with the tail factor $m_*(S_\kappa)$ is what creates the extrema of $\mathcal C$.

\subsection{The linearised radial operator and the growing-mode count}

We recall the radial linearisation in comoving coordinates from~\cite{Lam_2024}. Writing the spherically symmetric flow map as $\eta(y,t)=y(1+\zeta(y,t))$ on the fixed reference interval $[0,R_\kappa]$, the linearised momentum equation for a normal mode $\zeta(y,t)=e^{\lambda t}\chi(y)$ is the singular Sturm--Liouville eigenvalue problem
\begin{subequations}\label{E:SLP}
\begin{align}
\L_\kappa\chi&=-\lambda^2\chi
\qquad\text{on}\quad[0,R_\kappa]\\
d\chi(R_\kappa)+R_\kappa\,\partial_y\chi(R_\kappa)&=0,\label{E:Robin}
\end{align}
\end{subequations}
where
\begin{align}\label{E:LDEF}
\L_\kappa\chi:=-{\gamma\over y^{d+1}\bar\rho_\kappa}\partial_y\brac{\bar\rho_\kappa^{\gamma}\,y^{d+1}\partial_y\chi}+d(\gamma_*-\gamma){\partial_y\bar\rho_\kappa^{\gamma}\over y\bar\rho_\kappa}\chi.
\end{align}
The Robin condition in~\eqref{E:SLP} encodes the liquid free boundary. We define the following function spaces
\begin{align}
  L_\kappa &:= L^2([0,R_\kappa],y^{d+1}\bar\rho_\kappa)\label{D:base space}\\
  H_\kappa &:= H^1([0,R_\kappa],y^{d+1}\bar\rho_\kappa)\label{D:form domain}
\end{align}
with the following norms
\begin{align*}
  \|\ph\|_{L_\kappa} &:= \|\ph\|_{L^2([0,R_\kappa],y^{d+1}\bar\rho_\kappa)} \sim \|\ph\|_{L^2([0,R_\kappa],y^{d+1})}\\
  \|\ph\|_{H_\kappa} &:= \|\ph\|_{H^1([0,R_\kappa],y^{d+1}\bar\rho_\kappa)} \sim \|\ph\|_{H^1([0,R_\kappa],y^{d+1})}
\end{align*}
where the equivalence in norm is due to Lemma \ref{L:PROFBND}. The operator $\L_\kappa$ is symmetric with respect to the $L_\kappa$ inner product for all sufficiently smooth functions satisfying the Robin boundary condition \eqref{E:Robin}, and its associated energy form is
\begin{align}\label{E:TE}
\<\L_\kappa\chi_1,\chi_2\>_{L_\kappa}
=\int_0^{R_\kappa} \brac{\gamma\bar\rho_\kappa^{\gamma}y^{d+1}(\partial_y\chi_1)(\partial_y\chi_2) +d(\gamma_*-\gamma)y^{d}\chi_1\chi_2\partial_y\bar\rho_\kappa^{\gamma}}\d y+d\gamma R_\kappa^d\chi_1(R_\kappa)\chi_2(R_\kappa).
\end{align}
The last, boundary, term in~\eqref{E:TE} is the distinctive contribution of the liquid surface; it is absent in the gaseous problem and is responsible for the appearance of mass extrema. This energy form together with the profile bound (Lemma \ref{L:PROFBND}) allows us to define $\L_\kappa:H_\kappa\to H_\kappa^*$, where the right-hand side of \eqref{E:TE} defines $\<\L_\kappa\chi_1,\chi_2\>$ for arbitrary $\chi_1,\chi_2\in H_\kappa$, with no boundary condition imposed on the space $H_\kappa$ --- the Robin condition will reappear as the \emph{natural} boundary condition of this form (Remark~\ref{R:weakstrong}).

\begin{definition}[Growing modes and Morse index]\label{D:count}
A \emph{growing mode} of the liquid Lane--Emden star of central density $\kappa$ is a solution $\zeta(y,t)=e^{\lambda t}\chi(y)$ of the linearised system with $\lambda>0$ and $\chi\not\equiv 0$ satisfying~\eqref{E:SLP}; equivalently, a non-trivial eigenfunction of the eigenvalue problem
\begin{align}\label{E:GEVP}
\L_\kappa\chi=\mu\chi,
\qquad\qquad d\chi(R_\kappa)+R_\kappa\partial_y\chi(R_\kappa)=0,
\end{align}
with $\mu=-\lambda^2<0$. We write $\nmodes(\kappa)$ for the number of growing modes, counted with multiplicity, and $\Mor(\L_\kappa)$ for the negative Morse index of the form~\eqref{E:TE}, i.e. the maximal dimension of a subspace of $H_\kappa$ functions on which $\<\L_\kappa\ph,\ph\><0$.
\end{definition}

\begin{remark}[Weak and classical formulations; real growth rates]\label{R:weakstrong}
The spectral machinery in the paper operates on the weak formulation such that in Lemma~\ref{L:discrete} and throughout, the eigenvalue problem~\eqref{E:GEVP} is understood in the weak sense naturally associated with the form~\eqref{E:TE}: $\chi\in H_\kappa$ is an eigenfunction with eigenvalue $\mu$ if $\<\L_\kappa\chi,\varphi\>=\mu\<\chi,\varphi\>_{L_\kappa}$ for every $\varphi\in H_\kappa$, no boundary condition being imposed on $H_\kappa$. The weak and the classical formulations coincide. Indeed, testing against $\varphi\in C_c^\infty((0,R_\kappa))$ shows that a weak eigenfunction satisfies the ODE of~\eqref{E:GEVP} in the sense of distributions on $(0,R_\kappa)$; since the coefficients are smooth on $(0,R_\kappa]$ (the profile being smooth --- see Lemma~\ref{L:joint smoothness}) and the leading coefficient $\gamma\bar\rho_\kappa^\gamma y^{d+1}$ is bounded away from zero on compact subsets of $(0,R_\kappa]$, one-dimensional elliptic regularity and a bootstrap give $\chi\in C^\infty((0,R_\kappa])$. Undoing the integration by parts that produced~\eqref{E:TE}, for arbitrary $\varphi\in H_\kappa$ the interior terms cancel against $\mu\<\chi,\varphi\>_{L_\kappa}$ and only the boundary contribution survives:
\begin{align*}
0=\<\L_\kappa\chi,\varphi\>-\mu\<\chi,\varphi\>_{L_\kappa}=\gamma R_\kappa^d\brac{d\chi(R_\kappa)+R_\kappa\partial_y\chi(R_\kappa)}\varphi(R_\kappa),
\end{align*}
using $\bar\rho_\kappa(R_\kappa)=1$; choosing $\varphi$ with $\varphi(R_\kappa)\neq0$ recovers the Robin condition of~\eqref{E:GEVP}, which is thus the \emph{natural} boundary condition of the form~\eqref{E:TE}. Conversely, every classical solution of~\eqref{E:GEVP} lying in $H_\kappa$ is a weak eigenfunction, by the same integration by parts. Finally, there are in fact no complex growing modes with $\Re\lambda>0$ and $\Im\lambda\not=0$. Indeed, the spectrum is real (Lemma~\ref{L:discrete}), so every solution $\zeta=e^{\lambda t}\chi$ of the linearised equation $\partial_t^2\zeta=-\L_\kappa\zeta$ with $\Re\lambda>0$ has $\lambda^2=-\mu\in\R$, which together with $\Re\lambda>0$ forces $\lambda\in(0,\infty)$. Definition~\ref{D:count} therefore captures \emph{all} exponentially growing normal modes.
\end{remark}

In the comoving formulation the continuity equation reads $fJ=f_0J_0=\bar\rho_\kappa$ (where as in \cite{Lam_2024} we have chosen the labelling gauge freedom $f_0J_0$ to be the reference profile $\bar\rho_\kappa$) on the fixed reference ball, so every admissible perturbation conserves the total mass $M_\kappa$ automatically. This is the structural reason why, in the liquid Newtonian case, the growing-mode count requires neither a mean-zero constraint nor the subtraction of a winding index, in contrast with the relativistic Einstein--Euler system of~\cite{Hadzic_Lin_Rein_2021}.

We single out one distinguished perturbation, the displacement that deforms a steady profile into its neighbour along the family. It will detect the turning points of the mass.

\begin{definition}[Marginal mode]\label{D:marginal}
The \emph{marginal mode} $\nu_\kappa$ of the liquid Lane--Emden star of central density $\kappa$ is the radial displacement amplitude carrying the profile $\bar\rho_\kappa$ into its neighbour $\bar\rho_{\kappa+\d\kappa}$ along the family. It is the solution, regular at the origin, of the linearised continuity relation
\begin{align}\label{E:MARGDEF}
\partial_\kappa\bar\rho_\kappa+\frac{1}{y^{d-1}}\partial_y\brac{y^{d}\bar\rho_\kappa\nu_\kappa}=0
\qquad\text{on}\quad[0,R_\kappa],
\end{align}
obtained by differentiating the mass-preservation constraint $fJ=\bar\rho_\kappa$ along $\kappa$ under the radial deformation $y\mapsto y(1+\epsilon\nu_\kappa(y)+o(\epsilon))$. Equivalently, integrating~\eqref{E:MARGDEF} against $y^{d-1}\,\d y$,
\begin{align}\label{E:MARGINT}
y^{d}\bar\rho_\kappa(y)\nu_\kappa(y)=-\int_0^{y}s^{d-1}\,\partial_\kappa\bar\rho_\kappa(s)\d s.
\end{align}
\end{definition}

To phrase the gain/loss rule we record, exactly as in~\cite{Hadzic_Lin_Rein_2021, Hadzic_Lin_2021}, the bending of the mass--radius curve at a regular point.

\begin{definition}[Turning index]\label{D:index}
At any $\kappa\in(1,\infty)$ at which $\partial_\kappa M_\kappa$ and $\partial_\kappa R_\kappa$ do not both vanish, define
\begin{align}\label{E:INDEX}
i_\kappa:=
\begin{cases}
1, &\quad\text{when}\quad(\partial_\kappa M_\kappa)(\partial_\kappa R_\kappa)>0 \quad\text{or}\quad \partial_\kappa M_\kappa=0,\\[3mm]
0, &\quad\text{when}\quad (\partial_\kappa M_\kappa)(\partial_\kappa R_\kappa)<0 \quad\text{or}\quad \partial_\kappa R_\kappa=0.
\end{cases}
\end{align}
\end{definition}

The excluded case $\partial_\kappa M_\kappa=\partial_\kappa R_\kappa=0$ in fact never occurs along the family (Lemma~\ref{L:jump}), so the turning index is defined at every $\kappa\in(1,\infty)$. It changes value as the mass--radius curve $\mathcal C$ rounds a turning point, and its jumps record the bending orientation there.

\subsection{The turning point principle}

We now state the theorems to be proven.

\begin{theorem}[Morse count, discreteness and the marginal mode]\label{T:COUNT}
Let $d\ge3$ and $\gamma\in[1,2)$, and consider the family of Definition~\ref{D:family}. For every $\kappa\in(1,\infty)$ the following hold.
\begin{enumerate}
\item The eigenvalue problem~\eqref{E:GEVP} has real eigenvalues of finite multiplicity, bounded below and accumulating only at $+\infty$. Consequently the number of growing modes is finite and equals the negative Morse index,
\begin{align}\label{E:COUNT}
\nmodes(\kappa)=\Mor(\L_\kappa)<\infty.
\end{align}
\item (Marginal mode). The marginal mode $\nu_\kappa$ of Definition~\ref{D:marginal} lies in $H_\kappa$ and satisfies the identity
\begin{align}\label{E:MARGINAL}
\<\L_\kappa\nu_\kappa,\chi\>=c_\kappa(\partial_\kappa M_\kappa)\,\chi(R_\kappa)
\quad\text{for every }\chi\in H_\kappa,
\qquad\text{where}\qquad
c_\kappa:=-\frac{M_\kappa}{4\pi R_\kappa^{d-2}}\neq0;
\end{align}
equivalently, as an element of $H_\kappa^*$,
\begin{align}\label{E:MARGDIRAC}
\L_\kappa\nu_\kappa={c_\kappa\partial_\kappa M_\kappa\over R_\kappa^{d+1}}\delta_{R_\kappa},
\end{align}
where $\delta_{R_\kappa}$ denotes the Dirac measure at the liquid surface, paired with $\chi\in H_\kappa$ through the duality $\<\delta_{R_\kappa},\chi\>_{L_\kappa}=R_\kappa^{d+1}\chi(R_\kappa)$. Moreover, we have
\begin{align}\label{E:KERNEL}
\dim\ker\L_\kappa=
\begin{cases}
1, & \quad\text{when}\quad\partial_\kappa M_\kappa=0,\\[2mm]
0, & \quad\text{when}\quad\partial_\kappa M_\kappa\neq0,
\end{cases}
\end{align}
so that $0$ is an eigenvalue of~\eqref{E:GEVP} only at critical points of $\kappa\mapsto M_\kappa$.
\end{enumerate}
\end{theorem}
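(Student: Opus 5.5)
The plan is to treat part (i) by standard form methods and part (ii) by differentiating the family and doing a boundary computation; the kernel statement then follows from one-dimensional ODE uniqueness.

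\emph{Discreteness and the Morse count.} By Lemma~\ref{L:PROFBND}, the coefficient $\gamma\bar\rho_\kappa^\gamma y^{d+1}$ lies between $\gamma y^{d+1}$ and $\gamma\kappa^\gamma y^{d+1}$, and the potential $d(\gamma_*-\gamma)y^d\partial_y\bar\rho_\kappa^\gamma$ is bounded by $C_\kappa y^{d+1}$. Hence the interior part of \eqref{E:TE} is a bounded perturbation of the $H_\kappa$-seminorm. For the boundary term I will prove a weighted trace inequality, $|\chi(R_\kappa)|^2\le\varepsilon\norm{\partial_y\chi}^2_{L_\kappa}+C_\varepsilon\norm{\chi}^2_{L_\kappa}$. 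This follows by integrating $\partial_y(\psi\chi^2)$ with a cutoff $\psi$ supported near $R_\kappa$, where the weight is comparable to $1$. The form is then closed, bounded below and satisfies $\<\L_\kappa\chi,\chi\>+C\norm{\chi}^2_{L_\kappa}\gtrsim\norm{\chi}^2_{H_\kappa}$. The embedding $H_\kappa\hookrightarrow L_\kappa$ is compact: functions in $H^1([0,R],y^{d+1})$ are exactly radial $H^1$ functions on a ball in $\R^{d+2}$, so Rellich applies. The Friedrichs extension therefore has compact resolvent, its spectrum is real and discrete, of finite multiplicity, bounded below and accumulating only at $+\infty$. By Remark~\ref{R:weakstrong} its eigenfunctions are exactly those of \eqref{E:GEVP}. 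Min--max then gives that the number of negative eigenvalues equals $\Mor(\L_\kappa)$, which proves \eqref{E:COUNT}.

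\emph{The marginal mode.} Near $y=0$ we have $\partial_\kappa\bar\rho_\kappa(0)=1$, and $\partial_\kappa\bar\rho_\kappa$ is smooth by Lemma~\ref{L:scaling}, since $\partial_\kappa\bar\rho_\kappa=\bar\rho_*(\cdot)+\tfrac{2-\gamma}{2\kappa}\,y\partial_y\bar\rho_\kappa$. Then \eqref{E:MARGINT} shows that $\nu_\kappa$ is smooth and bounded on $[0,R_\kappa]$ with $\nu_\kappa(0)=-1/(d\kappa)\neq0$, so $\nu_\kappa\in H_\kappa$ and $\nu_\kappa\not\equiv0$. The key step is the interior identity $\L_\kappa\nu_\kappa=0$ on $(0,R_\kappa)$. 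I would obtain it by writing the steady hydrostatic equation in comoving form for the map $y\mapsto y(1+\epsilon\nu_\kappa+o(\epsilon))$. This map pulls $\bar\rho_{\kappa+\epsilon}$ back to a configuration satisfying $fJ=\bar\rho_\kappa$. It solves the nonlinear steady equation on a neighbourhood of every interior point, because $\bar\rho_{\kappa+\epsilon}$ satisfies \eqref{E:HYDRO} and the enclosed mass is preserved by construction. Differentiating in $\epsilon$ then gives the linearised interior equation, which is the operator in \eqref{E:LDEF}. As a fallback, one can verify $\L_\kappa\nu_\kappa=0$ directly from the scaling formula for $\partial_\kappa\bar\rho_\kappa$ together with \eqref{E:HYDRO}. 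This interior verification is where I expect the main work, namely keeping the comoving linearisation consistent with \eqref{E:LDEF}.

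Once the interior equation holds, integrating by parts as in Remark~\ref{R:weakstrong} gives $\<\L_\kappa\nu_\kappa,\chi\>=\gamma R_\kappa^d\brac{d\nu_\kappa+R_\kappa\partial_y\nu_\kappa}(R_\kappa)\chi(R_\kappa)$, and the Robin defect can be computed explicitly. Evaluating \eqref{E:MARGDEF} at $R_\kappa$ with $\bar\rho_\kappa(R_\kappa)=1$ gives
\[
(d\nu_\kappa+R_\kappa\nu_\kappa')(R_\kappa)=-\partial_\kappa\bar\rho_\kappa(R_\kappa)-R_\kappa\bar\rho_\kappa'(R_\kappa)\nu_\kappa(R_\kappa).
\]
Differentiating $\bar\rho_\kappa(R_\kappa)=1$ gives $\partial_\kappa\bar\rho_\kappa(R_\kappa)=-\bar\rho_\kappa'(R_\kappa)\partial_\kappa R_\kappa$. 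Differentiating \eqref{E:MASS} and using \eqref{E:MARGINT} gives $R_\kappa\nu_\kappa(R_\kappa)=\partial_\kappa R_\kappa-\partial_\kappa M_\kappa/(4\pi R_\kappa^{d-1})$. Combining these, the defect equals $\bar\rho_\kappa'(R_\kappa)\,\partial_\kappa M_\kappa/(4\pi R_\kappa^{d-1})$. Then \eqref{E:HYDRO} at $R_\kappa$ gives $\gamma\bar\rho_\kappa'(R_\kappa)=-M_\kappa R_\kappa^{1-d}$, which yields \eqref{E:MARGINAL} with $c_\kappa=-M_\kappa/(4\pi R_\kappa^{d-2})$, and \eqref{E:MARGDIRAC} is just a rewriting.

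\emph{Kernel.} Any $\chi\in\ker\L_\kappa$ is smooth on $(0,R_\kappa]$ and solves the second-order ODE $\L_\kappa\chi=0$. At $y=0$ this ODE has a regular singular point with indicial exponents $0$ and $-d$. The $y^{-d}$ branch is not in $L_\kappa$, since $\int y^{-2d}y^{d+1}\,\d y$ diverges for $d\ge2$. Hence the $H_\kappa$-admissible solutions form a one-dimensional space, spanned by $\nu_\kappa$. The Robin condition holds for $\nu_\kappa$ if and only if the defect computed above vanishes, that is, iff $\partial_\kappa M_\kappa=0$, because $c_\kappa\neq0$. This gives \eqref{E:KERNEL}.
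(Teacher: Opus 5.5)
Your proposal is correct and has the same overall structure as the paper. Part (i) goes through a closed, semibounded form with compact embedding into $L_\kappa$ and min--max; the paper phrases this via Lax--Milgram and the compact self-adjoint resolvent $(\L_\kappa+\alpha)^{-1}$. Part (ii) goes through the interior identity $\L_\kappa\nu_\kappa=0$, an explicit Robin defect, and the indicial exponents $0,-d$ at the centre.

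The real difference is how you obtain the interior identity. The paper proves it by direct algebra: it differentiates $\nu_\kappa=-G/(y^d\bar\rho)$, removes $\partial_y\bar\rho$ with \eqref{E:HYDRO} and $\partial_y\partial_\kappa\bar\rho$ with the $\kappa$-derivative of \eqref{E:HYDRO}, and everything cancels. Your main argument, that $\nu_\kappa$ is the tangent of a family of interior equilibria, also works, but it needs two things made explicit. First, define the deformation exactly, not only to first order, by enclosed-mass matching $m_{\kappa+\epsilon}(\eta_\epsilon(y))=m_\kappa(y)$. This gives $f_\epsilon J_\epsilon=\bar\rho_\kappa$ exactly, and $\partial_\epsilon\eta_\epsilon|_{\epsilon=0}=y\nu_\kappa$ with $\nu_\kappa$ as in \eqref{E:NUFORM}. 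Second, check that \eqref{E:LDEF} is exactly the linearisation of $y^{-1}\bigl((f\partial_y\eta)^{-1}\partial_y f^\gamma+m_\kappa(y)\eta^{1-d}\bigr)$, with $f=\bar\rho_\kappa/J$, at $\eta=y$ in the direction $y\zeta$. This is a short computation; once \eqref{E:HYDRO} is used on the gravity term, it reproduces the potential coefficient $2(d-1)-d\gamma$.

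Your route explains \emph{why} the zero mode exists, and it also interprets the Robin defect. Evaluating \eqref{E:MARGDEF} at $y=R_\kappa$ shows that $d\nu_\kappa+R_\kappa\partial_y\nu_\kappa$ is minus the first variation of the surface density $\bar\rho_{\kappa+\epsilon}(\eta_\epsilon(R_\kappa))$, which is proportional to $\partial_\kappa M_\kappa$. The paper's computation is instead self-contained: it uses only the formula \eqref{E:LDEF} and the hydrostatic identity, not the dynamical derivation of $\L_\kappa$ in \cite{Lam_2024}.

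Two minor points. The $\varepsilon$-trace inequality is stronger than you need, because the boundary term of \eqref{E:TE} is non-negative and only the upper bound requires a trace estimate. Also, smoothness of $\partial_\kappa\bar\rho_\kappa$ down to $y=0$ does not follow from Lemma~\ref{L:scaling} alone. It needs the regularity of $\bar\rho_*$ at the regular singular point, which the paper supplies in Lemma~\ref{L:joint smoothness}.
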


\begin{theorem}[Turning point principle]\label{T:TPP}
Let $d\ge3$ and $\gamma\in[1,2)$, and consider the family $\kappa\mapsto(\bar\rho_\kappa,R_\kappa)$ of Definition~\ref{D:family}.
\begin{enumerate}
\item \emph{(Static criterion).} The number of growing modes $\nmodes(\kappa)$ is constant on every interval of $(1,\infty)$ free of critical points of $\kappa\mapsto M_\kappa$, and can change only at such critical points. At any critical point $\kappa_0$ with $\partial_\kappa^2 M_\kappa|_{\kappa_0}\neq0$, as $\kappa$ increases through $\kappa_0$ the count $\nmodes(\kappa)$ increases by $1$ if the sign of $(\partial_\kappa M_\kappa)(\partial_\kappa R_\kappa)$ changes from $-$ to $+$, and decreases by $1$ if it changes from $+$ to $-$. Equivalently, as $\kappa$ increases through any \emph{mass} critical point, $\nmodes$ jumps by the jump of the turning index $i_\kappa$ of Definition~\ref{D:index}.

\item \emph{(Threshold dichotomy and onset).} 
\begin{enumerate}
\item If $\gamma\ge\gamma_*$, then $\kappa\mapsto M_\kappa$ is strictly monotone, $\mathcal C$ has no turning point, and $\nmodes(\kappa)\equiv0$: the liquid Lane--Emden stars are linearly stable for every central density.
\item If $\gamma<\gamma_*$ and, in addition, $\gamma\ge\gamma_\sharp$ or $\mathcal D(\gamma,d)<0$ (always true for $\gamma\le\gamma_\sharp$ when $d<10$; $\mathcal D$ as in~\eqref{E:DISC}), then $\kappa\mapsto M_\kappa$ attains a strict interior local maximum. More precisely, let $\kappa_1$ be the least mass critical point at which $\partial_\kappa M_\kappa$ changes sign; then $\kappa_1$ is a strict local maximum of the mass, the radius is strictly decreasing there, $\partial_\kappa R_\kappa|_{\kappa_1}<0$, and the first growing mode appears as $\kappa$ increases through $\kappa_1$:
\begin{align}\label{E:ONSET}
\nmodes(\kappa)=0 \ \text{ for }\ 1<\kappa\le\kappa_1,\qquad\qquad \nmodes(\kappa)=1 \ \text{ for }\ \kappa_1<\kappa<\kappa_1',
\end{align}
where $\kappa_1'\in(\kappa_1,\infty]$ denotes the mass critical point following $\kappa_1$; moreover $\nmodes(\kappa)\ge1$ for all sufficiently large $\kappa$.
\end{enumerate}

\item \emph{(Large central density and the mass--radius spiral).} Suppose $\gamma<\gamma_*$.
\begin{enumerate}
\item If $\gamma<\gamma_\sharp$ and $\mathcal D(\gamma,d)<0$ (always true if $d<10$; $\mathcal D$ as in~\eqref{E:DISC}), then the rest point $\mb v^*$ governing the gaseous tail (see Proposition~\ref{P:geometry}) is a stable focus, the mass--radius curve $\mathcal C$ spirals as $\kappa\to\infty$ into the point $(R_\infty,M_\infty)$ of~\eqref{E:LIMITPOINT} representing the singular liquid star, it has infinitely many turning points, and
\begin{align}\label{E:INFTY}
\lim_{\kappa\to\infty}\nmodes(\kappa)=\infty.
\end{align}
\item If $\gamma<\gamma_\sharp$ and $\mathcal D(\gamma,d)\ge0$ (possible only when $d\ge10$), then $\mb v^*$ is a stable node, $\mathcal C$ converges to $(R_\infty,M_\infty)$ with an asymptotic tangent direction, it has finitely many turning points, and $\nmodes(\kappa)$ stabilises to a finite value as $\kappa\to\infty$; see Remark~\ref{R:open} for what remains open in this case.
\item If $\gamma_\sharp\le\gamma<\gamma_*$, then $\mathcal C$ is a single arc leaving the origin as $\kappa\searrow1$ and returning to it as $\kappa\to\infty$, with finitely many turning points, and $\nmodes(\kappa)$ stabilises to a finite \emph{odd} value as $\kappa\to\infty$; in particular the liquid stars are linearly unstable at all sufficiently large central densities. At $\gamma=\gamma_\sharp$ the family is explicit, the mass has exactly one critical point, a strict maximum at $\kappa_1=(2(d-1)/(d-2))^{(d+2)/2}$, and the stabilised value is $1$.
\end{enumerate}
\end{enumerate}
\end{theorem}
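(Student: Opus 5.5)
The plan is to derive everything from Theorem~\ref{T:COUNT}, using a one-dimensional perturbation argument. By~\eqref{E:COUNT} we have $\nmodes(\kappa)=\Mor(\L_\kappa)$, so we only need to track eigenvalues of~\eqref{E:GEVP} crossing $0$.

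\textbf{Part (i).} First pull everything back to a fixed interval via $y=R_\kappa x$. The forms~\eqref{E:TE} then act on one fixed Hilbert space, their coefficients depend continuously (indeed smoothly, by joint smoothness of the profile) on $\kappa$, and they are uniformly coercive modulo $L^2$. Standard min--max continuity for the ordered eigenvalues $\mu_1(\kappa)<\mu_2(\kappa)<\dots$ (simple, as Sturm--Liouville eigenvalues) shows that $\Mor$ can change only where some $\mu_j(\kappa)=0$. By~\eqref{E:KERNEL} this happens only at critical points of $M_\kappa$, which gives local constancy.

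At a nondegenerate critical point $\kappa_0$, the kernel is spanned by $\nu_{\kappa_0}$. Let $\mu(\kappa)$ be the analytic branch through $0$ with normalised eigenfunction $\chi_\kappa\to\nu_{\kappa_0}/\|\nu_{\kappa_0}\|$. Testing~\eqref{E:MARGINAL} against $\chi_\kappa$ and using self-adjointness gives
\[
\mu(\kappa)\<\chi_\kappa,\nu_\kappa\>_{L_\kappa}=c_\kappa\,\partial_\kappa M_\kappa\,\chi_\kappa(R_\kappa).
\]
Hence $\operatorname{sign}\mu(\kappa)=\operatorname{sign}\big(c_\kappa\partial_\kappa M_\kappa\,\nu_{\kappa_0}(R_{\kappa_0})\big)$ near $\kappa_0$. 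The next step is to compute $\nu_\kappa(R_\kappa)$. Differentiating $M_\kappa=4\pi\int_0^{R_\kappa}y^{d-1}\bar\rho_\kappa$ in $\kappa$ and using~\eqref{E:MARGINT} together with $\bar\rho_\kappa(R_\kappa)=1$ gives
\[
4\pi R_\kappa^d\nu_\kappa(R_\kappa)=-\partial_\kappa M_\kappa+4\pi R_\kappa^{d-1}\partial_\kappa R_\kappa .
\]
At $\kappa_0$ this equals $4\pi R^{d-1}\partial_\kappa R$, and Lemma~\ref{L:jump} guarantees $\partial_\kappa R\neq0$ there. Since $c_\kappa<0$, we get $\operatorname{sign}\mu=-\operatorname{sign}(\partial_\kappa M\,\partial_\kappa R)$. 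So $\Mor$ increases by one exactly when $\partial_\kappa M\,\partial_\kappa R$ passes from $-$ to $+$, which is the jump of $i_\kappa$. The subtle point is the first-order sign argument: it needs $\chi_\kappa(R_\kappa)\to\nu_{\kappa_0}(R_{\kappa_0})$, which follows from $H_\kappa$-convergence and the trace.

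\textbf{Part (ii).} For $\gamma\ge\gamma_*$, the potential term in~\eqref{E:TE} is nonnegative, because $\partial_y\bar\rho^\gamma<0$ by Lemma~\ref{L:PROFBND} and $\gamma_*-\gamma\le0$. The boundary term is positive, so $\L_\kappa>0$ and $\nmodes\equiv0$. Then~\eqref{E:KERNEL} forces $\partial_\kappa M\neq0$, so $M$ is strictly monotone.

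For $\gamma<\gamma_*$, the two ends behave as follows:
\begin{itemize}
\item As $\kappa\searrow1$, we have $\nmodes=0$, since the star is nearly homogeneous and the boundary term dominates; this uses the small-central-density stability from~\cite{Lam_2024}.
\item As $\kappa\to\infty$, we have $M_\kappa\to0$ or spirals (part (iii)), while $M_\kappa>0$ near $\kappa=1$ and increasing.
\end{itemize}
Hence a first sign-changing critical point $\kappa_1$ exists and is a maximum. Since $\nmodes=0$ before $\kappa_1$, part (i) forbids a decrease there, so the jump is $+1$ and $\partial_\kappa R<0$ at $\kappa_1$. Non-sign-changing critical points produce no net change, by a local degree version of the same argument. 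That gives~\eqref{E:ONSET}; $\nmodes\ge1$ for large $\kappa$ then follows from (iii).

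\textbf{Part (iii).} Write $\bar\rho_*$ in Emden--Fowler variables. With $t=\ln s$, $u=s^{2/(2-\gamma)}\bar\rho_*$ and a companion variable for $s\,\partial_s$, the ODE~\eqref{E:LEODE} becomes an autonomous planar system. For $\gamma<\gamma_\sharp$ it has a rest point $\mb v^*$ corresponding to the singular solution. Linearising at $\mb v^*$ gives a characteristic polynomial whose discriminant is a positive multiple of $\mathcal D(\gamma,d)$:
\begin{itemize}
\item $\mathcal D<0$ gives a focus;
\item $\mathcal D\ge0$ gives a node.
\end{itemize}
Through~\eqref{E:RMK}, $(R_\kappa,M_\kappa)$ is an explicit smooth function of the orbit point at the time $S_\kappa$ where $\bar\rho_*=1/\kappa$, so the curve converges to the image $(R_\infty,M_\infty)$ of $\mb v^*$. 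In the focus case the orbit winds infinitely often, and the image winds as well: the Jacobian of the map is nondegenerate at $\mb v^*$, which I would verify directly. This gives infinitely many turning points. Along a spiral the turning index jumps alternate with a net gain at each half-turn (the orientation is fixed), so $\nmodes\to\infty$.

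In the node case only finitely many turns occur, so the count stabilises. For $\gamma_\sharp\le\gamma<\gamma_*$ the orbit reaches the zero set in finite time, so $m_*(S_\kappa)$ is bounded while the prefactor decays and the curve returns to the origin. Parity follows because the curve leaves and re-enters the origin with opposite orientation. At $\gamma=\gamma_\sharp$ the explicit profile $\bar\rho_*=(1+cs^2)^{-(d+2)/(4)\cdot\ldots}$ allows a direct computation of $\kappa_1$.

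I expect the main obstacle to be the spiral winding count: translating rotation of the planar orbit into a monotone growth of the accumulated turning-index jumps.
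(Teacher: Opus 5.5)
Parts~(i) and~(ii) are essentially right and match the paper. Your pairing $\mu(\kappa)\langle\chi_\kappa,\nu_\kappa\rangle_{L_\kappa}=c_\kappa M'\chi_\kappa(R_\kappa)$, combined with $\nu_{\kappa_0}(R_{\kappa_0})=R'(\kappa_0)/R_{\kappa_0}$, is the paper's crossing argument. Using an analytic eigenvalue branch instead of the paper's compactness-and-contradiction step is fine, but you should note that mass critical points are isolated because $M$ is real-analytic. Deducing strict monotonicity for $\gamma\ge\gamma_*$ from positive definiteness of the form plus \eqref{E:KERNEL} is a slicker route than the paper's computation from \eqref{E:RMK}.

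\textbf{Gap in (iii)(a): the sense of the spiral.} The gaps are in part~(iii), which you also invoke in (ii)(b) for $\nmodes\ge1$ at large $\kappa$. Knowing that the tangent of $\mathcal C$ winds with a fixed sense does not tell you whether each half-turn is a gain or a loss. ``Net gain at each half-turn'' is exactly the step you flagged and left unproven. The missing idea is to sum the local jumps from~(i) into the global identity $\nmodes(b)-\nmodes(a)=\lfloor\varphi(b)/\pi\rfloor-\lfloor\varphi(a)/\pi\rfloor$, where $\varphi$ is a continuous polar angle of $(R',M')$ (this vector never vanishes). Then use $\nmodes\ge0$: if $\varphi\to-\infty$ the count would become negative, so $\varphi\to+\infty$ and $\nmodes\to\infty$. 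Alternatively, compute the sense directly: $\det D\Phi>0$, and the linearisation at $\mb v^*$ has $(2,1)$-entry $4\pi>0$, so the rotation is counter-clockwise.

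You also need the \emph{tangent} to wind, not just the position. That requires passing to $\frac{\d}{\d\tau}\mb v=A(\mb v-\mb v^*)+O(|\mb v-\mb v^*|^2)$ and then through the invertible $D\Phi(\mb v^*)$. Finally, $\mb v(\tau)\to\mb v^*$ is a global fact, imported from~\cite{Lam_2024} via Poincar\'e--Bendixson and Dulac arguments. It does not follow from local stability of $\mb v^*$, so you should cite it.

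\textbf{Gap in (iii)(b): the node case.} Convergence of the tangent direction does not give finitely many turning points. If the limiting direction were horizontal, $M'$ could change sign infinitely often and $\lfloor\varphi/\pi\rfloor$ could oscillate, so $\nmodes$ need not stabilise. You must check that $\grad F(\mb v^*)$, with $F=v_1^av_2$, is orthogonal to no eigenvector of the linearisation. The paper does this by computing $Q(\mb w)=-4\pi(d-1)(2-\gamma)\neq0$.

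\textbf{Gap in (iii)(c): the arc case.} The limit $(R_\kappa,M_\kappa)\to(0,0)$ does not rule out mass extrema accumulating at $\kappa=\infty$, so neither finiteness nor the parity follows. What you need is $M'<0$ for all large $\kappa$. Set $w=\bar\rho_*^{\gamma-1}$; from $w(S_\kappa)=\kappa^{1-\gamma}$ and $\partial_yw(S_*)<0$ you get $\d S_\kappa/\d\kappa=O(\kappa^{-\gamma})$. Hence $M'=\kappa^{-a-1}\bigl(4\pi S_\kappa^{d-1}\,\d S_\kappa/\d\kappa-a\,m_*(S_\kappa)\bigr)<0$ eventually.

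You also need $M'>0$ on a whole interval $(1,\kappa_s)$, not merely $M>0$ there. This comes from $S_\kappa^2\sim(\gamma d/2\pi)(1-1/\kappa)$, as in Lemma~\ref{L:endpoints}(i). Without it, sign changes of $M'$ could accumulate at $\kappa=1$, and then $\kappa_1$ in (ii)(b) need not exist. With both end signs in hand, $M'$ has an odd number of sign changes, each contributing $\pm1$, which gives the odd limit.

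At $\gamma=\gamma_\sharp$ the profile $\bar\rho_*=(1+\frac{2\pi}{d^2}s^2)^{-(d+2)/2}$ has infinite support. The return to the origin there must come from the explicit formulas, not from reaching the zero set in finite time.
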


\begin{remark}
For part~(i) we stress that $\nmodes(\kappa)$ only jumps with $i_\kappa$ at mass critical points --- the turning index itself also jumps across every radius extremum with $\partial_\kappa M_\kappa\neq0$ (infinitely many of which occur along the spiral of part~(iii)(a)), where $\nmodes$ does not change. The global statement is the winding formula~\eqref{E:WINDING} of Lemma~\ref{L:winding}. Geometrically, a growing mode is gained when $\mathcal C$ bends counter-clockwise at the mass extremum, and lost when it bends clockwise. This is the precise Newtonian, liquid analogue of the turning point principle of Zel'dovich and Wheeler~\cite{Zeldovich_1963, Harrison_Thorne_Wakano_Wheeler_1965}, proven for the Einstein--Euler system in~\cite{Hadzic_Lin_Rein_2021, Hadzic_Lin_2021}. The bending product $(\partial_\kappa M_\kappa)(\partial_\kappa R_\kappa)$ is used here in place of the relativistic pairing $(\partial_\kappa M_\kappa)\partial_\kappa (M_\kappa/R_\kappa)$; the two agree in their bookkeeping of gain and loss at a mass extremum. The orientation at the onset is forced by the principle itself: since the count can never become negative, the radius must be strictly decreasing at the first mass maximum, $\partial_\kappa R_\kappa|_{\kappa_1}<0$, so the curve bends counter-clockwise there and the mass maximum produces a \emph{gained} growing mode, consistent with instability at large central density.
\end{remark}

\begin{remark}
The spiral alternative in part~(iii) is the liquid Newtonian counterpart of the relativistic mass--radius spiral (as studied by Makino, Nilsson, Uggla, Heinzle and R\"ohr~\cite{Makino_2000, Nilsson_Uggla_2000, Heinzle_Rohr_Uggla_2003}), along which the number of growing modes of the Einstein--Euler steady states tends to infinity~\cite{Hadzic_Lin_Rein_2021}. Here the spiralling is dictated by whether the gaseous Lane--Emden tail approaches its self-similar singular profile in an oscillatory (focus) or monotone (node) manner, which is precisely the sign of the discriminant~\eqref{E:DISC}. For $d<10$ the discriminant is negative throughout $\gamma\in[1,\gamma_\sharp]$, so in physical dimensions the spiral is the rule in the infinite-support regime; the node alternative of part~(iii)(b) can occur only in dimensions $d\ge10$, a threshold reminiscent of the classical role of dimension ten in the supercritical theory of the Lane--Emden equation, where the regular radial solution oscillates around the singular one precisely for $d<10$.
\end{remark}

\begin{remark}[The open node case]\label{R:open}
In the node case of part~(iii)(b) the theorem leaves the stabilised value of $\nmodes$ undetermined: for $d\ge10$, $\gamma<\gamma_\sharp$ and $\mathcal D(\gamma,d)\ge0$ we do not determine whether the liquid stars are unstable at large central density, and not even the existence of a mass maximum is guaranteed there. What the machinery does yield is a parity constraint. Since $M_\kappa\to M_\infty>0$ while $M'>0$ near $\kappa=1$ (Lemma~\ref{L:endpoints}(i)) and $M'$ has a constant sign for all large $\kappa$ (proof of part~(iii)(b)), the number of sign-changing mass critical points is odd precisely when that eventual sign is negative, i.e.\ when the mass approaches its limit $M_\infty$ from above; in that case the parity argument of part~(iii)(c) shows that the stabilised value of $\nmodes$ is odd --- in particular $\ge1$, so that the stars are linearly unstable at all sufficiently large central densities --- while in the opposite case the stabilised value is even, and could in principle be $0$. Which of the two occurs is determined by the sign of the derivative of the mass observable along the attracting eigendirection of the node, $\grad F(\mb v^*)\cdot\mb e_0$ in the notation of~\eqref{E:MASSALONG}, which we do not compute here. Parts~(ii)(b) and~(iii)(a), (c) resolve every other regime, so (iii)(b) is the only case in which the large central density behaviour of $\nmodes$ remains open.
\end{remark}

\begin{remark}[General liquid equations of state]\label{R:generalEOS}
We have studied in this paper the liquid version of the classical polytropic equation of state \eqref{E:EOS}. But our method here can be used to prove the general turning point principle for a generalised \emph{liquid} equation of state (in the spirit of what is done for the gaseous case in~\cite{Lin_Zeng_2022}):
\begin{align*}
p=P(\rho),\qquad P(1)=0,\qquad P'>0\ \text{ on }[1,\infty),
\end{align*}
say with $P\in C^2([1,\infty))$, the normalisation $P(1)=0$ fixing the surface density at $1$ as in~\eqref{E:EOS}. Depending on $P$, the shape of the exact mass--radius curve will differ, but in the case of asymptotically polytropic equations of state in the spirit of~\cite{Heinzle_Uggla_2003, Lin_Zeng_2022}: $P(\rho)=c_\infty\rho^{\gamma_\infty}(1+O(\rho^{-\epsilon}))$ as $\rho\to\infty$, with $c_\infty,\epsilon>0$, the tails of the mass--radius curve would match the polytropic case considered here.
\end{remark}

As corollaries we recover and refine the existing (in)stability theory.

\begin{corollary}[Recovery and sharpening of the radial result]\label{C:radial}
Theorem~\ref{T:TPP} implies the radial (in)stability theorem for liquid Lane--Emden stars of~\cite{Lam_2024}: for $\gamma\ge\gamma_*$ the stars are linearly stable for all central densities; for $\gamma<\gamma_*$ with $\gamma\ge\gamma_\sharp$ or $\mathcal D(\gamma,d)<0$ (in particular for all $\gamma<\gamma_*$ whenever $d<10$) they are linearly stable up to the first mass turning point $\kappa_1$ --- in particular at small relative central density --- and linearly unstable immediately beyond it and at all sufficiently large central densities. For $d\ge10$ and $\gamma\in[\gamma_\sharp,\gamma_*)$ the instability at large central density is new, sharpening the large central density instability of~\cite{Lam_2024}, which was proven for $d<10$. Moreover the precise number of growing modes at any non-critical $\kappa$ equals the sum of the jumps of the turning index over the mass turning points traversed up to $\kappa$, that is, the net number of counter-clockwise horizontal crossings of the tangent of the mass--radius curve.
\end{corollary}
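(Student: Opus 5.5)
The corollary is essentially a repackaging of Theorem~\ref{T:TPP} together with Theorem~\ref{T:COUNT}(i), so the plan is to read each clause off the corresponding part of those theorems and then justify the counting statement separately.

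\textbf{Stability for $\gamma\ge\gamma_*$.} This is part (ii)(a) of Theorem~\ref{T:TPP}. The identity $\nmodes(\kappa)\equiv0$, combined with $\nmodes=\Mor(\L_\kappa)$ from~\eqref{E:COUNT} and the reality of the spectrum (Remark~\ref{R:weakstrong}), gives linear stability for all $\kappa$.

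\textbf{The case $\gamma<\gamma_*$.} Under the stated hypotheses, part (ii)(b) supplies the first mass maximum $\kappa_1$, together with~\eqref{E:ONSET}. This gives stability on $(1,\kappa_1]$ and exactly one growing mode on $(\kappa_1,\kappa_1')$, i.e.\ instability immediately beyond $\kappa_1$. The same part also asserts $\nmodes\ge1$ for all large $\kappa$.

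One check is needed for the parenthetical ``in particular for all $\gamma<\gamma_*$ whenever $d<10$''. It suffices that $\mathcal D(\gamma,d)<0$ on $\gamma\in[1,\gamma_\sharp)$ for $d<10$. Since $\mathcal D$ is a convex quadratic in $\gamma$, this reduces to checking its values at the endpoints $\gamma=1$ and $\gamma=\gamma_\sharp$, which is a routine computation.

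\textbf{The new instability for $d\ge10$.} For $d\ge10$ and $\gamma\in[\gamma_\sharp,\gamma_*)$, the new claim follows from part (iii)(c): the stabilised value of $\nmodes$ is odd, hence $\ge1$. Part (ii)(b) independently gives $\nmodes\ge1$ for large $\kappa$. The comparison with~\cite{Lam_2024} is a bibliographic remark, and I will simply cite it.

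\textbf{The counting formula.} Fix a non-critical $\kappa$. The mass critical points in $(1,\kappa)$ are finite in number, since $M_\kappa$ is real-analytic in $\kappa$ (Lemma~\ref{L:joint smoothness}) and not constant. Part (i) says $\nmodes$ is locally constant off these points and jumps by the jump of $i_\kappa$ at each of them. Near $\kappa=1$ we have $\nmodes=0$ by (ii)(a)/(ii)(b), or by Lemma~\ref{L:endpoints}. Summing the jumps then gives the formula; the global version is exactly the winding formula~\eqref{E:WINDING} of Lemma~\ref{L:winding}, and I would invoke that lemma directly.

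The geometric reformulation is a matter of bookkeeping. A jump of $+1$ in $i_\kappa$ at a mass extremum means the tangent to $\mathcal C$ passes through horizontal while the curve turns counter-clockwise, and a jump of $-1$ corresponds to clockwise turning.

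\textbf{Main obstacle.} The only delicate point is degenerate mass critical points, where $\partial_\kappa^2M_\kappa=0$ and part (i) gives the jump only through the sign-change rule. I would handle these through Lemma~\ref{L:winding} rather than pointwise. At a critical point where $\partial_\kappa M_\kappa$ does not change sign, neither $i_\kappa$ nor $\nmodes$ jumps, so such points contribute zero to both sides of the formula.
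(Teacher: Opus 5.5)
Your route is the paper's: each clause is read off Theorem~\ref{T:TPP}(ii)(a), (ii)(b) with \eqref{E:ONSET}, and (iii)(c). The counting law is Lemma~\ref{L:jump} summed over the mass critical points, equivalently the winding formula~\eqref{E:WINDING}. Two justifications in your counting paragraph are wrong as written, though the paper already contains what is needed to fix both.

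First, real-analyticity and non-constancy of $\kappa\mapsto M_\kappa$ only make the mass critical points isolated in the \emph{open} interval $(1,\infty)$. They could a priori accumulate at the endpoint $\kappa=1$, where analyticity gives no control. Finiteness on $(1,\kappa)$ therefore needs $M'>0$ on $(1,\kappa_s)$ from Lemma~\ref{L:endpoints}(i). Alternatively, as the paper does, sum only over $(a_0,\kappa]$ for a fixed non-critical $a_0$ close to $1$; finiteness on the compact interval $[a_0,\kappa]$ is then immediate.

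Second, the base value $\nmodes=0$ near $\kappa=1$ does not come from Lemma~\ref{L:endpoints}. That lemma only gives the signs $R'>0$, $M'>0$ there, so $i_\kappa=1$ while the count is $0$: the geometry alone cannot fix the starting value. Nor does it come from (ii)(a)/(ii)(b), which do not cover the node regime $\gamma<\gamma_\sharp$, $\mathcal D(\gamma,d)\ge0$, where the counting statement is also asserted. The correct input is the spectral Lemma~\ref{L:smallkappa}, valid for all $\gamma\in[1,2)$ and $d\ge3$. With these two substitutions your argument is complete and coincides with the paper's.
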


Combined with the non-radial analysis of~\cite{Lam_nonradial}, carried out in the physical dimension $d=3$, Theorem~\ref{T:TPP} yields a turning point criterion for non-radial linear stability.

\begin{corollary}[Non-radial stability]\label{C:nonradial}
Let $d=3$ and $\gamma\in[1,2)$. Denote by $\mb L_\kappa$ the full three-dimensional linearised Euler--Poisson operator with no symmetry assumptions around the liquid Lane--Emden star of central density $\kappa$, as constructed in~\cite{Lam_nonradial}, acting on the space $\mathbb H_\kappa$ of admissible vector-field perturbations $\bs\theta$ there. And denote by $\nmodes_{\mathrm{full}}(\kappa)$ the number of growing modes, both radial and non-radial, counted with multiplicity, of the liquid Lane--Emden star of central density $\kappa$. All growing modes are in fact radial and we have $\nmodes_{\mathrm{full}}(\kappa)=\Mor(\mb L_\kappa)=\Mor(\L_\kappa)=\nmodes(\kappa)$. In particular, the turning point principle of Theorem \ref{T:TPP} holds for $\nmodes_{\mathrm{full}}(\kappa)$.
\end{corollary}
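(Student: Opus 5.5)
The plan is to reduce the full three-dimensional count to the radial one through the spherical-harmonic decomposition already carried out in~\cite{Lam_nonradial}, and then invoke Theorems~\ref{T:COUNT} and~\ref{T:TPP}.

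The first step is to recall from~\cite{Lam_nonradial} that $\mb L_\kappa$ is self-adjoint on $\mathbb H_\kappa$ and bounded below. Its quadratic form, the second variation of the energy at fixed mass including the surface term, is the three-dimensional analogue of~\eqref{E:TE}. Since the spectrum is real, a growing mode $e^{\lambda t}\bs\theta$ of $\partial_t^2\bs\theta=-\mb L_\kappa\bs\theta$ with $\Re\lambda>0$ forces $\lambda>0$ and $\mb L_\kappa\bs\theta=-\lambda^2\bs\theta$, exactly as in Remark~\ref{R:weakstrong}. Provided the spectrum of $\mb L_\kappa$ below zero is discrete with finite multiplicities, we then get $\nmodes_{\mathrm{full}}(\kappa)=\Mor(\mb L_\kappa)$. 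I expect~\cite{Lam_nonradial} to supply this discreteness. If it does not, I would check it by a Glazman-type argument: show that the form is bounded below and that the negative part is a compact perturbation. The constraint to check there is that the vorticity/divergence-free directions contribute only nonnegative form values.

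The second step is to decompose $\bs\theta$ into vector spherical harmonics. The components are the radial $\ell=0$ part, the toroidal part, and, for each $\ell\ge1$, the spheroidal parts. Because the background is spherically symmetric, $\mb L_\kappa$ commutes with rotations, so the form splits orthogonally over the sectors. Negative directions may therefore be counted sector by sector:
\begin{align*}
\Mor(\mb L_\kappa)=\Mor(\L_\kappa)+\textstyle\sum_{\ell\ge1}(2\ell+1)\Mor(\mb L_{\kappa,\ell})+\Mor(\text{toroidal part}).
\end{align*}
The $\ell=0$ sector is the comoving radial problem, with form~\eqref{E:TE} and operator $\L_\kappa$. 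The toroidal part carries no density or pressure perturbation, so its form is $\ge0$.

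The main step, and the main obstacle, is to show $\Mor(\mb L_{\kappa,\ell})=0$ for every $\ell\ge1$, using the nonradial positivity established in~\cite{Lam_nonradial}. For $\ell=1$ the translations $\bs\theta=\mb e_i$ lie in the kernel, since they move the star rigidly. I would show the $\ell=1$ form is nonnegative by writing it relative to this zero mode, via a Jacobi/Picone factorisation with $\partial_r\bar\rho_\kappa$ as ground state. For $\ell\ge2$ the angular contribution $\ell(\ell+1)$ only increases the form, which gives positivity by monotonicity in $\ell$. The delicate point is the surface term together with the self-gravity nonlocality, which couples the $\ell$-sector potential through the Green's function $r_<^\ell/r_>^{\ell+1}$. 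I would try to control that coupling by the Newtonian estimate that the gravitational energy of an $\ell$-th harmonic density perturbation decreases with $\ell$.

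Granting this, $\Mor(\mb L_\kappa)=\Mor(\L_\kappa)$, which equals $\nmodes(\kappa)$ by~\eqref{E:COUNT}. All growing modes therefore lie in the radial sector, and Theorem~\ref{T:TPP} transfers verbatim to $\nmodes_{\mathrm{full}}$.
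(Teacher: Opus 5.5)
Your architecture matches the paper's. You split the form of $\mb L_\kappa$ into angular blocks as in~\cite{Lam_nonradial}, take non-negativity of every block of degree $\ell\ge1$ (toroidal parts and the translation kernel included) from that paper, and reduce to the $\ell=0$ block. The paper simply imports the $\ell\ge1$ non-negativity, so what you call the main obstacle is not where the work lies. Your own route to it (Picone factorisation, monotonicity in $\ell$) is unnecessary here. The monotonicity is also not automatic in displacement variables, where $\ell(\ell+1)$ enters the compression term through a cross term of indefinite sign.

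The step you pass over is the one that does need an argument. The admissible space $\mathbb H_\kappa$ encodes the liquid surface by requiring $\grad\cdot\bs\theta=0$ on $\partial B_{R_\kappa}$. The radial form domain $H_\kappa$ carries no boundary condition at all. Since $\grad\cdot(r\chi\mb e_r)=d\chi+r\partial_r\chi$, the $\ell=0$ sector of $\mathbb H_\kappa$ is only the Robin-constrained part of $H_\kappa$. Identifying it with ``the radial problem with operator $\L_\kappa$'' therefore gives $\Mor(\mb L_\kappa)\le\Mor(\L_\kappa)$, but not the reverse inequality.

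The paper obtains $\Mor(\mb L_\kappa)\ge\Mor(\L_\kappa)$ by lifting the negative eigenfunctions $\chi_1,\dots,\chi_N$ of~\eqref{E:GEVP} to $(4\pi)^{-1/2}r\chi\mb e_r$. These lifts are admissible precisely because the eigenfunctions satisfy the Robin condition~\eqref{E:Robin} as the natural boundary condition of the form (Remark~\ref{R:weakstrong}).

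Your first step, by contrast, is heavier than needed: no discreteness or Glazman argument for $\mb L_\kappa$ is required. Symmetry alone gives $\nmodes_{\mathrm{full}}\le\Mor(\mb L_\kappa)$, because the span of growing modes is negative definite. The same radial lift gives $\nmodes\le\nmodes_{\mathrm{full}}$. The chain $\nmodes\le\nmodes_{\mathrm{full}}\le\Mor(\mb L_\kappa)=\Mor(\L_\kappa)=\nmodes$ then closes, and by equality of dimensions it also shows every growing mode is radial. In any case, once $\Mor(\mb L_\kappa)<\infty$ is known, the negative spectrum of a self-adjoint, bounded-below operator is automatically finitely many eigenvalues of finite multiplicity.
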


\begin{remark}\label{R:nonradial}
The case with the irrotational restriction in \cite{Lam_nonradial} plays no role in this equivalence; it is needed there only to upgrade the non-negativity of $\mb L_\kappa$ to strict coercivity, which excludes the solutions growing linearly in time generated by the infinite-dimensional rotational kernel $\{\grad\cdot(\bar\rho\bs\theta)=0\}$ of $\mb L_\kappa$ and thereby secures boundedness of the Lagrangian perturbation; spectral stability in the sense of the absence of exponentially growing modes, which is what the corollary addresses, is unaffected by this kernel.
\end{remark}

\section{Proof of results}\label{S:proof}

The proof is organised along the two main theorems. 

Section~\ref{S:T:COUNT} proves Theorem~\ref{T:COUNT} through two main lemmas: a spectral lemma (Lemma~\ref{L:discrete}) endowing $\L_\kappa$ with a compact resolvent and discrete spectrum, so that the growing-mode count equals the negative Morse index, $\nmodes(\kappa)=\Mor(\L_\kappa)$; and a kernel lemma (Lemma~\ref{L:kernel}) exhibiting the marginal mode $\nu_\kappa$ as the unique kernel direction, present exactly at the turning points of the mass.

Section~\ref{S:T:TPP} proves Theorem~\ref{T:TPP}. A jump lemma (Lemma~\ref{L:jump}) shows that a single eigenvalue of~\eqref{E:GEVP} crosses zero as $\kappa$ passes each mass critical point, so that $\Mor(\L_\kappa)$ changes by $\pm1$ according to the bending orientation of $\mathcal C$, and a winding lemma (Lemma~\ref{L:winding}) recasts the count globally as the winding of the tangent of $\mathcal C$. The global geometry of $\mathcal C$ is then read off from the planar dynamical system~\eqref{E:DS} governing the gaseous tail: Lemma~\ref{L:orbit} realises $\mathcal C$ as one of its orbits, Proposition~\ref{P:geometry} extracts the focus/node dichotomy and the mass--radius spiral from the sign of the discriminant~\eqref{E:DISC}, and Lemma~\ref{L:endpoints} records the behaviour at the two ends of the family. We then assemble the proofs of Theorems~\ref{T:COUNT} and~\ref{T:TPP} and deduce Corollaries~\ref{C:radial} and~\ref{C:nonradial}.

\subsection{Proof of Theorem~\ref{T:COUNT}}\label{S:T:COUNT}

The first step is the spectral structure of $\L_\kappa$, giving Theorem~\ref{T:COUNT}(i).

\begin{lemma}\label{L:discrete}
Let $\kappa\in(1,\infty)$, $\gamma\in[1,2)$ and $d\ge 3$. 
\begin{enumerate}[(i)]
  \item There exists $\alpha=\alpha(\kappa,\gamma,d)>0$ such that $(\L_\kappa+\alpha)^{-1}$, viewed as an operator on the Hilbert space $L_\kappa$, is compact, self-adjoint and positive-definite.
  \item The spectrum of $\L_\kappa$ consists of countably many real-valued eigenvalues which we can list as $\{\mu_n\}_{n=1}^\infty$ (counting multiplicity) such that the sequence is non-decreasing and tends to infinity as $n\to\infty$. Moreover, part (i) of Theorem \ref{T:COUNT} holds.
\end{enumerate}
\end{lemma}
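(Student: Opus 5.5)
We will realise $\L_\kappa$ as the self-adjoint operator associated, via the Friedrichs/Kato representation theorem, with the closed semibounded form $Q_\kappa[\chi]:=\<\L_\kappa\chi,\chi\>$ of~\eqref{E:TE} on the form domain $H_\kappa\subset L_\kappa$. The first step is the lower bound. By Lemma~\ref{L:PROFBND}, $\abs{y^d\partial_y\bar\rho_\kappa^\gamma}\le \frac{4\pi}{d}\kappa^2y^{d+1}$, so the potential term is bounded by $C(\kappa,\gamma,d)\norm{\chi}_{L_\kappa}^2$, using $\bar\rho_\kappa\ge1$. The surface term $d\gamma R_\kappa^d\chi(R_\kappa)^2$ is non-negative, and we may simply drop it from below. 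We keep it from above, where it is controlled by a one-dimensional trace estimate on $[R_\kappa/2,R_\kappa]$: $\abs{\chi(R_\kappa)}^2\le \epsilon\norm{\partial_y\chi}^2_{L_\kappa}+C_\epsilon\norm{\chi}^2_{L_\kappa}$, since the weight $y^{d+1}$ is comparable to a constant there. Since $\gamma\bar\rho_\kappa^\gamma y^{d+1}\ge\gamma y^{d+1}$, we obtain
\begin{align*}
c\norm{\chi}_{H_\kappa}^2-\alpha_0\norm{\chi}_{L_\kappa}^2\le Q_\kappa[\chi]\le C\norm{\chi}_{H_\kappa}^2 .
\end{align*}
Thus for $\alpha>\alpha_0$ the form $Q_\kappa+\alpha\norm{\ph}_{L_\kappa}^2$ is coercive and equivalent to the $H_\kappa$ norm, hence closed. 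Lax--Milgram then gives a bounded, self-adjoint, positive-definite inverse $(\L_\kappa+\alpha)^{-1}:L_\kappa\to H_\kappa$. By Remark~\ref{R:weakstrong}, the associated operator is exactly the weak formulation of~\eqref{E:GEVP} with its natural Robin condition.

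The main obstacle is the compactness of the embedding $H_\kappa=H^1([0,R_\kappa],y^{d+1})\hookrightarrow L^2([0,R_\kappa],y^{d+1})$, because of the degenerate weight at $y=0$. Away from the origin, on $[\delta,R_\kappa]$, the weights are bounded above and below, so Rellich applies directly. Near the origin we will show that the mass is uniformly small: for $\chi\in H_\kappa$, a Hardy-type bound in dimension $d+2$ should give $\int_0^\delta y^{d+1}\chi^2\,\d y\le C\delta^2\norm{\chi}_{H_\kappa}^2$. This amounts to viewing $\chi$ as a radial $H^1$ function on a ball in $\R^{d+2}$ and using either the compact embedding there or the estimate $\int_0^\delta y^{d+1}\chi^2\le \delta^2\sup_{\text{ball}}\ldots$. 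The cleanest route is to invoke Rellich--Kondrachov for radial functions in $H^1(B_{R_\kappa}\subset\R^{d+2})$, whose norm is exactly equivalent to $\norm{\ph}_{H_\kappa}$. A diagonal argument combining the two regions gives compactness, and composing it with the bounded map $L_\kappa\to H_\kappa$ makes $(\L_\kappa+\alpha)^{-1}$ compact on $L_\kappa$, proving (i).

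For (ii), the spectral theorem for compact self-adjoint positive operators gives eigenvalues $\tau_n\searrow0$ with finite multiplicity and an orthonormal eigenbasis. Setting $\mu_n=\tau_n^{-1}-\alpha$ yields real eigenvalues that are non-decreasing, tend to $+\infty$, and are bounded below by $-\alpha$. This proves the first sentence of Theorem~\ref{T:COUNT}(i). By Remark~\ref{R:weakstrong}, growing modes are exactly the eigenfunctions with $\mu_n<0$, so $\nmodes(\kappa)=\#\{n:\mu_n<0\}<\infty$.

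It remains to identify this count with $\Mor(\L_\kappa)$, via the min-max characterisation on the form domain. The span of the eigenfunctions with $\mu_n<0$ is a subspace of $H_\kappa$ on which $Q_\kappa<0$, so $\Mor\ge\nmodes$. Conversely, a subspace $V\subset H_\kappa$ with $Q_\kappa<0$ and $\dim V>\nmodes$ would contain a non-zero vector that is $L_\kappa$-orthogonal to those eigenfunctions. Expanding that vector in the eigenbasis, which is legitimate for the form on $H_\kappa$, gives $Q_\kappa\ge0$, a contradiction. This proves~\eqref{E:COUNT}.
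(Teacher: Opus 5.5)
Your proposal is correct and follows essentially the same route as the paper. Both arguments use coercivity of $\<\L_\kappa\ph,\ph\>+\alpha\norm{\ph}_{L_\kappa}^2$ from the profile bound, Lax--Milgram, compactness via Rellich--Kondrachov for radial $H^1$ functions on a ball in $\R^{d+2}$, the spectral theorem for the compact resolvent with $\mu_n=\tau_n^{-1}-\alpha$, and identification of the Morse index by expanding in the eigenbasis. The only difference is one of justification. You cite the Kato representation theorem for the form expansion $\<\L_\kappa\chi,\chi\>=\sum_n\mu_n c_n^2$ on $H_\kappa$ and for the discreteness of the spectrum of the unbounded operator. The paper instead proves both directly, via completeness of the rescaled eigenfunctions as a $B_\alpha$-orthonormal system in $H_\kappa$ and an explicit resolvent series.
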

\begin{proof}
We first prove part (i). Define the bilinear form $B_\alpha:H_\kappa\times H_\kappa\to\R$ by
\[B_\alpha[\chi_1,\chi_2] =\<(\L_\kappa+\alpha)\chi_1,\chi_2\> =\<\L_\kappa\chi_1,\chi_2\> +\alpha\<\chi_1,\chi_2\>_{L_\kappa}\]
where $\<\L_\kappa\chi_1,\chi_2\>$ is interpreted via \eqref{E:TE}.

The middle, potential, term of \eqref{E:TE} is the only one without a sign; estimating it by the profile bound of Lemma \ref{L:PROFBND} together with $\bar\rho_\kappa\ge1$, we get the lower bound
\begin{align*}
\<\L_\kappa\chi,\chi\>\ge\gamma\int_0^{R_\kappa}\bar\rho_\kappa^{\gamma}y^{d+1}(\partial_y\chi)^2\,\d y+d\gamma R_\kappa^d\chi(R_\kappa)^2-4\pi\abs{\gamma_*-\gamma}\kappa^2\norm{\chi}_{L_\kappa}^2.
\end{align*}
So for $\alpha\ge1+4\pi\abs{\gamma_*-\gamma}\kappa^2$, we have
\begin{align}
B_\alpha[\chi,\chi]\gtrsim_{d,\kappa,\gamma}\|\partial_y\chi\|_{L^2([0,R_\kappa],y^{d+1})}^2+\|\chi\|_{L_\kappa}^2\gtrsim_{d,\kappa,\gamma}\|\chi\|_{H_\kappa}^2.\label{E:bilinear form coercivity}
\end{align}
Similar bounding gives
\[|B_\alpha[\chi_1,\chi_2]| \lesssim_{d,\kappa,\gamma} \|\chi_1\|_{H_\kappa} \|\chi_2\|_{H_\kappa}.\]
Thus $B_\alpha$ is bounded and coercive, and so by the Lax--Milgram theorem, for every $g\in H_\kappa^*$ there exists a unique $\chi_g=:(\L_\kappa+\alpha)^{-1}g\in H_\kappa$ such that $B_\alpha[\chi_g,\varphi]=\<g,\varphi\>$ for all $\varphi\in H_\kappa$, and
\begin{align*}
\|g\|_{H_\kappa^*} &=\sup_{\|\chi\|_{H_\kappa}=1}|B_\alpha[\chi_g,\chi]|\lesssim_{d,\kappa,\gamma}\|\chi_g\|_{H_\kappa}\\
\|\chi_g\|_{H_\kappa}^2 &\lesssim_{d,\kappa,\gamma} B_\alpha[\chi_g,\chi_g] \leq \|g\|_{H_\kappa^*}\|\chi_g\|_{H_\kappa}.
\end{align*}
This shows that
\[\norm{(\L_\kappa+\alpha)^{-1}}_{H_\kappa^*\to H_\kappa}\lesssim_{d,\kappa,\gamma}1.\]
Every $g\in L_\kappa$ defines an element of $H_\kappa^*$ through $\varphi\mapsto\<g,\varphi\>_{L_\kappa}$, and we henceforth regard $(\L_\kappa+\alpha)^{-1}$ as an operator on the Hilbert space $L_\kappa$ by restriction; note that $H_\kappa^*$ itself carries no canonical inner product, which is why we work on $L_\kappa$. By the Rellich--Kondrachov theorem, $H^1(B_R(\R^{d+2}))$ embeds compactly into $L^2(B_R(\R^{d+2}))$. Since $H_\kappa$ and $H^1([0,R_\kappa],y^{d+1})$ have equivalent norms and the latter corresponds to the space of spherically symmetric functions in $H^1(B_{R_\kappa}(\R^{d+2}))$, we have that $H_\kappa$ embeds compactly into $L_\kappa$. Therefore $(\L_\kappa+\alpha)^{-1}:L_\kappa\to L_\kappa$, being the composition of the bounded map $L_\kappa\to H_\kappa$ with this compact embedding, is compact.

For $g_1,g_2\in L_\kappa$ we have, by the symmetry of $B_\alpha$,
\begin{align*}
\<g_1,(\L_\kappa+\alpha)^{-1}g_2\>_{L_\kappa}
&=\<g_1,(\L_\kappa+\alpha)^{-1}g_2\>
=B_\alpha[(\L_\kappa+\alpha)^{-1}g_1,(\L_\kappa+\alpha)^{-1}g_2]\\
&=B_\alpha[(\L_\kappa+\alpha)^{-1}g_2,(\L_\kappa+\alpha)^{-1}g_1]
=\<g_2,(\L_\kappa+\alpha)^{-1}g_1\>\\
&=\<g_2,(\L_\kappa+\alpha)^{-1}g_1\>_{L_\kappa};
\end{align*}
being bounded and symmetric on the Hilbert space $L_\kappa$, the operator $(\L_\kappa+\alpha)^{-1}:L_\kappa\to L_\kappa$ is self-adjoint.

For $g\in L_\kappa$ we have from \eqref{E:bilinear form coercivity} that
\begin{align*}
\<g,(\L_\kappa+\alpha)^{-1}g\>_{L_\kappa}
&=B_\alpha[(\L_\kappa+\alpha)^{-1}g,(\L_\kappa+\alpha)^{-1}g]\\
&\gtrsim_{d,\kappa,\gamma}\|(\L_\kappa+\alpha)^{-1}g\|_{H_\kappa}^2\ge0.
\end{align*}
And if $(\L_\kappa+\alpha)^{-1}g=0$, then $\<g,\varphi\>_{L_\kappa}=B_\alpha[0,\varphi]=0$ for all $\varphi\in H_\kappa$, whence $g=0$ by the density of $H_\kappa$ in $L_\kappa$. Therefore $(\L_\kappa+\alpha)^{-1}:L_\kappa\to L_\kappa$ is positive-definite.

Now we prove part (ii). We first record the correspondence between the spectra. If $\chi\in L_\kappa\setminus\set0$ satisfies $(\L_\kappa+\alpha)^{-1}\chi=\beta\chi$, then $\beta>0$ by positive-definiteness, $\chi=\beta^{-1}(\L_\kappa+\alpha)^{-1}\chi\in H_\kappa$, and unwinding the definition of $(\L_\kappa+\alpha)^{-1}$ gives $\<\L_\kappa\chi,\varphi\>=(\beta^{-1}-\alpha)\<\chi,\varphi\>_{L_\kappa}$ for all $\varphi\in H_\kappa$: $\chi$ is a (weak) eigenfunction of~\eqref{E:GEVP} with eigenvalue $\mu=\beta^{-1}-\alpha$. Conversely, a weak eigenfunction $\chi$ of~\eqref{E:GEVP} with eigenvalue $\mu$ satisfies $B_\alpha[\chi,\varphi]=(\mu+\alpha)\<\chi,\varphi\>_{L_\kappa}$ for all $\varphi\in H_\kappa$; taking $\varphi=\chi$ and using~\eqref{E:bilinear form coercivity} shows $\mu+\alpha>0$, and then $(\L_\kappa+\alpha)^{-1}\chi=(\mu+\alpha)^{-1}\chi$. The eigenvalues of~\eqref{E:GEVP} and those of $(\L_\kappa+\alpha)^{-1}$ are thus in the bijection $\mu=\beta^{-1}-\alpha$, with equal multiplicities.

Write $T:=(\L_\kappa+\alpha)^{-1}:L_\kappa\to L_\kappa$. Since $T$ is compact and self-adjoint, by the spectral theorem for compact self-adjoint operators its non-zero spectrum consists of countably many real-valued eigenvalues (each with finite multiplicity) that converge to zero, and the corresponding eigenvectors can be made into an orthonormal basis $(e_n)_{n=1}^\infty$ of $L_\kappa$; no eigenvector belongs to the spectral point $0$, as $T$ is injective by positive-definiteness, so the basis is exhausted by the non-zero eigenvalues. These are all positive, again by positive-definiteness, and we list them (with multiplicity) as a non-increasing sequence $(\beta_n)_{n=1}^\infty$ tending to $0$, with $Te_n=\beta_ne_n$.

We now identify the spectrum of $\L_\kappa$, by which we mean the spectrum of its realisation as an unbounded operator on $L_\kappa$,
\begin{align*}
A:=T^{-1}-\alpha,\qquad D(A):=TL_\kappa,
\end{align*}
the Friedrichs extension associated with the form~\eqref{E:TE}. The domain is dense, since $(TL_\kappa)^\perp=\ker T=\set0$ by the self-adjointness and injectivity of $T$; and $T^{-1}$, being the inverse of an injective self-adjoint operator with dense range, is itself self-adjoint, whence so is $A$. From $Te_n=\beta_ne_n$ we get $e_n=T(\beta_n^{-1}e_n)\in D(A)$ and $Ae_n=\mu_ne_n$, where $\mu_n:=\beta_n^{-1}-\alpha$ is a non-decreasing sequence of eigenvalues of finite multiplicity, bounded below by $-\alpha$ and tending to $+\infty$. These exhaust the spectrum of $A$. Indeed, let $\lambda\notin\set{\mu_n}_{n=1}^\infty$; then $\delta:=\inf_n\abs{\mu_n-\lambda}>0$ because $\mu_n\to\infty$, and by the completeness of the eigenbasis $(e_n)$ the operator
\begin{align*}
R_\lambda:=\sum_{n=1}^\infty(\mu_n-\lambda)^{-1}\<\,\cdot\,,e_n\>_{L_\kappa}e_n
\end{align*}
is bounded on $L_\kappa$ with norm at most $\delta^{-1}$, maps into $D(A)$, and inverts $A-\lambda$ (a routine check using the basis expansion), so $\lambda$ lies in the resolvent set of $A$. Thus $A$ has no continuous spectrum ($\mu_n$ accumulates only at $+\infty$) and, being self-adjoint, no residual spectrum. In terms of the compact resolvent, $\sigma(T)=\set{\beta_n}_{n=1}^\infty\cup\set0$, and the spectral point $0$ of $T$ --- present exactly because $A$ is unbounded --- is the image of no finite spectral point of $A$. Therefore the spectrum of $\L_\kappa$ consists precisely of the discrete eigenvalues $\mu_n=\beta_n^{-1}-\alpha$, non-decreasing, bounded below by $-\alpha$, and converging to infinity as $n\to\infty$.

By Definition~\ref{D:count} and the equivalence of the weak and classical formulations recorded in Remark~\ref{R:weakstrong}, the growing modes correspond exactly to the eigenvalues $\mu=-\lambda^2<0$, so the number of growing modes counted with multiplicity $\nmodes(\kappa)$ is $N:=|\{j:\mu_j<0\}|$, which is finite since $\mu_n\to\infty$; the $\mu_n$ being non-decreasing, these are $\mu_1,\dots,\mu_N$.

It remains to identify $N$ with the Morse index $\Mor(\L_\kappa)$. This is given by the Courant--Fischer min--max principle for the self-adjoint, bounded-below $A$ with compact resolvent:
\begin{align}
N&=\max\set{\dim W:\ W\subset H_\kappa,\ \<\L_\kappa\chi,\chi\><0\text{ for all }\chi\in W\setminus\set0}
=:\Mor(\L_\kappa)\label{E:Courant--Fischer 1}\\
\mu_n&=\inf_{\substack{W\subset H_\kappa\\ \dim W=n}}\ \sup_{\chi\in W\setminus\set0}\frac{\<\L_\kappa\chi,\chi\>}{\norm{\chi}_{L_\kappa}^2}.\label{E:Courant--Fischer 2}
\end{align}
To prove \eqref{E:Courant--Fischer 1} and \eqref{E:Courant--Fischer 2}, note firstly that the quadratic form of $A$ is the energy form: for $u=Tg\in D(A)$ and $v\in H_\kappa$,
\begin{align*}
\<(A+\alpha)u,v\>_{L_\kappa}=\<g,v\>_{L_\kappa}=B_\alpha[Tg,v]=B_\alpha[u,v],
\end{align*}
so in particular $\<Au,u\>_{L_\kappa}=\<\L_\kappa u,u\>$ on $D(A)$. Second, $D(A)=TL_\kappa$ is dense in $H_\kappa$ for the $B_\alpha$-norm, equivalent to the $H_\kappa$-norm by~\eqref{E:bilinear form coercivity} and the boundedness of $B_\alpha$, since $B_\alpha[Tg,v]=\<g,v\>_{L_\kappa}=0$ for every $g\in L_\kappa$ forces $v=0$: thus $A$ is the self-adjoint operator associated with the energy form on the form domain $H_\kappa$, its Friedrichs extension. Next, the eigenfunction identity $B_\alpha[e_n,\varphi]=(\mu_n+\alpha)\<e_n,\varphi\>_{L_\kappa}$, $\varphi\in H_\kappa$ (from the eigenvalue correspondence above), gives $B_\alpha[e_n,e_m]=(\mu_n+\alpha)\delta_{nm}$, so $\brac{(\mu_n+\alpha)^{-1/2}e_n}_{n=1}^\infty$ is a $B_\alpha$-orthonormal system in $H_\kappa$; it is complete there, since $B_\alpha[\chi,e_n]=0$ for all $n$ forces $\<\chi,e_n\>_{L_\kappa}=0$ for all $n$ and hence $\chi=0$. The two Parseval identities --- in $L_\kappa$ with coefficients $c_n:=\<\chi,e_n\>_{L_\kappa}$, and in $\brac{H_\kappa,B_\alpha}$ with coefficients $B_\alpha[\chi,(\mu_n+\alpha)^{-1/2}e_n]=(\mu_n+\alpha)^{1/2}c_n$ --- then give, for every $\chi\in H_\kappa$,
\begin{align*}
\norm{\chi}_{L_\kappa}^2=\sum_{n=1}^\infty c_n^2,\qquad B_\alpha[\chi,\chi]=\sum_{n=1}^\infty(\mu_n+\alpha)c_n^2,\qquad\text{hence}\qquad\<\L_\kappa\chi,\chi\>=\sum_{n=1}^\infty\mu_nc_n^2.
\end{align*}
Now we see that the form is negative definite on the $N$-dimensional $\Span\set{e_1,\dots,e_N}\subset D(A)\subset H_\kappa$, while every subspace $W\subset H_\kappa$ with $\dim W=N+1$ contains a non-zero $\chi$ that is $L_\kappa$-orthogonal to $e_1,\dots,e_N$, for which $\<\L_\kappa\chi,\chi\>=\sum_{n>N}\mu_nc_n^2\ge0$. Consequently, we get \eqref{E:Courant--Fischer 1} and, by the classical Rayleigh--Ritz argument, the Courant--Fischer characterisation \eqref{E:Courant--Fischer 2}. Therefore $\nmodes(\kappa)=\Mor(\L_\kappa)<\infty$, which is~\eqref{E:COUNT}.
\end{proof}

The second step identifies the kernel with the turning points of the mass, giving Theorem~\ref{T:COUNT}(ii). It is the Newtonian, mass-preserving form of the static criterion: a zero mode is an infinitesimal, mass-conserving deformation to a neighbouring equilibrium, and neighbouring equilibria are reached at fixed mass only where $\partial_\kappa M_\kappa=0$. But first, we need a lemma establishing the smoothness of the profile in $\kappa$ and $y$.

\begin{lemma}[Joint smoothness of the profile in $\kappa$ and $y$]\label{L:joint smoothness}
$(\kappa,y)\mapsto\bar\rho_\kappa(y)$ is jointly smooth (real-analytic) on $\{(\kappa,y):\kappa\in(1,\infty),\ 0\le y\le R_\kappa\}$.
\end{lemma}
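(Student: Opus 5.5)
The plan is to reduce everything to the unit-central-density profile $\bar\rho_*$ via the scaling of Lemma~\ref{L:scaling}, $\bar\rho_\kappa(y)=\kappa\,\bar\rho_*(\kappa^{(2-\gamma)/2}y)$. Since $(\kappa,y)\mapsto\kappa^{(2-\gamma)/2}y$ and multiplication by $\kappa$ are real-analytic on $\kappa>0$, joint analyticity of $\bar\rho_\kappa(y)$ follows once $\bar\rho_*$ is shown to be real-analytic on the relevant closed interval $[0,S_\kappa]$. Here $S_\kappa=\bar\rho_*^{-1}(1/\kappa)$ lies strictly inside the maximal existence interval of $\bar\rho_*$, where $\bar\rho_*>0$. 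The domain $\{0\le y\le R_\kappa\}$ is mapped to $\{0\le s\le S_\kappa\}$, so no boundary issue beyond $s=0$ arises. Away from the origin, also note $R_\kappa$ is itself analytic in $\kappa$ by the implicit function theorem applied to $\bar\rho_*(S)=1/\kappa$, using $\bar\rho_*'(S)<0$ for $S>0$ from \eqref{E:PROFBND}. This is not needed for the statement but will be used later.

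For analyticity of $\bar\rho_*$ away from $s=0$, I would rewrite \eqref{E:LEODE} as a first-order system in $(u,m)$, where $u=\frac{\gamma}{\gamma-1}\bar\rho^{\gamma-1}$ (respectively $u=\ln\bar\rho$ when $\gamma=1$) and $m$ is the mass function of \eqref{E:HYDRO}. This gives $u'=-m\,s^{1-d}$ and $m'=4\pi s^{d-1}\bar\rho(u)$. The map $u\mapsto\bar\rho(u)$ is analytic wherever $\bar\rho>0$, and $s^{1-d}$ is analytic for $s>0$. Cauchy--Kovalevskaya (the analytic ODE theorem) therefore gives analyticity on $(0,S_\kappa]$, and indeed slightly beyond $S_\kappa$.

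The main obstacle is the regular singular point $s=0$. I would handle it by the standard substitution $t=s^2$. Set $u(s)=U(s^2)$ and write $m(s)=s^d\,\tilde m(s^2)$, which is natural since $m\sim\frac{4\pi}{d}s^d$. The system then becomes
\[
2t\,U'(t)=-t\,\tilde m,\qquad 2t\,\tilde m'(t)+d\,\tilde m=4\pi\,\bar\rho(U),
\]
that is, $U'=-\tilde m/2$ together with a Briot--Bouquet type equation for $\tilde m$. Its linear part $d\,\tilde m$ has positive coefficient, so no resonance occurs. By the Briot--Bouquet theorem, or directly by a majorant-series argument, there is a unique solution analytic near $t=0$ with $U(0)=u(1)$ and $\tilde m(0)=4\pi/d$. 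By uniqueness of the regular solution (from \cite{Lam_2024}), it coincides with $\bar\rho_*$. Hence $\bar\rho_*(s)$ is an analytic function of $s^2$, and in particular analytic in $s$, on a neighbourhood of $0$. Combining this with the previous paragraph gives analyticity on $[0,S]$ for every $S$ in the existence interval. Composing with the analytic scaling map yields the joint analyticity, and hence smoothness, of $(\kappa,y)\mapsto\bar\rho_\kappa(y)$ on the stated closed domain.
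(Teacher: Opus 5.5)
Your proposal is correct and follows essentially the paper's route: you reduce via the scaling $\bar\rho_\kappa(y)=\kappa\bar\rho_*(\kappa^{(2-\gamma)/2}y)$ to the unit profile, use analytic ODE theory where $s>0$ and $\bar\rho_*>0$, and handle the regular singular centre by showing $\bar\rho_*$ is analytic in $s^2$. Your $t=s^2$ Briot--Bouquet system for $(U,\tilde m)$, whose recursion has no resonance since $2k+d>0$, is simply an explicit version of the majorant argument on the integral equation that the paper invokes.
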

\begin{proof}
By Lemma~\ref{L:scaling} we have
\[\bar\rho_\kappa(y)=\kappa\bar\rho_*(\kappa^{\frac12(2-\gamma)}y),\qquad\qquad R_\kappa=\kappa^{-\frac12(2-\gamma)}S_\kappa, \qquad\qquad M_\kappa=\kappa^{\frac12(d\gamma-2(d-1))}m_*(S_\kappa),\]
where $S_\kappa=\bar\rho_*^{-1}(1/\kappa)$. The gaseous profile $\bar\rho_*$ is real-analytic wherever it is positive: away from the origin, the enthalpy $w=\bar\rho_*^{\gamma-1}$ (respectively $h=\ln\bar\rho_*$ when $\gamma=1$) solves a second order ODE with real-analytic coefficients whose nonlinearity $w\mapsto w^{1/(\gamma-1)}$ (respectively $h\mapsto e^h$) is real-analytic on $\set{w>0}$, and solutions of real-analytic ODEs are real-analytic. At the centre, a regular singular point of the radial equation, the profile is smooth as well and indeed real-analytic in $y^2$: the nonlinearity is real-analytic near the central value $w(0)>0$, and a classical majorant argument applied to the integral form of the equation produces a convergent even power series near $y=0$ (the case $\gamma=1$ being identical with $e^h$ in place of $w^{1/(\gamma-1)}$); see e.g.~\cite{Hunter_2001} for the explicit centre series expansions of the polytropic and isothermal profiles. Moreover $\bar\rho_*$ is, by the gaseous analogue of~\eqref{E:PROFBND}, strictly decreasing wherever it is positive, so the analytic implicit function theorem renders $\kappa\mapsto S_\kappa$ real-analytic, and $m_*$ is real-analytic as the integral of a real-analytic integrand. Consequently $\kappa\mapsto R_\kappa$ and $\kappa\mapsto M_\kappa$ are real-analytic on $(1,\infty)$ --- in particular twice differentiable, and non-constant, since $M_\kappa>0$ while $M_\kappa=\kappa^{(d\gamma-2(d-1))/2}m_*(S_\kappa)\to0$ as $\kappa\searrow1$ because $S_\kappa\to0$ --- and, by the scaling relation displayed above, $(\kappa,y)\mapsto\bar\rho_\kappa(y)$ and $(\kappa,y)\mapsto\partial_\kappa\bar\rho_\kappa(y)$ are jointly smooth on $\set{(\kappa,y):\kappa\in(1,\infty),\ 0\le y\le R_\kappa}$, down to the centre $y=0$ and up to the surface $y=R_\kappa$.
\end{proof}

\begin{lemma}[Kernel/marginal mode lemma]\label{L:kernel}
The marginal mode $\nu_\kappa$ of Definition~\ref{D:marginal} obeys~\eqref{E:MARGINAL}. Moreover, $\ker \L_\kappa$ is non-trivial if and only if $\partial_\kappa M_\kappa=0$, and at a mass critical point the kernel is one-dimensional and spanned by $\nu_\kappa$.
\end{lemma}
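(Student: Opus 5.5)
The plan is to differentiate the steady equation along the family, rewrite the result in the comoving variable $\nu_\kappa$, and identify it with $\L_\kappa\nu_\kappa$ up to a boundary term; the kernel statement then follows from an ODE uniqueness argument.

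\textbf{Regularity of $\nu_\kappa$.} By Lemma~\ref{L:joint smoothness}, $\partial_\kappa\bar\rho_\kappa$ is smooth on $[0,R_\kappa]$. Then \eqref{E:MARGINT} gives
\[\nu_\kappa(y)=-y^{-d}\bar\rho_\kappa(y)^{-1}\int_0^y s^{d-1}\partial_\kappa\bar\rho_\kappa\,\d s,\]
which is smooth and bounded near $0$ (it tends to $-\partial_\kappa\bar\rho_\kappa(0)/(d\kappa)=-1/(d\kappa)$). Hence $\nu_\kappa\in C^\infty([0,R_\kappa])\subset H_\kappa$.

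\textbf{Identity \eqref{E:MARGINAL}.} Write the steady equation in the integrated form \eqref{E:HYDRO}, $y^{d-1}\partial_y\bar\rho_\kappa^\gamma=-\bar\rho_\kappa m_\kappa$, and differentiate in $\kappa$. By \eqref{E:MARGINT},
\[\partial_\kappa m_\kappa(y)=-4\pi y^d\bar\rho_\kappa\nu_\kappa,\qquad \partial_\kappa\bar\rho_\kappa=-y^{1-d}\partial_y(y^d\bar\rho_\kappa\nu_\kappa).\]
Substituting these turns $\partial_\kappa$ of the hydrostatic identity into a second-order ODE for $\nu_\kappa$. I expect it to be exactly $\L_\kappa\nu_\kappa=0$ on $(0,R_\kappa)$: this is the linearisation of the steady Lagrangian equation along a mass-preserving deformation to a neighbouring equilibrium. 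This interior computation is routine but must be checked carefully, tracking the $\gamma$ factors and the combination $d(\gamma_*-\gamma)$; for $\gamma=1$ one uses $\ln\bar\rho$ instead.

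Given $\L_\kappa\nu_\kappa=0$ in the interior, integrate by parts in \eqref{E:TE}. Since $\bar\rho_\kappa(R_\kappa)=1$, for every $\chi\in H_\kappa$,
\[\<\L_\kappa\nu_\kappa,\chi\>=\gamma R_\kappa^d\bigl(d\nu_\kappa+R_\kappa\partial_y\nu_\kappa\bigr)(R_\kappa)\,\chi(R_\kappa).\]
It remains to evaluate the Robin defect $\gamma R_\kappa^d(d\nu_\kappa+R_\kappa\nu_\kappa')(R_\kappa)$; this is the main obstacle. I would use two facts:
\begin{enumerate}
\item Differentiating $\bar\rho_\kappa(R_\kappa)=1$ gives $\partial_\kappa\bar\rho_\kappa(R_\kappa)=-\bar\rho_\kappa'(R_\kappa)\,\partial_\kappa R_\kappa$.
\item Differentiating \eqref{E:MASS} and using $\bar\rho_\kappa(R_\kappa)=1$ gives $\partial_\kappa M_\kappa=4\pi R_\kappa^{d-1}\partial_\kappa R_\kappa+4\pi\int_0^{R_\kappa}s^{d-1}\partial_\kappa\bar\rho_\kappa\,\d s$. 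By \eqref{E:MARGINT}, $\partial_\kappa M_\kappa=4\pi R_\kappa^{d-1}\bigl(\partial_\kappa R_\kappa-R_\kappa\nu_\kappa(R_\kappa)\bigr)$.
\end{enumerate}
For $\nu_\kappa'(R_\kappa)$, expand \eqref{E:MARGDEF} at $y=R_\kappa$ and use \eqref{E:HYDRO} at the surface, $\gamma\bar\rho_\kappa'(R_\kappa)=-R_\kappa^{1-d}M_\kappa/(4\pi)$. The boundary combination should then collapse to $-\frac{M_\kappa}{4\pi R_\kappa^{d-2}}\partial_\kappa M_\kappa$, with all $\partial_\kappa R_\kappa$ terms cancelling. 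This yields \eqref{E:MARGINAL} with $c_\kappa\neq0$ because $M_\kappa>0$; \eqref{E:MARGDIRAC} is just a rewriting.

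\textbf{Kernel.} If $\partial_\kappa M_\kappa=0$, then \eqref{E:MARGINAL} gives $\nu_\kappa\in\ker\L_\kappa$, and $\nu_\kappa\neq0$ since $\nu_\kappa(0)=-1/(d\kappa)$.

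Conversely, let $\chi\in\ker\L_\kappa$ be a weak eigenfunction with $\mu=0$. By Remark~\ref{R:weakstrong}, $\chi$ is smooth on $(0,R_\kappa]$ and solves the ODE together with the Robin condition. Near $y=0$ the operator has indicial roots $0$ and $-d$; the $y^{-d}$ branch is not in $H^1(y^{d+1}\d y)$. So the $H_\kappa$ solutions of $\L_\kappa\chi=0$ form a one-dimensional space, spanned by the regular solution $\nu_\kappa$, and $\chi=c\,\nu_\kappa$.

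If $c\neq0$, the Robin condition for $\chi$ combined with \eqref{E:MARGINAL}, tested against any $\chi'$ with $\chi'(R_\kappa)\ne0$, forces $c_\kappa\partial_\kappa M_\kappa=0$, i.e.\ $\partial_\kappa M_\kappa=0$. Hence the kernel is trivial when $\partial_\kappa M_\kappa\neq0$, and equal to $\Span\{\nu_\kappa\}$ otherwise.
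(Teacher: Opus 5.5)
Your proposal is correct and follows the paper's proof step for step (interior identity from the $\kappa$-derivative of \eqref{E:HYDRO}, Robin defect from differentiating $\bar\rho_\kappa(R_\kappa)=1$ and $M_\kappa=m_\kappa(R_\kappa)$, and the indicial exponents $0,-d$ at the origin for the kernel). The interior computation you defer does close: after substituting your expressions for $\partial_\kappa m_\kappa$ and $\partial_\kappa\bar\rho_\kappa$, use \eqref{E:HYDRO} and its $y$-derivative to eliminate $m_\kappa$ and the term $4\pi y\bar\rho_\kappa^2\nu_\kappa$, and everything cancels except $y\bar\rho_\kappa\L_\kappa\nu_\kappa=0$; no separate treatment of $\gamma=1$ is needed, since \eqref{E:HYDRO} holds for all $\gamma\in[1,2)$.

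There is one slip. Because $m_\kappa$ already carries the factor $4\pi$, \eqref{E:HYDRO} at the surface gives $\gamma\partial_y\bar\rho_\kappa(R_\kappa)=-R_\kappa^{1-d}M_\kappa$, without the factor $1/(4\pi)$. Only this value makes the Robin defect produce the stated constant $c_\kappa=-M_\kappa/(4\pi R_\kappa^{d-2})$.
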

\begin{proof}
For ease of reading we will drop the subscript $\kappa$, writing $\bar\rho=\bar\rho_\kappa$, $R=R_\kappa$, $M=M_\kappa$ and $m(y)=4\pi\int_0^y s^{d-1}\bar\rho\,\d s$ for the enclosed mass, so that $m(R)=M$ and $\bar\rho(R)=1$. 

By~\eqref{E:MARGINT} the marginal mode may be written as
\begin{align}\label{E:NUFORM}
\nu_\kappa=-\frac{G}{y^d\bar\rho}\qquad\text{where}\qquad G(y):=\int_0^y s^{d-1}\partial_\kappa\bar\rho\,\d s={\partial_\kappa m(y)\over 4\pi}.
\end{align}
Since $\partial_\kappa\bar\rho(0)=\partial_\kappa\kappa=1$ and the family $(\kappa,y)\mapsto\bar\rho_\kappa(y)$ is smooth down to the centre (see Lemma~\ref{L:joint smoothness}), we have $G(y)=\tfrac1d y^d+o(y^d)$, whence $\nu_\kappa(0)=-1/(d\kappa)$; as $\bar\rho\ge1$ is smooth on the closed interval $[0,R]$, $\nu_\kappa$ is smooth and bounded on $[0,R]$, so $\nu_\kappa\in H_\kappa$.

\emph{Step 1: $\L_\kappa\nu_\kappa=0$ in the interior.}
Differentiating~\eqref{E:NUFORM} and reducing every occurrence of $\partial_y\bar\rho$ by~\eqref{E:HYDRO} (via $\bar\rho^{\gamma-2}\partial_y\bar\rho=-\gamma^{-1}y^{1-d}m$) gives, with $G'=y^{d-1}\partial_\kappa\bar\rho$,
\begin{align*}
\bar\rho^\gamma y^{d+1}\partial_y\nu_\kappa
&=-y\bar\rho^{\gamma-1}G'+d\bar\rho^{\gamma-1}G-\frac1\gamma y^{2-d}mG,\\
\partial_y\brac{\bar\rho^\gamma y^{d+1}\partial_y\nu_\kappa}
&=-y\bar\rho^{\gamma-1}G''+\brac{(d-1)\bar\rho^{\gamma-1}-y(\gamma-1)\bar\rho^{\gamma-2}\partial_y\bar\rho-\frac1\gamma y^{2-d}m}G'\\
&\quad+\brac{d(\gamma-1)\bar\rho^{\gamma-2}\partial_y\bar\rho-\frac1\gamma\brac{y^{2-d}m'+(2-d)y^{1-d}m}}G\\
&=-y^d\bar\rho^{\gamma-1}\partial_y\partial_\kappa\bar\rho+\frac{\gamma-2}{\gamma}ym\partial_\kappa\bar\rho
+\frac{d(\gamma_*-\gamma)}{\gamma}y^{1-d}mG-\frac{4\pi}{\gamma}y\bar\rho G.
\end{align*}
Substituting the second line into~\eqref{E:LDEF} and simplifying the potential term of $\L_\kappa$ by~\eqref{E:HYDRO}, the two contributions proportional to $mGy^{-2d}\bar\rho^{-1}$ cancel, leaving
\begin{align}\label{E:LNU1}
\L_\kappa\nu_\kappa
=\frac{\gamma\bar\rho^{\gamma-2}\partial_y\partial_\kappa\bar\rho}{y}
-\frac{(\gamma-2)m\partial_\kappa\bar\rho}{y^d\bar\rho}
+\frac{4\pi G}{y^d}.
\end{align}
On the other hand, differentiating~\eqref{E:HYDRO} in $\kappa$ and using $\partial_\kappa m=4\pi G$ yields
\begin{align}\label{E:HYDROK}
\gamma\bar\rho^{\gamma-1}\partial_y\partial_\kappa\bar\rho
=-\gamma(\gamma-1)\bar\rho^{\gamma-2}\partial_y\bar\rho\,\partial_\kappa\bar\rho
-y^{1-d}m\partial_\kappa\bar\rho-4\pi y^{1-d}\bar\rho G.
\end{align}
Inserting~\eqref{E:HYDROK} into~\eqref{E:LNU1} to eliminate $\partial_y\partial_\kappa\bar\rho$, the terms in $G$ cancel and, collecting the two $m\partial_\kappa\bar\rho$ contributions,
\begin{align*}
\L_\kappa\nu_\kappa
=-(\gamma-1)\,\partial_\kappa\bar\rho\brac{\frac{m}{y^d\bar\rho}+\frac{\gamma\bar\rho^{\gamma-3}\partial_y\bar\rho}{y}}=0\qquad\text{on }(0,R),
\end{align*}
the big bracket vanishing identically since $\gamma\bar\rho^{\gamma-3}\partial_y\bar\rho=-y^{1-d}m/\bar\rho$ by~\eqref{E:HYDRO}.

\emph{Step 2: the boundary defect.}
Since $\bar\rho(R)=1$ for every $\kappa$, differentiating $f_{0,\kappa}(R_\kappa)=1$ (Definition~\ref{D:family}) gives $\partial_\kappa\bar\rho(R)=-\partial_y\bar\rho(R)\partial_\kappa R$, while differentiating $M=m(R)$ gives
\begin{align}\label{E:MPRIME}
\partial_\kappa M =4\pi G(R)+4\pi R^{d-1}\partial_\kappa R.
\end{align}
From~\eqref{E:NUFORM} and $\bar\rho(R)=1$ one finds
\begin{align*}
\nu_\kappa(R)=-\frac{G(R)}{R^d},\qquad
\partial_y\nu_\kappa(R)=-\frac{G'(R)}{R^d}+\frac{dG(R)}{R^{d+1}}+\frac{G(R)\partial_y\bar\rho(R)}{R^d},
\end{align*}
so the Robin combination of~\eqref{E:Robin} collapses to
\begin{align*}
d\nu_\kappa(R)+R\partial_y\nu_\kappa(R)=R^{1-d}\brac{-G'(R)+G(R)\partial_y\bar\rho(R)}.
\end{align*}
Using $G'(R)=R^{d-1}\partial_\kappa\bar\rho(R)=-R^{d-1}\partial_y\bar\rho(R)\partial_\kappa R$ and, from~\eqref{E:MPRIME}, $G(R)=(4\pi)^{-1}\partial_\kappa M-R^{d-1}\partial_\kappa R$, the terms carrying $\partial_\kappa R$ cancel and
\begin{align}\label{E:ROBINDEF}
d\nu_\kappa(R)+R\partial_y\nu_\kappa(R)=\frac{\partial_y\bar\rho(R)}{4\pi R^{d-1}}\partial_\kappa M.
\end{align}
Finally, evaluating~\eqref{E:HYDRO} at $y=R$ with $\bar\rho(R)=1$, $m(R)=M$ gives $\gamma\partial_y\bar\rho(R)=\partial_y\bar\rho^\gamma(R)=-R^{1-d}M$, i.e.
\begin{align}\label{E:RHOPRIME}
\partial_y\bar\rho(R)=-\frac{M}{\gamma R^{d-1}}<0 ,
\end{align}
which shows the defect~\eqref{E:ROBINDEF} is a \emph{nonzero} multiple of $\partial_\kappa M$.

\emph{Step 3: identity in $H_\kappa^*$ and the kernel.}
Let $\chi\in H_\kappa$. Integrating the first term of the energy form~\eqref{E:TE} by parts and discarding the interior integral by Step~1, only the surface terms survive, and~\eqref{E:ROBINDEF}--\eqref{E:RHOPRIME} give
\begin{align}\label{E:MARGWEAK}
\<\L_\kappa\nu_\kappa,\chi\>
=\gamma R^d\brac{d\nu_\kappa(R)+R\partial_y\nu_\kappa(R)}\chi(R)
=c_\kappa(\partial_\kappa M)\chi(R),\qquad c_\kappa:=-\frac{M_\kappa}{4\pi R_\kappa^{d-2}}\neq0.
\end{align}
This is precisely~\eqref{E:MARGINAL}: as an element of $H_\kappa^*$, $\L_\kappa\nu_\kappa$ is the boundary functional $c_\kappa\partial_\kappa M$ concentrated on the liquid surface $\{y=R\}$.

If $\partial_\kappa M_\kappa=0$, then~\eqref{E:MARGWEAK} gives $\<\L_\kappa\nu_\kappa,\chi\>=0$ for all $\chi\in H_\kappa$, so $\nu_\kappa\in\ker\L_\kappa$; as $\nu_\kappa\not\equiv0$, the kernel is non-trivial.

Conversely, let $\chi\in\ker\L_\kappa$. Testing against functions supported in $(0,R)$ shows $\L_\kappa\chi=0$ pointwise there, and $\chi\in H_\kappa$ is regular at the origin. Near $y=0$ one has $\bar\rho\to\kappa$ and, by~\eqref{E:PROFBND}, $\partial_y\bar\rho^\gamma=O(y)$, so the interior equation $\partial_y(\bar\rho^\gamma y^{d+1}\partial_y\chi)=d(\gamma_*-\gamma)\gamma^{-1}y^d\partial_y\bar\rho^\gamma\chi$ has $y=0$ as a regular singular point with indicial exponents $0$ and $-d$; any solution independent of the regular one behaves like $y^{-d}$ to leading order --- the exponents differing by the integer $d$, the Frobenius series may in addition carry a logarithmic term, which does not affect this leading order --- and so fails to lie in $L_\kappa=L^2([0,R],y^{d+1}\bar\rho)$, since $\int_0 y^{-2d}y^{d+1}\d y=\int_0 y^{1-d}\d y=\infty$ for $d\ge3$. Hence the solutions regular at the origin form a one-dimensional space; as $\nu_\kappa$ is one such solution by Step~1, we have $\chi=t\nu_\kappa$ for some $t\in\R$. Then~\eqref{E:MARGWEAK} gives $0=\<\L_\kappa\chi,\psi\>=tc_\kappa\partial_\kappa M_\kappa\psi(R)$ for all $\psi\in H_\kappa$; choosing $\psi(R)\neq0$ forces $t\partial_\kappa M_\kappa=0$.

Therefore $\ker\L_\kappa\neq\set0$ if and only if $\partial_\kappa M_\kappa=0$, and whenever $\partial_\kappa M_\kappa=0$ the above gives $\ker\L_\kappa=\Span\set{\nu_\kappa}$, of dimension one. This proves~\eqref{E:KERNEL}.
\end{proof}

The first main theorem is now proven.

\begin{proof}[Proof of Theorem~\ref{T:COUNT}]
Part~(i) is Lemma~\ref{L:discrete}(ii). For part~(ii), the identity~\eqref{E:MARGINAL}, with its explicit constant $c_\kappa=-M_\kappa/(4\pi R_\kappa^{d-2})\neq0$, is precisely the weak identity~\eqref{E:MARGWEAK} established in the proof of Lemma~\ref{L:kernel} (which also shows $\nu_\kappa\in H_\kappa$). The kernel dichotomy~\eqref{E:KERNEL} is the second assertion of Lemma~\ref{L:kernel}.
\end{proof}

\subsection{Proof of Theorem~\ref{T:TPP}}\label{S:T:TPP}

The third step controls the way the Morse index changes, giving Theorem~\ref{T:TPP}(i). Away from mass critical points no eigenvalue of~\eqref{E:GEVP} can cross zero, by Lemma~\ref{L:kernel}; across a mass critical point --- these are isolated, by real-analyticity of the family --- exactly one eigenvalue can cross, and whether it does, and in which direction, is read off from the bending of $\mathcal C$. Throughout we abbreviate $M':=\partial_\kappa M_\kappa$ and $R':=\partial_\kappa R_\kappa$.

\begin{lemma}[Jump lemma]\label{L:jump}
The maps $\kappa\mapsto R_\kappa$ and $\kappa\mapsto M_\kappa$ are real-analytic on $(1,\infty)$, the critical points of $\kappa\mapsto M_\kappa$ are isolated, and $R'\neq0$ at every mass critical point. The map $\kappa\mapsto\Mor(\L_\kappa)$ is constant on intervals free of mass critical points and, for every mass critical point $\kappa_0$, there is $\delta_0>0$ such that
\begin{align}\label{E:MORLOC}
\Mor(\L_\kappa)=
\begin{cases}
\Mor(\L_{\kappa_0})+1, &\quad\text{when}\quad M'(\kappa)R'(\kappa)>0,\\
\Mor(\L_{\kappa_0}), &\quad\text{when}\quad M'(\kappa)R'(\kappa)<0,
\end{cases}
\qquad 0<\abs{\kappa-\kappa_0}\le\delta_0.
\end{align}
Consequently, as $\kappa$ increases through any mass critical point $\kappa_0$ the index $\Mor(\L_\kappa)$ jumps by exactly the jump of the turning index $i_\kappa$ of Definition~\ref{D:index}: it increases by $1$ when the bending product $M'R'$ passes from $-$ to $+$, decreases by $1$ when it passes from $+$ to $-$, and is unchanged when the product does not change sign. In particular, at a non-degenerate mass critical point, one with $\partial_\kappa^2 M_\kappa|_{\kappa_0}\neq0$, the bending product changes sign and the jump is $\pm1$.
\end{lemma}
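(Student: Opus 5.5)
The plan is to work throughout with the explicit objects of Lemmas~\ref{L:joint smoothness} and~\ref{L:kernel}, and to read the sign of the single eigenvalue crossing zero from the value of the energy form on the marginal mode.

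\emph{Regularity and isolated critical points.} Real-analyticity of $\kappa\mapsto R_\kappa,M_\kappa$ was already obtained in the proof of Lemma~\ref{L:joint smoothness}. Since $M_\kappa$ is non-constant there ($M_\kappa\to0$ as $\kappa\searrow1$), $M'$ is a non-trivial real-analytic function, so its zeros are isolated.

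\emph{The case $M'=R'=0$ cannot occur.} Suppose $M'(\kappa_0)=R'(\kappa_0)=0$. Then~\eqref{E:MPRIME} gives $G(R)=0$, hence $\nu_{\kappa_0}(R)=-G(R)/R^d=0$. Moreover $\nu_{\kappa_0}$ satisfies the Robin condition by~\eqref{E:ROBINDEF}, so $\partial_y\nu_{\kappa_0}(R)=0$ as well. But $\nu_{\kappa_0}$ solves the second-order ODE $\L_{\kappa_0}\nu=0$ on $(0,R]$ (Step~1 of Lemma~\ref{L:kernel}), and $y=R$ is a regular point of that ODE. Uniqueness for the Cauchy problem at $y=R$ then forces $\nu_{\kappa_0}\equiv0$, contradicting $\nu_{\kappa_0}(0)=-1/(d\kappa_0)\neq0$.

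\emph{Constancy away from critical points.} I rescale $y=R_\kappa x$ to the fixed interval $[0,1]$. The forms~\eqref{E:TE} then become a real-analytic family of closed, uniformly semibounded forms on one fixed weighted $H^1$ space with compact resolvent. By Kato's perturbation theory (holomorphic family of type (B)), the eigenvalues $\mu_n(\kappa)$ can be chosen to depend continuously on $\kappa$, locally uniformly in $n$. By Lemma~\ref{L:kernel}, $0$ is not an eigenvalue when $M'\neq0$. So no eigenvalue crosses $0$ on an interval free of mass critical points, and $\Mor(\L_\kappa)=\#\{n:\mu_n<0\}$ is constant there.

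\emph{The local jump at a mass critical point.} Fix a mass critical point $\kappa_0$ and let $N:=\Mor(\L_{\kappa_0})$. By~\eqref{E:KERNEL} the eigenvalue $0$ is simple, so $\mu_{N+1}(\kappa_0)=0$ while $\abs{\mu_n(\kappa_0)}\ge2\delta$ for all $n\neq N+1$. By continuity, for $0<\abs{\kappa-\kappa_0}\le\delta_0$ exactly one eigenvalue $\mu(\kappa):=\mu_{N+1}(\kappa)$ lies in $(-\delta,\delta)$, and it is nonzero. Its normalised eigenfunction $e_\kappa$ is close to $\nu_{\kappa_0}/\norm{\nu_{\kappa_0}}$, so $e_\kappa(R_\kappa)$ stays bounded away from zero, because $\nu_{\kappa_0}(R)=-G(R)/R^d=R^{-1}R'(\kappa_0)\neq0$. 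Hence $\Mor(\L_\kappa)=N+\mathbf 1_{\mu(\kappa)<0}$, and it remains to find the sign of $\mu(\kappa)$.

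For this I use~\eqref{E:MARGINAL}. The functional $f=c_\kappa M'\,\delta$ at the surface gives $\L_\kappa\nu_\kappa=f$ with $0\notin\sigma(\L_\kappa)$. Expanding in the $B_\alpha$-complete eigenbasis from Lemma~\ref{L:discrete}, as in the Parseval identities there, gives
\[\<\L_\kappa\nu_\kappa,\nu_\kappa\>=\sum_n\frac{(c_\kappa M')^2 e_n(R_\kappa)^2}{\mu_n}.\]
The term with $n=N+1$ is $(c_\kappa M')^2e_\kappa(R_\kappa)^2/\mu(\kappa)$. The remaining terms are $O((M')^2)$: only finitely many terms are negative, and the tail is controlled by the trace inequality for $H_\kappa$. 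Independently, directly from~\eqref{E:MARGINAL},
\[\<\L_\kappa\nu_\kappa,\nu_\kappa\>=c_\kappa M'\nu_\kappa(R)=-c_\kappa M'G(R)/R^d .\]
Since $G(R)=M'/4\pi-R^{d-1}R'$ by~\eqref{E:MPRIME}, this equals
\[-\frac{M_\kappa}{4\pi R_\kappa^{d-1}}M'R'+O((M')^2).\]
This quantity is of exact order $\abs{M'}$ and nonzero near $\kappa_0$. Comparing the two expressions, the dominant term $(c_\kappa M')^2e_\kappa(R)^2/\mu$ must carry the sign of $-M'R'$; this forces $\abs{\mu}\sim\abs{M'}$ and
\[\operatorname{sign}\mu(\kappa)=-\operatorname{sign}(M'R').\]
This yields~\eqref{E:MORLOC}. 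The jump statements follow by comparing the two one-sided values of~\eqref{E:MORLOC}. When $M''(\kappa_0)\neq0$, $M'$ changes sign while $R'(\kappa_0)\neq0$ does not, so the product $M'R'$ changes sign and the jump is $\pm1$.

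The main obstacle is the last step: justifying the eigen-expansion of $\<\L_\kappa\nu_\kappa,\nu_\kappa\>$ against a surface functional that does not lie in $L_\kappa$, and the uniform control of the remaining terms. I would try to handle both via the form-domain expansion $B_\alpha[\chi,\chi]=\sum(\mu_n+\alpha)c_n^2$, together with the uniform trace bound $\abs{\chi(R)}\lesssim\norm{\chi}_{H_\kappa}$ for $\kappa$ near $\kappa_0$.
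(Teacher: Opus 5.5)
Your proposal is correct. For analyticity, isolatedness of the mass critical points and the exclusion of $M'=R'=0$ via the Cauchy problem at $y=R$, it coincides with the paper. The crossing direction is obtained by a genuinely different route.

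\emph{Constancy.} The paper does not invoke Kato's type (B) theory. It proves continuity of the ordered eigenvalues by an elementary two-sided comparison of the rescaled forms inside the min--max formula. This also absorbs the fact that after rescaling the inner product $\ell_\kappa$ depends on $\kappa$. To use type (B) literally you would first have to conjugate by multiplication with $\sigma_\kappa^{1/2}$ to get a fixed Hilbert space.

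\emph{Direction of the crossing.} The paper pairs the near-zero eigenfunction with the marginal mode, $\mu\<\chi_\kappa,\nu_\kappa\>_{L_\kappa}=c_\kappa M'\chi_\kappa(R_\kappa)$. It then uses a compactness/contradiction argument to identify the limit $\tilde\chi_\kappa\to t\tilde\nu_{\kappa_0}$, which yields $\mu/M'\to c_{\kappa_0}R'(\kappa_0)/(R_{\kappa_0}^{d+3}\ell_{\kappa_0}[\tilde\nu_{\kappa_0},\tilde\nu_{\kappa_0}])$. You instead compute the energy of the marginal mode in two ways:
\begin{itemize}
\item spectrally, as $(c_\kappa M')^2\sum_ne_n(R_\kappa)^2/\mu_n$;
\item directly, as $-\frac{M_\kappa}{4\pi R_\kappa^{d-1}}M'R'+O((M')^2)$.
\end{itemize}
Both computations are right. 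This is the classical mechanism in which the boundary value of the Green's function, $\sum_ne_n(R_\kappa)^2/\mu_n$, blows up with the sign of the small eigenvalue.

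\emph{The step you flag as the main obstacle is routine.} Since $\nu_\kappa\in H_\kappa$, the Parseval identity $\<\L_\kappa\chi,\chi\>=\sum_n\mu_nc_n^2$ from the proof of Lemma~\ref{L:discrete} applies. Symmetry together with~\eqref{E:MARGINAL} gives $\mu_nc_n=\<\L_\kappa\nu_\kappa,e_n\>=c_\kappa M'e_n(R_\kappa)$. The uniform bound on the terms $n\neq N+1$ follows from two facts:
\begin{itemize}
\item the uniform gap $\abs{\mu_n}\ge\delta$ for $n\neq N+1$;
\item the identity $\sum_ne_n(R_\kappa)^2/(\mu_n+\alpha)=\sup_{\chi\neq0}\chi(R_\kappa)^2/B_\alpha[\chi,\chi]$, the squared norm of the trace functional on $(H_\kappa,B_\alpha)$, which is uniformly bounded near $\kappa_0$ after rescaling.
\end{itemize}

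\emph{A simplification.} You do not need $e_\kappa$ to be close to $\nu_{\kappa_0}/\norm{\nu_{\kappa_0}}$ to get the sign. Because the right-hand side is non-zero of exact order $\abs{M'}$, the identity $\mu c_{N+1}^2=-\frac{M_\kappa}{4\pi R_\kappa^{d-1}}M'R'+O((M')^2)$ by itself forces $c_{N+1}\neq0$ and $\operatorname{sign}\mu=-\operatorname{sign}(M'R')$. Eigenfunction convergence is needed only for your side remark $\abs{\mu}\sim\abs{M'}$, whose constant then matches the paper's limit.

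\emph{What each route buys.} Yours is direct and contradiction-free, and it exhibits the sign as an explicit energy of the marginal mode, closer in spirit to the index formulas of Lin--Zeng. The paper's route avoids the resolvent expansion and the uniform tail estimate. Its cost is a subsequence argument and the identification of the weak limit through the simplicity of the kernel.
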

\begin{proof}
We fix a compact interval $K\subset(1,\infty)$ of central densities, and we denote by $\set{\mu_n(\kappa)}_{n=1}^\infty$ the non-decreasing sequence of eigenvalues of~\eqref{E:GEVP} provided by Lemma~\ref{L:discrete}, so that $\Mor(\L_\kappa)=\abs{\set{n:\mu_n(\kappa)<0}}$.

\emph{Step 1: reduction to a fixed interval and continuity of the eigenvalues.}
For $\kappa\in K$ we substitute $z=y/R_\kappa$, write $\tilde\chi(z):=\chi(R_\kappa z)$ for $\chi\in H_\kappa$, and introduce the rescaled profile
\begin{align*}
\sigma_\kappa(z):=\bar\rho_\kappa(R_\kappa z)=\kappa\,\bar\rho_*(S_\kappa z),\qquad z\in[0,1],
\end{align*}
which is jointly smooth on $K\times[0,1]$ by Lemma~\ref{L:joint smoothness} and satisfies $1\le\sigma_\kappa\le\kappa$ by~\eqref{E:RHOBOUNDS}. The substitution is an isomorphism of $H_\kappa$ onto the fixed space $X:=H^1([0,1],z^{d+1})$, with norm equivalence constants uniform over $K$, and it carries the energy form~\eqref{E:TE} and the $L_\kappa$ inner product into
\begin{align}\label{E:PULLBACK}
\<\L_\kappa\chi_1,\chi_2\>=R_\kappa^d\,Q_\kappa[\tilde\chi_1,\tilde\chi_2],
\qquad\qquad
\<\chi_1,\chi_2\>_{L_\kappa}=R_\kappa^{d+2}\,\ell_\kappa[\tilde\chi_1,\tilde\chi_2],
\end{align}
where
\begin{align}\label{E:QLDEF}
Q_\kappa[\tilde\chi_1,\tilde\chi_2]&:=\int_0^1\brac{\gamma\sigma_\kappa^\gamma z^{d+1}\partial_z\tilde\chi_1\partial_z\tilde\chi_2+W_\kappa z^{d+1}\tilde\chi_1\tilde\chi_2}\d z+d\gamma\tilde\chi_1(1)\tilde\chi_2(1),
\\
\ell_\kappa[\tilde\chi_1,\tilde\chi_2]&:=\int_0^1z^{d+1}\sigma_\kappa\tilde\chi_1\tilde\chi_2\,\d z,
\end{align}
and where, by the hydrostatic identity~\eqref{E:HYDRO} and the substitution $s=R_\kappa zu$ in the enclosed-mass integral,
\begin{align}\label{E:WDEF}
W_\kappa(z):=d(\gamma_*-\gamma)\frac{\partial_z\sigma_\kappa^\gamma(z)}{z}
=-4\pi d(\gamma_*-\gamma)R_\kappa^2\sigma_\kappa(z)\int_0^1u^{d-1}\sigma_\kappa(zu)\d u
\end{align}
is jointly smooth on $K\times[0,1]$. By the spectral resolution of Lemma~\ref{L:discrete} and the Courant--Fischer principle, applied as in the proof of that lemma, \eqref{E:PULLBACK} gives
\begin{align}\label{E:MINMAX}
\mu_n(\kappa)=R_\kappa^{-2}\lambda_n(\kappa),
\qquad
\lambda_n(\kappa):=\inf_{\substack{W\subset X\\ \dim W=n}}\ \sup_{\tilde\chi\in W\setminus\set0}\frac{Q_\kappa[\tilde\chi,\tilde\chi]}{\ell_\kappa[\tilde\chi,\tilde\chi]},
\end{align}
and the eigenfunctions of~\eqref{E:GEVP} correspond under the substitution to the eigenfunctions of the pencil $(Q_\kappa,\ell_\kappa)$ on $X$, i.e. to solutions of $Q_\kappa[\tilde\chi,\tilde\varphi]=\lambda\ell_\kappa[\tilde\chi,\tilde\varphi]$ for all $\tilde\varphi\in X$.

We claim each $\lambda_n$, and hence each $\mu_n$, is continuous on $K$. Set $\Lambda:=1+\sup_{K\times[0,1]}|W_\kappa|$ and $P_\kappa:=Q_\kappa+\Lambda\ell_\kappa$. Since $\sigma_\kappa\ge1$ and the boundary term of~\eqref{E:QLDEF} is non-negative,
\begin{align}\label{E:GARDING}
P_\kappa[\tilde\chi,\tilde\chi]\ge\gamma\int_0^1z^{d+1}(\partial_z\tilde\chi)^2\,\d z+\int_0^1z^{d+1}\tilde\chi^2\,\d z\ge\min(\gamma,1)\norm{\tilde\chi}_X^2\ge0.
\end{align}
The coefficients $\gamma\sigma_\kappa^\gamma$, $W_\kappa$ and $\sigma_\kappa$ are jointly continuous on the compact set $K\times[0,1]$, so for $\kappa,\kappa'\in K$ we have, with a modulus of continuity $\omega=\omega(\abs{\kappa'-\kappa})\to0$ as $\kappa'\to\kappa$,
\begin{align*}
\abs{Q_{\kappa'}[\tilde\chi,\tilde\chi]-Q_{\kappa}[\tilde\chi,\tilde\chi]}\le\omega\norm{\tilde\chi}_X^2,
\qquad
\abs{\ell_{\kappa'}[\tilde\chi,\tilde\chi]-\ell_{\kappa}[\tilde\chi,\tilde\chi]}\le\omega\,\ell_\kappa[\tilde\chi,\tilde\chi],
\end{align*}
the boundary term of $Q_\kappa$ being independent of $\kappa$ and $\sigma_\kappa\ge1$. By~\eqref{E:GARDING} these combine into the two-sided bounds
\begin{align*}
(1-\delta)P_\kappa\le P_{\kappa'}\le(1+\delta)P_\kappa,
\qquad
(1-\delta)\ell_\kappa\le\ell_{\kappa'}\le(1+\delta)\ell_\kappa,
\qquad
\delta:=\frac{(1+\Lambda)\,\omega}{\min(\gamma,1)}.
\end{align*}
Since $\lambda_n(\kappa)+\Lambda\ge0$ is the min--max value~\eqref{E:MINMAX} of the ratio $P_\kappa/\ell_\kappa$, and two-sided bounds between non-negative forms are preserved under taking $\sup$ and $\inf$,
\begin{align*}
\frac{1-\delta}{1+\delta}\brac{\lambda_n(\kappa)+\Lambda}\le\lambda_n(\kappa')+\Lambda\le\frac{1+\delta}{1-\delta}\brac{\lambda_n(\kappa)+\Lambda},
\end{align*}
and letting $\kappa'\to\kappa$ proves the claim.

\emph{Step 2: constancy away from mass critical points.}
Let $J\subset(1,\infty)$ be an interval free of critical points of $\kappa\mapsto M_\kappa$, let $\kappa\in J$ and set $N:=\Mor(\L_\kappa)$. By Lemma~\ref{L:kernel}, $\ker\L_{\kappa'}=\set0$ for every $\kappa'\in J$, i.e. no eigenvalue vanishes on $J$; in particular $\mu_N(\kappa)<0<\mu_{N+1}(\kappa)$ (the first inequality being void when $N=0$). By Step~1 both inequalities persist for $\kappa'\in J$ near $\kappa$, and then the ordering of the eigenvalues gives $\Mor(\L_{\kappa'})=N$. Hence $\kappa\mapsto\Mor(\L_\kappa)$ is locally constant on $J$, and $J$ being an interval, constant on $J$.

\emph{Step 3: mass critical points are regular points of $\mathcal C$.}
Let $\kappa_0$ be a mass critical point and suppose, for contradiction, that also $R'(\kappa_0)=0$. Then \eqref{E:MPRIME} gives $4\pi G(R)=M'-4\pi R^{d-1}R'=0$ at $\kappa_0$, whence $\nu_{\kappa_0}(R)=-G(R)/R^d=0$ by~\eqref{E:NUFORM}; and since the Robin defect~\eqref{E:ROBINDEF} vanishes when $M'=0$, also $\partial_y\nu_{\kappa_0}(R)=-d\nu_{\kappa_0}(R)/R=0$. By Step~1 of the proof of Lemma~\ref{L:kernel}, $\nu_{\kappa_0}$ solves the linear second order ODE $\partial_y(\bar\rho^\gamma y^{d+1}\partial_y\nu)=d(\gamma_*-\gamma)\gamma^{-1}y^d\partial_y\bar\rho^\gamma\nu$ on $(0,R]$, whose leading coefficient $\bar\rho^\gamma y^{d+1}$ is bounded away from zero on $[\epsilon,R]$ for every $\epsilon>0$; uniqueness for the initial value problem at $y=R$ with the trivial data thus forces $\nu_{\kappa_0}\equiv0$ on $(0,R]$, contradicting $\nu_{\kappa_0}(0)=-1/(d\kappa_0)\neq0$ and the continuity of $\nu_{\kappa_0}$ on $[0,R]$. Hence $R'\neq0$ at every mass critical point.

Moreover, since $\kappa\mapsto M_\kappa$ is real-analytic and non-constant (Lemma~\ref{L:joint smoothness}), its derivative $M'$ is real-analytic and not identically zero, so the mass critical points are isolated. Near a mass critical point $\kappa_0$ the derivative $M'$ therefore has a constant sign on each of the punctured side of $\kappa_0$, while $R'\neq0$ in a neighbourhood of $\kappa_0$; hence the bending product $M'R'$ has a constant sign on each punctured side of $\kappa_0$ and, by Definition~\ref{D:index}, for $\kappa\neq\kappa_0$ sufficiently close to $\kappa_0$,
\begin{align}\label{E:INDEXNEAR}
i_\kappa=
\begin{cases}
1, &\quad\text{when}\quad M'(\kappa)R'(\kappa)>0,\\
0, &\quad\text{when}\quad M'(\kappa)R'(\kappa)<0.
\end{cases}
\end{align}
If in addition $M''(\kappa_0)\neq0$, then $M'$, and with it the bending product, changes sign across $\kappa_0$.

\emph{Step 4: localisation of the spectrum near $\kappa_0$.}
Let $\kappa_0$ be a mass critical point and set $N:=\Mor(\L_{\kappa_0})$. Since $M'(\kappa_0)=0$, Lemma~\ref{L:kernel} makes $0$ an eigenvalue of~\eqref{E:GEVP} of multiplicity one, so
\begin{align*}
\mu_N(\kappa_0)<0=\mu_{N+1}(\kappa_0)<\mu_{N+2}(\kappa_0)
\end{align*}
(the first inequality again void when $N=0$). By Step~1 there is $\delta_0>0$ such that $\mu_N(\kappa)<0<\mu_{N+2}(\kappa)$ for $\abs{\kappa-\kappa_0}\le\delta_0$; shrinking $\delta_0$ if necessary, Step~3 (the critical point $\kappa_0$ being isolated) and Lemma~\ref{L:kernel} in addition give $\mu_{N+1}(\kappa)\neq0$ for $0<\abs{\kappa-\kappa_0}\le\delta_0$. By the ordering of the eigenvalues,
\begin{align}\label{E:MULOC}
\Mor(\L_\kappa)=
\begin{cases}
N+1, &\quad\text{when}\quad \mu_{N+1}(\kappa)<0,\\
N, &\quad\text{when}\quad \mu_{N+1}(\kappa)>0,
\end{cases}
\qquad 0<\abs{\kappa-\kappa_0}\le\delta_0,
\end{align}
and~\eqref{E:MORLOC} reduces to determining the sign of $\mu_{N+1}$ on either side of $\kappa_0$.

\emph{Step 5: the direction of the crossing.}
Write $\mu(\kappa):=\mu_{N+1}(\kappa)$, and let $\chi_\kappa\in H_\kappa$ be a corresponding eigenfunction of~\eqref{E:GEVP} whose pullback $\tilde\chi_\kappa\in X$ is normalised by $\ell_\kappa[\tilde\chi_\kappa,\tilde\chi_\kappa]=1$. We claim that, after shrinking $\delta_0$ once more,
\begin{align}\label{E:CROSSING}
\operatorname{sign}\mu(\kappa)=-\operatorname{sign}\brac{M'(\kappa)R'(\kappa)},
\qquad 0<\abs{\kappa-\kappa_0}\le\delta_0.
\end{align}
The starting point is a pairing of the eigenfunction with the marginal mode: by the symmetry of the energy form~\eqref{E:TE} and the marginal identity~\eqref{E:MARGWEAK},
\begin{align}\label{E:PAIR}
\mu(\kappa)\<\chi_\kappa,\nu_\kappa\>_{L_\kappa}
=\<\L_\kappa\chi_\kappa,\nu_\kappa\>
=\<\L_\kappa\nu_\kappa,\chi_\kappa\>
=c_\kappa M'(\kappa)\chi_\kappa(R_\kappa),
\qquad
c_\kappa=-\frac{M_\kappa}{4\pi R_\kappa^{d-2}}<0 ,
\end{align}
or, in the rescaled variables of Step~1 via~\eqref{E:PULLBACK},
\begin{align}\label{E:PAIRRESC}
\mu(\kappa)\,R_\kappa^{d+2}\,\ell_\kappa[\tilde\chi_\kappa,\tilde\nu_\kappa]=c_\kappa M'(\kappa)\,\tilde\chi_\kappa(1),
\qquad
\tilde\nu_\kappa(z):=\nu_\kappa(R_\kappa z).
\end{align}
Note that by~\eqref{E:NUFORM} and the substitution $s=R_\kappa zu$,
\begin{align}\label{E:NUPULL}
\tilde\nu_\kappa(z)=-\frac{1}{\sigma_\kappa(z)}\int_0^1u^{d-1}\,\partial_\kappa\bar\rho_\kappa(R_\kappa zu)\,\d u ,
\end{align}
which by Lemma~\ref{L:joint smoothness} is jointly continuous on $K\times[0,1]$; in particular $\tilde\nu_\kappa\to\tilde\nu_{\kappa_0}$ uniformly on $[0,1]$ as $\kappa\to\kappa_0$.

Suppose~\eqref{E:CROSSING} fails. Then there are $\kappa_j\to\kappa_0$, $\kappa_j\neq\kappa_0$, with
\begin{align}\label{E:CONTRA}
\mu(\kappa_j)M'(\kappa_j)R'(\kappa_j)>0,
\end{align}
all three factors being non-zero for $0<\abs{\kappa_j-\kappa_0}\le\delta_0$ by Steps~3--4. Abbreviate $\tilde\chi_j:=\tilde\chi_{\kappa_j}$, $\tilde\nu_j:=\tilde\nu_{\kappa_j}$ and $\lambda_j:=R_{\kappa_j}^2\mu(\kappa_j)$.

First, $(\tilde\chi_j)_j$ is bounded in $X$: by Step~1, $\mu$ is continuous with $\mu(\kappa_0)=0$, so $Q_{\kappa_j}[\tilde\chi_j,\tilde\chi_j]=\lambda_j\to0$, and~\eqref{E:GARDING} gives $\min(\gamma,1)\norm{\tilde\chi_j}_X^2\le P_{\kappa_j}[\tilde\chi_j,\tilde\chi_j]=\lambda_j+\Lambda\le\Lambda+1$ for $j$ large. Passing to a subsequence, $\tilde\chi_j\rightharpoonup\tilde\chi_*$ weakly in $X$; by the compact embedding of $X$ into $L^2([0,1],z^{d+1})$ (as in the proof of Lemma~\ref{L:discrete}) the convergence is strong in $L^2([0,1],z^{d+1})$, and since the point evaluation $\tilde\chi\mapsto\tilde\chi(1)$ is a bounded linear functional on $X$ (a one-dimensional trace estimate on $[1/2,1]$, where $z^{d+1}\sim1$), also $\tilde\chi_j(1)\to\tilde\chi_*(1)$.

Second, $\tilde\chi_*$ is a kernel element. For fixed $\tilde\varphi\in X$ the eigenvalue equation reads $Q_{\kappa_j}[\tilde\chi_j,\tilde\varphi]=\lambda_j\,\ell_{\kappa_j}[\tilde\chi_j,\tilde\varphi]$; by the uniform convergence of the coefficients together with the boundedness of $(\tilde\chi_j)_j$ in $X$, the weak convergence $\partial_z\tilde\chi_j\rightharpoonup\partial_z\tilde\chi_*$ in $L^2([0,1],z^{d+1})$, the strong convergence of $\tilde\chi_j$, the convergence of the boundary values, and $\lambda_j\to0$, we may pass to the limit and obtain $Q_{\kappa_0}[\tilde\chi_*,\tilde\varphi]=0$ for every $\tilde\varphi\in X$. Undoing the substitution, $\<\L_{\kappa_0}\chi_*,\varphi\>=0$ for all $\varphi\in H_{\kappa_0}$, so $\chi_*\in\ker\L_{\kappa_0}=\Span\set{\nu_{\kappa_0}}$ by Lemma~\ref{L:kernel}, recalling $M'(\kappa_0)=0$. Moreover $\ell_{\kappa_0}[\tilde\chi_*,\tilde\chi_*]=\lim_{j\to\infty}\ell_{\kappa_j}[\tilde\chi_j,\tilde\chi_j]=1$ by the strong convergence, so
\begin{align*}
\tilde\chi_*=t\tilde\nu_{\kappa_0}\qquad\text{for some}\quad t\neq0.
\end{align*}

Third, we pass to the limit in~\eqref{E:PAIRRESC} along the subsequence. By the strong $L^2$ convergence of $\tilde\chi_j$ and the uniform convergence of $\sigma_{\kappa_j}$ and $\tilde\nu_j$,
\begin{align*}
\ell_{\kappa_j}[\tilde\chi_j,\tilde\nu_j]\to\ell_{\kappa_0}[\tilde\chi_*,\tilde\nu_{\kappa_0}]=t\ell_{\kappa_0}[\tilde\nu_{\kappa_0},\tilde\nu_{\kappa_0}]\neq0,
\end{align*}
while, by~\eqref{E:NUFORM} and~\eqref{E:MPRIME} with $M'(\kappa_0)=0$,
\begin{align*}
\tilde\chi_j(1)\to\tilde\chi_*(1)=t\nu_{\kappa_0}(R_{\kappa_0})=-t\frac{G(R_{\kappa_0})}{R_{\kappa_0}^d}=t\frac{R'(\kappa_0)}{R_{\kappa_0}}.
\end{align*}
Dividing~\eqref{E:PAIRRESC} by $M'(\kappa_j)\neq0$ and letting $j\to\infty$, the factor $t$ cancels and
\begin{align*}
\frac{\mu(\kappa_j)}{M'(\kappa_j)}\to\frac{c_{\kappa_0}}{R_{\kappa_0}^{d+3}\,\ell_{\kappa_0}[\tilde\nu_{\kappa_0},\tilde\nu_{\kappa_0}]}\,R'(\kappa_0).
\end{align*}
Since $c_{\kappa_0}<0$, the limit is a non-zero number of sign opposite to $R'(\kappa_0)$. Hence for $j$ large $\operatorname{sign}\mu(\kappa_j)=-\operatorname{sign}\brac{M'(\kappa_j)R'(\kappa_0)}=-\operatorname{sign}\brac{M'(\kappa_j)R'(\kappa_j)}$, contradicting~\eqref{E:CONTRA}. This proves~\eqref{E:CROSSING}.

\emph{Conclusion.} Combining~\eqref{E:MULOC} with~\eqref{E:CROSSING} yields exactly~\eqref{E:MORLOC}, with $N=\Mor(\L_{\kappa_0})$. By Step~3 the bending product $M'R'$ has a constant sign on each punctured side of $\kappa_0$. If it passes from $-$ to $+$ as $\kappa$ increases through $\kappa_0$, the index jumps from $N$ to $N+1$; if it passes from $+$ to $-$, it jumps from $N+1$ to $N$; and if it has the same sign on both sides, the index takes the same value on both sides and does not jump. Comparing with~\eqref{E:INDEXNEAR}, in every case the jump of $\Mor(\L_\kappa)$ equals the jump of the turning index $i_\kappa$; and by Step~3 the sign-changing cases occur, in particular, whenever $M''(\kappa_0)\neq0$. This completes the proof.
\end{proof}

We record next a global, winding-number form of the count, obtained by adding up the jumps of Lemma~\ref{L:jump} along the curve. It identifies the growing-mode count with the winding of the \emph{tangent} of $\mathcal C$, measured in units of $\pi$, and it is the exact Newtonian counterpart of the winding index of~\cite{Hadzic_Lin_Rein_2021, Hadzic_Lin_2021}.

\begin{lemma}[Winding form of the count]\label{L:winding}
The tangent vector $\brac{R'(\kappa),M'(\kappa)}$ of the mass--radius curve never vanishes, and therefore admits a continuous polar angle $\varphi:(1,\infty)\to\R$, unique up to an additive integer multiple of $2\pi$; the mass critical points are exactly the parameters with $\varphi\in\pi\Z$, at which the tangent of $\mathcal C$ is horizontal. For all $a<b$ in $(1,\infty)$ that are not mass critical points,
\begin{align}\label{E:WINDING}
\nmodes(b)-\nmodes(a)=\left\lfloor\frac{\varphi(b)}{\pi}\right\rfloor-\left\lfloor\frac{\varphi(a)}{\pi}\right\rfloor.
\end{align}
In particular the count increases by $1$ exactly when the tangent of $\mathcal C$ crosses the horizontal direction counter-clockwise, $\varphi$ increasing through a multiple of $\pi$, and decreases by $1$ at a clockwise crossing.
\end{lemma}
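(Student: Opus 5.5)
The plan is to derive the winding formula by summing the local jumps of Lemma~\ref{L:jump} along $[a,b]$, after recasting both the jumps and the floor function in terms of the polar angle $\varphi$ of the tangent.

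\emph{Tangent and angle.} The tangent $(R',M')$ never vanishes: at a mass critical point $R'\neq0$ by Step~3 of Lemma~\ref{L:jump}, and elsewhere $M'\neq0$ trivially. By Lemma~\ref{L:jump} the map $\kappa\mapsto(R'(\kappa),M'(\kappa))$ is real-analytic, so it is continuous into $\R^2\setminus\set0$. The standard lifting then gives a continuous $\varphi$ with $(R',M')=\abs{(R',M')}(\cos\varphi,\sin\varphi)$, unique up to $2\pi\Z$; such a shift leaves both sides of~\eqref{E:WINDING} unchanged. We have $M'=0$ exactly when $\sin\varphi=0$, that is, when $\varphi\in\pi\Z$.

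\emph{Decomposition into local jumps.} The mass critical points are isolated (Lemma~\ref{L:jump}), so only finitely many of them, $\kappa^{(1)}<\dots<\kappa^{(k)}$, lie in $[a,b]$, and none is an endpoint. On each gap between consecutive critical points, $\nmodes$ is constant by Lemma~\ref{L:jump} and Theorem~\ref{T:COUNT}(i). On the same gap $\varphi\notin\pi\Z$, so by continuity $\lfloor\varphi/\pi\rfloor$ is also constant there. It therefore suffices to show that across each $\kappa_0=\kappa^{(i)}$ the jump of $\nmodes$ equals the jump of $\lfloor\varphi/\pi\rfloor$; telescoping over $i$ then gives~\eqref{E:WINDING}.

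\emph{Local comparison.} Near $\kappa_0$ we have $\varphi(\kappa_0)=j\pi$ for some $j\in\Z$, and $R'(\kappa_0)\neq0$. Using $\sin(j\pi+\theta)=(-1)^j\sin\theta$ and $\cos(j\pi+\theta)=(-1)^j\cos\theta$, write $\varphi=j\pi+\theta$ with $\theta$ small. Then
\[M'R'=\abs{(R',M')}^2\sin\varphi\cos\varphi=\abs{(R',M')}^2\sin\theta\cos\theta,\]
so $\operatorname{sign}(M'R')=\operatorname{sign}\theta$ for $\kappa$ near $\kappa_0$. On the other hand, $\lfloor\varphi/\pi\rfloor=j$ when $\theta>0$ and $=j-1$ when $\theta<0$. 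Hence $\lfloor\varphi/\pi\rfloor$ jumps by $+1$, $-1$ or $0$ exactly as $\operatorname{sign}(M'R')$ passes from $-$ to $+$, from $+$ to $-$, or keeps its sign across $\kappa_0$. That is precisely the jump of $\nmodes=\Mor(\L_\kappa)$ given by~\eqref{E:MORLOC}. The counter-clockwise/clockwise reading in the final sentence is the case where $\theta$ changes sign from $-$ to $+$, respectively from $+$ to $-$.

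The main obstacle is essentially bookkeeping: making sure the lift exists globally, and that the sign identification $M'R'\sim\theta$ is uniform in the parity of $j$. The $(-1)^j$ factors cancel in the product $\sin\varphi\cos\varphi$, and this cancellation is what makes the identification parity-independent. Everything substantive is already contained in Lemma~\ref{L:jump}.
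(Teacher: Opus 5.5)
Your proposal is correct and follows the paper's proof essentially step for step. Like the paper, you get non-vanishing of the tangent from Step~3 of Lemma~\ref{L:jump}, take a continuous lift $\varphi$, reduce to matching jumps across the finitely many mass critical points in $[a,b]$, and compare the local identity $M'R'=\abs{(R',M')}^2\sin\theta\cos\theta$ (with $\varphi=j\pi+\theta$) against~\eqref{E:MORLOC}, merely making explicit that a $2\pi\Z$ shift of $\varphi$ leaves~\eqref{E:WINDING} unchanged.
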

\begin{proof}
If $M'(\kappa)=0$, then $R'(\kappa)\neq0$ by Lemma~\ref{L:jump}, so $(R',M')\neq(0,0)$ everywhere, and a continuous argument $\varphi$ exists (and is unique modulo $2\pi\Z$) because $(1,\infty)$ is an interval. Writing $\varrho:=\abs{(R',M')}>0$ we have $R'=\varrho\cos\varphi$ and $M'=\varrho\sin\varphi$, so mass critical points correspond to $\varphi\in\pi\Z$.

By Lemma~\ref{L:jump} the mass critical points are isolated, hence finite in number in $[a,b]$: label them $a<\kappa^1<\dots<\kappa^m<b$. On each of the complementary subintervals of $[a,b]$ the angle $\varphi$ avoids $\pi\Z$, so $\lfloor\varphi/\pi\rfloor$ is constant there, being an integer-valued continuous function; and $\nmodes=\Mor(\L_\kappa)$ is constant there as well, by Theorem~\ref{T:COUNT} and Lemma~\ref{L:jump}. It therefore suffices to match the jumps of the two sides of~\eqref{E:WINDING} across each $\kappa^j$.

Fix $j$ and write $\varphi=k_j\pi+\psi_j$ near $\kappa^j$, where $k_j\in\Z$ and $\psi_j$ is continuous and small with $\psi_j(\kappa^j)=0$ and $\psi_j\neq0$ off $\kappa^j$. Since
\begin{align*}
M'R'=\varrho^2\sin\varphi\cos\varphi=\varrho^2\sin(k_j\pi+\psi_j)\cos(k_j\pi+\psi_j)=\varrho^2\sin\psi_j\cos\psi_j,
\end{align*}
we have $\operatorname{sign}(M'R')=\operatorname{sign}\psi_j$ on either punctured side of $\kappa^j$. Hence, by~\eqref{E:MORLOC}, near $\kappa^j$,
\begin{align*}
\nmodes(\kappa)=\Mor(\L_{\kappa^j})+\mb 1_{\set{\psi_j(\kappa)>0}},
\qquad 0<\abs{\kappa-\kappa^j}\ \text{small},
\end{align*}
while directly $\lfloor\varphi/\pi\rfloor=k_j-\mb 1_{\set{\psi_j<0}}$ there. Both sides of~\eqref{E:WINDING} therefore jump across $\kappa^j$ by the same amount, namely $\mb 1_{\set{\psi_j>0\text{ after}}}-\mb 1_{\set{\psi_j>0\text{ before}}}\in\set{-1,0,1}$. Summing over $j=1,\dots,m$ and using the constancy in between proves~\eqref{E:WINDING}. The final sentence is the case bookkeeping: $\varphi$ increasing through $k_j\pi$ means $\psi_j$ passes from $-$ to $+$, a jump of $+1$; decreasing, from $+$ to $-$, a jump of $-1$; and a touching without crossing gives no jump.
\end{proof}

The remaining input is the global geometry of the mass--radius curve, which supplies Theorem~\ref{T:TPP}(ii)--(iii). Using the substitution
\begin{align}\label{E:DSVARS}
u_1(r):=r^{\frac{2}{2-\gamma}}\bar\rho(r),\qquad u_2(r):=r^{\frac{2}{2-\gamma}-d}m(r),\qquad m(r)=4\pi\int_0^r y^{d-1}\bar\rho\,\d y,
\end{align}
and the logarithmic time $\tau=\ln r$, the gaseous Lane--Emden equation~\eqref{E:LEODE} becomes the planar autonomous system
\begin{align}\label{E:DS}
\frac{\d}{\d\tau}\mat{v_1\\v_2}=\mat{-\tfrac1\gamma v_1^{2-\gamma}v_2+\tfrac{2}{2-\gamma}v_1\\[1mm] 4\pi v_1-\brac{d-\tfrac{2}{2-\gamma}}v_2},\qquad v_j(\tau)=u_j(e^\tau).
\end{align}
Details of this can be found in \cite{Lam_2024}. From now on $\mb v=(v_1,v_2)$ denotes the orbit of~\eqref{E:DS} associated with the unit-central-density gaseous profile, i.e.~\eqref{E:DSVARS} is evaluated at $\bar\rho=\bar\rho_*$, defined for $\tau$ in
\begin{align*}
I:=
\begin{cases}
\R, &\quad\text{when}\quad \gamma\le\gamma_\sharp,\\
(-\infty,\ln S_*), &\quad\text{when}\quad \gamma>\gamma_\sharp,
\end{cases}
\end{align*}
according to the support dichotomy of~\cite{Lam_2024}: the gaseous profile $\bar\rho_*$ has infinite support when $\gamma\le\gamma_\sharp$ and compact support $[0,S_*]$ when $\gamma>\gamma_\sharp$.

The first observation is that the mass--radius curve is nothing but this orbit, viewed through a fixed change of coordinates. This is the exact mechanism by which the phase portrait of~\eqref{E:DS} is transcribed into the geometry of $\mathcal C$.

\begin{lemma}[The mass--radius curve as an orbit]\label{L:orbit}
Let $\gamma\in[1,2)$ and set
\begin{align}\label{E:PHI}
a:={2(d-1)-d\gamma\over 2},
\qquad
\Phi(v_1,v_2):=\brac{v_1^{\frac{2-\gamma}{2}},\, v_1^{a}v_2},
\end{align}
so that $\Phi$ is a real-analytic diffeomorphism of $(0,\infty)^2$ onto itself. Then $\tau_\kappa:=\ln S_\kappa$ defines a real-analytic, strictly increasing bijection of $(1,\infty)$ onto $I$, with inverse $\tau\mapsto1/\bar\rho_*(e^\tau)$, and
\begin{align}\label{E:CURVEORBIT}
(R_\kappa,M_\kappa)=\Phi(\mb v(\tau_\kappa)),\qquad\kappa\in(1,\infty).
\end{align}
That is, the mass--radius curve $\mathcal C$ is the image of the gaseous orbit under the fixed diffeomorphism $\Phi$, traversed in the direction of increasing $\tau$, with $\tau_\kappa\to\sup I$ as $\kappa\to\infty$.
\end{lemma}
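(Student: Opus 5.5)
The plan is a direct computation from the scaling Lemma~\ref{L:scaling}, plus some bookkeeping for the map $\kappa\mapsto\tau_\kappa$.

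First, $\Phi$ is a real-analytic diffeomorphism of $(0,\infty)^2$ onto itself. Since $\frac{2-\gamma}{2}>0$, the first component $v_1\mapsto v_1^{(2-\gamma)/2}$ is an analytic bijection of $(0,\infty)$. For fixed $v_1>0$ the second component is linear in $v_2$ with positive slope $v_1^a$. The inverse is therefore explicit: writing $(R,M)=\Phi(v_1,v_2)$, we have $v_1=R^{2/(2-\gamma)}$ and $v_2=R^{-2a/(2-\gamma)}M$, and this is also analytic on $(0,\infty)^2$.

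Next, the parametrisation. By Lemma~\ref{L:joint smoothness} the profile $\bar\rho_*$ is analytic and strictly decreasing on its positive set. On $I$ it decreases from $1$ at $\tau=-\infty$ to $0$ at $\sup I$: when $\gamma>\gamma_\sharp$ it vanishes at $S_*$, and when $\gamma\le\gamma_\sharp$ it tends to $0$ at infinity by the support dichotomy of~\cite{Lam_2024}. Hence $\tau\mapsto 1/\bar\rho_*(e^\tau)$ is an analytic, strictly increasing bijection of $I$ onto $(1,\infty)$ with nonvanishing derivative. Its inverse is $\kappa\mapsto\ln\bar\rho_*^{-1}(1/\kappa)=\ln S_\kappa=\tau_\kappa$, which by the analytic inverse function theorem is a strictly increasing analytic bijection $(1,\infty)\to I$. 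In particular $\tau_\kappa\to\sup I$ as $\kappa\to\infty$.

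The main step is the identity~\eqref{E:CURVEORBIT}. Evaluate~\eqref{E:DSVARS} at $\bar\rho_*$ and $r=S_\kappa$, using $\bar\rho_*(S_\kappa)=1/\kappa$:
\[
v_1(\tau_\kappa)=S_\kappa^{2/(2-\gamma)}\kappa^{-1},\qquad v_2(\tau_\kappa)=S_\kappa^{2/(2-\gamma)-d}\,m_*(S_\kappa).
\]
Then
\[
v_1^{(2-\gamma)/2}=S_\kappa\kappa^{-(2-\gamma)/2}=R_\kappa
\]
by~\eqref{E:RMK}. For the mass,
\[
v_1^{a}v_2=S_\kappa^{\frac{2a}{2-\gamma}+\frac{2}{2-\gamma}-d}\,\kappa^{-a}\,m_*(S_\kappa).
\]
Here $-a=\tfrac12(d\gamma-2(d-1))$, which is exactly the exponent of $\kappa$ in $M_\kappa$. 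The power of $S_\kappa$ vanishes, since $\frac{2a+2}{2-\gamma}=\frac{2d-d\gamma}{2-\gamma}=d$; this cancellation is what fixes the choice of $a$. Hence $v_1^a v_2=M_\kappa$ by~\eqref{E:RMK}.

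I expect no real obstacle here. The only points needing care are the endpoint behaviour $\bar\rho_*\to0$ at $\sup I$, which must be cited from~\cite{Lam_2024}, and the fact that $\mathbf v$ is indeed the trajectory of~\eqref{E:DS}, which is also taken from~\cite{Lam_2024}. The orientation claim follows because $\tau_\kappa$ is increasing.
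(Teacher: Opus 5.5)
Your proposal is correct and follows the paper's proof essentially step by step. Both arguments use the same three ingredients: the explicit analytic inverse of $\Phi$; the bijection $\kappa\mapsto\tau_\kappa$, obtained from the analyticity and strict monotonicity of $\bar\rho_*$; and the direct evaluation of \eqref{E:DSVARS} at $r=S_\kappa$, where the powers of $S_\kappa$ cancel because $2a+2=d(2-\gamma)$. You are slightly more explicit than the paper in two places, both of which the paper leaves implicit through Definition~\ref{D:family} and Lemma~\ref{L:joint smoothness}: you note that $\bar\rho_*\to0$ at $\sup I$, which is why every $\kappa>1$ has an $S_\kappa$, and you note the nonvanishing derivative needed for the analytic inverse.
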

\begin{proof}
The map $\Phi$ is a bijection of $(0,\infty)^2$ with the explicit real-analytic inverse
\[\Phi^{-1}(R,M)=\brac{R^{\frac{2}{2-\gamma}},\, M R^{-\frac{2a}{2-\gamma}}},\]
so it is a real-analytic diffeomorphism. For $\tau\in I$ put $S:=e^\tau\in(0,\sup e^{I})$; then $\bar\rho_*(S)\in(0,1)$, and $\kappa=1/\bar\rho_*(S)>1$ is the unique central density with $S_\kappa=S$, i.e. $\tau_\kappa=\tau$. Hence $\kappa\mapsto\tau_\kappa$ is a bijection of $(1,\infty)$ onto $I$ with inverse $\tau\mapsto1/\bar\rho_*(e^\tau)$; it is strictly increasing and real-analytic together with its inverse because $\bar\rho_*$ is real-analytic and strictly decreasing wherever positive (Lemma~\ref{L:joint smoothness}).

Next, evaluating~\eqref{E:DSVARS} at $\bar\rho=\bar\rho_*$, $r=S_\kappa$, and using $\bar\rho_*(S_\kappa)=1/\kappa$,
\begin{align*}
v_1(\tau_\kappa)=S_\kappa^{\frac{2}{2-\gamma}}\,\bar\rho_*(S_\kappa)=\frac{S_\kappa^{\frac{2}{2-\gamma}}}{\kappa},
\qquad
v_2(\tau_\kappa)=S_\kappa^{\frac{2}{2-\gamma}-d}\,m_*(S_\kappa).
\end{align*}
The first identity of~\eqref{E:RMK} then reads
\begin{align*}
R_\kappa=\kappa^{-\frac{2-\gamma}{2}}S_\kappa=\brac{\frac{S_\kappa^{\frac{2}{2-\gamma}}}{\kappa}}^{\frac{2-\gamma}{2}}=v_1(\tau_\kappa)^{\frac{2-\gamma}{2}},
\end{align*}
while the second, $M_\kappa=\kappa^{-a}m_*(S_\kappa)$, becomes
\begin{align*}
M_\kappa
=\brac{\frac{S_\kappa^{\frac{2}{2-\gamma}}}{\kappa}}^{a}S_\kappa^{-\frac{2a}{2-\gamma}}\,m_*(S_\kappa)
=v_1(\tau_\kappa)^{a}\,S_\kappa^{\,d-\frac{2a+2}{2-\gamma}}\,v_2(\tau_\kappa)
=v_1(\tau_\kappa)^{a}\,v_2(\tau_\kappa),
\end{align*}
the powers of $S_\kappa$ cancelling because $2a+2=2(d-1)-d\gamma+2=d(2-\gamma)$, so that $d-\frac{2a+2}{2-\gamma}=0$. This is~\eqref{E:CURVEORBIT}.
\end{proof}

\begin{proposition}[Mass--radius geometry from the planar system]\label{P:geometry}
Assume $2(d-1)-d\gamma>0$, i.e. $\gamma<\gamma_*$. The system~\eqref{E:DS} has the rest points $\mb 0$ and 
\begin{align}\label{E:RESTPOINT}
\mb v^*=\mat{\displaystyle\brac{\frac1{2\pi}\frac{-d\gamma^2+2(d-1)\gamma}{(2-\gamma)^2}}^{\frac1{2-\gamma}}\\[1mm] \displaystyle\frac{2\gamma}{2-\gamma}\brac{\frac1{2\pi}\frac{-d\gamma^2+2(d-1)\gamma}{(2-\gamma)^2}}^{\frac{\gamma-1}{2-\gamma}}},
\end{align}
the latter corresponding to the self-similar singular profile $\bar\rho\sim c\,r^{-2/(2-\gamma)}$. The linearisation of~\eqref{E:DS} at $\mb v^*$ has eigenvalues
\begin{align}
\lambda_\pm=\frac{2}{2-\gamma}-1-\frac d2\pm\frac12\sqrt{\mathcal D(\gamma,d)\,(2-\gamma)^{-2}},
\end{align}
with $\mathcal D$ as in~\eqref{E:DISC}: $\mb v^*$ is a focus when $\mathcal D(\gamma,d)<0$ and a node when $\mathcal D(\gamma,d)\ge0$; moreover, when $d<10$ one has $\mathcal D(\gamma,d)<0$ for every $\gamma\in[1,\gamma_\sharp]$. Suppose now $\gamma<\gamma_\sharp$. Then $\mb v^*$ is exponentially asymptotically stable, the gaseous orbit converges to it, $\mb v(\tau)\to\mb v^*$ as $\tau\to\infty$, and:
\begin{enumerate}[(i)]
\item if $\mathcal D(\gamma,d)<0$, both the position $\mb v(\tau)-\mb v^*$ and the velocity $\frac{\d}{\d\tau}\mb v(\tau)$ wind around the origin unboundedly: any continuous choice of their polar angles tends to $+\infty$ or $-\infty$ linearly in $\tau$;
\item if $\mathcal D(\gamma,d)\ge0$, the normalised velocity $\frac{\d}{\d\tau}\mb v(\tau)/|\frac{\d}{\d\tau}\mb v(\tau)|$ converges, as $\tau\to\infty$, to a unit eigenvector of the linearisation.
\end{enumerate}
Consequently, by Lemma~\ref{L:orbit}, as $\kappa\to\infty$ the mass--radius curve $\mathcal C$ converges to the point
\begin{align}\label{E:LIMITPOINT}
(R_\infty,M_\infty):=\Phi(\mb v^*)=\brac{(v_1^*)^{\frac{2-\gamma}{2}},\ (v_1^*)^{a}\,v_2^*},
\end{align}
which is precisely the radius and mass of the singular liquid star obtained by truncating the singular profile $v_1^*\,r^{-2/(2-\gamma)}$ at density $1$; in case~(i) the curve $\mathcal C$ spirals into $(R_\infty,M_\infty)$, any continuous polar angle of its tangent $\brac{R'(\kappa),M'(\kappa)}$ tending to $+\infty$ or $-\infty$ as $\kappa\to\infty$, while in case~(ii) the normalised tangent of $\mathcal C$ converges as $\kappa\to\infty$.
\end{proposition}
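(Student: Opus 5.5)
The proof has four parts: the rest points and their linearisation, the dynamics of the gaseous orbit, the asymptotics of its velocity, and the transfer to $\mathcal C$ through $\Phi$.

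\emph{Rest points and linearisation.} Setting the right-hand side of~\eqref{E:DS} to zero, the second component gives $v_2=4\pi v_1/(d-\frac{2}{2-\gamma})$. Here $d-\frac{2}{2-\gamma}>0$ because $\gamma<\gamma_*<2(d-2)/(d-1)$ (note $d\ge3$). The first component, for $v_1\neq0$, gives $v_1^{1-\gamma}v_2=\frac{2\gamma}{2-\gamma}$. Together these yield $v_1^{2-\gamma}=\frac{2\gamma}{4\pi(2-\gamma)}\brac{d-\frac2{2-\gamma}}$, which rearranges to~\eqref{E:RESTPOINT}; positivity uses $\gamma<\gamma_*$. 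The Jacobian at $\mb v^*$, simplified using the rest-point relations, is
\begin{align*}
\mat{-\tfrac{2-\gamma}{\gamma}(v_1^*)^{1-\gamma}v_2^*+\tfrac2{2-\gamma} & -\tfrac1\gamma(v_1^*)^{2-\gamma}\\ 4\pi & -(d-\tfrac2{2-\gamma})}=\mat{\tfrac{2(\gamma-1)}{2-\gamma} & -\tfrac1\gamma(v_1^*)^{2-\gamma}\\ 4\pi & -(d-\tfrac2{2-\gamma})}.
\end{align*}
I will compute its trace and determinant. The determinant should be $\frac{2(2(d-1)-d\gamma)}{(2-\gamma)^2}$, with the constant $4\pi$ cancelling as Remark~\ref{R:coupling} anticipates. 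The trace should be $\frac{4}{2-\gamma}-2-d$. Then $\mathrm{tr}^2-4\det=\mathcal D/(2-\gamma)^2$, which gives $\lambda_\pm$ and the focus/node classification. For the claim that $\mathcal D<0$ on $[1,\gamma_\sharp]$ when $d<10$, note that $\mathcal D$ is a convex quadratic in $\gamma$. It therefore suffices to check $\mathcal D<0$ at the two endpoints $\gamma=1$ and $\gamma=\gamma_\sharp$, which is an elementary polynomial inequality in $d$. At $\gamma=\gamma_\sharp$ the quantity reduces to a multiple of $(d-2)(d-10)$, which exhibits the threshold $d=10$.

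\emph{Global dynamics of the orbit.} Assume now $\gamma<\gamma_\sharp$. Then $\mathrm{tr}<0$ exactly because $\frac4{2-\gamma}<d+2$, which is equivalent to $\gamma<\gamma_\sharp$. Together with $\det>0$ this makes $\mb v^*$ a hyperbolic sink, and hence exponentially asymptotically stable. The main obstacle is showing that the specific orbit $\mb v$, which emanates from $\mb 0$ as $\tau\to-\infty$ (since $v_1\sim e^{2\tau/(2-\gamma)}$), actually converges to $\mb v^*$. My plan has three steps.
\begin{enumerate}
\item Show that the orbit stays in a compact subset of $(0,\infty)^2$ for $\tau\ge0$, using the bounds of~\cite{Lam_2024} for the infinite-support regime: $\bar\rho_*\lesssim r^{-2/(2-\gamma)}$ and $m_*(r)\lesssim r^{d-2/(2-\gamma)}$, together with lower bounds of the same order.
\item Exclude periodic orbits by Dulac's criterion. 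I will take a multiplier $B=v_1^{\alpha}v_2^{\beta}$ and choose $\alpha,\beta$ so that $\grad\cdot(B\mb F)$ has a constant sign exactly when $\gamma<\gamma_\sharp$. With $\beta=0$ and a suitable $\alpha$, the divergence reduces to a constant multiple of $\mathrm{tr}$ plus a term that I will try to make vanish.
\item Apply Poincaré--Bendixson to identify the $\omega$-limit set of $\mb v$. Periodic orbits are excluded, and $\mb 0$ is a saddle (eigenvalues $\frac2{2-\gamma}>0$ and $-(d-\frac2{2-\gamma})<0$) whose unstable manifold is this very orbit, so $\mb 0$ cannot lie in the $\omega$-limit set. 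Hence the $\omega$-limit set is $\{\mb v^*\}$.
\end{enumerate}

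\emph{Asymptotics of position and velocity.} By Hartman--Grobman, or better a $C^1$ linearisation, which is available in the plane near a hyperbolic sink, we get $\mb v(\tau)-\mb v^*=e^{A\tau}\mb c+o(e^{\Re\lambda\,\tau})$ with derivative estimates. In the focus case the leading term rotates with angular velocity $\frac12\sqrt{-\mathcal D}/(2-\gamma)$, so the polar angles of both $\mb v-\mb v^*$ and $\dot{\mb v}=A(\mb v-\mb v^*)+O(|\mb v-\mb v^*|^2)$ grow linearly. In the node case the velocity aligns with an eigendirection; the degenerate case $\mathcal D=0$ needs a Jordan-block remark.

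\emph{Transfer to $\mathcal C$.} Continuity of $\Phi$ gives convergence of $\mathcal C$ to $(R_\infty,M_\infty)$, and the truncation interpretation is the computation of Lemma~\ref{L:orbit} applied to the singular profile. By the chain rule, the tangent of $\mathcal C$ is $D\Phi(\mb v(\tau))\dot{\mb v}\,\tau_\kappa'$ with $\tau_\kappa'>0$. Since $D\Phi(\mb v)\to D\Phi(\mb v^*)$, which is invertible, the winding of the tangent differs from that of $\dot{\mb v}$ by a bounded amount, so it is still unbounded. In the node case the normalised tangent converges to $D\Phi(\mb v^*)\mb e/|D\Phi(\mb v^*)\mb e|$, where $\mb e$ is the limiting eigenvector.
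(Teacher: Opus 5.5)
Your proposal is correct in substance and has the same architecture as the paper's proof: linearisation at $\mb v^*$, convergence of the gaseous orbit, winding or limiting direction, and transfer through $\Phi$. It differs in what it proves rather than cites, and in the tool used for the asymptotics.

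\textbf{Convergence of the orbit.} The paper does not prove the exponential stability of $\mb v^*$ or the convergence $\mb v(\tau)\to\mb v^*$. It cites both from~\cite{Lam_2024}, where they are obtained by Poincar\'e--Bendixson and Bendixson--Dulac, so your three-step argument reconstructs that cited proof. Your Dulac multiplier works: with $B=v_1^{\gamma-2}$ one gets exactly $\grad\cdot(B\mb F)=v_1^{\gamma-2}\operatorname{tr}A$, which is negative on $(0,\infty)^2$ precisely when $\gamma<\gamma_\sharp$.

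The load-bearing step is (1): confining the forward orbit to a compact subset of the \emph{open} quadrant. You import this from the same reference, so your net dependence on~\cite{Lam_2024} is the same as the paper's. Once (1) holds, the saddle remark in (3) is superfluous, and you should drop it, because as stated it is not valid. For $\gamma\in(1,2)$ the field is not $C^1$ near $\mb 0$, since the term $v_1^{2-\gamma}v_2$ has unbounded $\partial_{v_1}$ near the $v_2$-axis. Moreover, an orbit lying on the unstable manifold of a saddle can in general return to it.

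\textbf{Asymptotics.} In the focus case the paper uses no linearisation theorem. In real Jordan coordinates it gets $\frac{\d\theta}{\d\tau}=\beta+O(r)$ with $r\lesssim e^{-c\tau}$ integrable, hence $\theta=\beta\tau+c_0+o(1)$. It then removes $P$, and later $D\Phi(\mb v^*)$, by a degree argument for circle homeomorphisms. Only the node case appeals to Hartman's theorem on limiting secant directions. Your $C^1$ linearisation handles focus, node and the Jordan-block case $\mathcal D=0$ in one stroke; note $A_{21}=4\pi\neq0$, so $A$ is never scalar. The cost is a heavier black box, whereas the paper's focus argument is self-contained.

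\textbf{Transfer to $\mathcal C$.} You only transcribe the tangent winding. The spiralling of $(R_\kappa,M_\kappa)-(R_\infty,M_\infty)$ itself follows in the same way from $\Phi(\mb v)-\Phi(\mb v^*)=D\Phi(\mb v^*)(\mb v-\mb v^*)+O(\abs{\mb v-\mb v^*}^2)$, as in the paper. Your claim that the two windings differ by a bounded amount is legitimate here because $\det D\Phi=\frac{2-\gamma}{2}v_1^{a-\gamma/2}>0$.

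\textbf{Slips to fix.}
\begin{itemize}
\item The determinant is $\det A=2b=\frac{2(2(d-1)-d\gamma)}{2-\gamma}$, not this quantity over $(2-\gamma)^2$. With your value, $\operatorname{tr}^2-4\det\neq\mathcal D/(2-\gamma)^2$ unless $\gamma=1$. With the correct value, writing $s=2-\gamma$, one gets $(4-(d+2)s)^2-8s(ds-2)=(d-2)^2s^2-8ds+16=\mathcal D(\gamma,d)$, as required.
\item The threshold $d=10$ comes from the endpoint $\gamma=1$, where $\mathcal D(1,d)=(d-2)(d-10)$. At $\gamma=\gamma_\sharp$ one has $\mathcal D=-64(d-2)/(d+2)^2<0$ for every $d$.
\item $b=d-\frac{2}{2-\gamma}>0$ is simply equivalent to $\gamma<\gamma_*$. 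The inequality $\gamma_*<2(d-2)/(d-1)$ you invoke is false; it holds in the opposite direction.
\end{itemize}
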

\begin{proof}
\emph{Step 1: the phase portrait.} The rest points, the eigenvalue formula, the exponential asymptotic stability of $\mb v^*$ for $\gamma<\gamma_\sharp$, and the convergence of the gaseous orbit $\mb v(\tau)\to\mb v^*$ with an exponential rate $|\mb v(\tau)-\mb v^*|\lesssim e^{-c\tau}$, $c>0$, are all established in~\cite{Lam_2024} (proven there by a Poincar\'e--Bendixson and Bendixson--Dulac argument). The eigenvalues $\lambda_\pm$ form a non-real conjugate pair exactly when $\mathcal D<0$, giving the focus/node dichotomy. For the claim about $d<10$: as a function of $\gamma$, $\mathcal D(\gamma,d)=(d-2)^2(2-\gamma)^2-8(2(d-1)-d\gamma)$ is a convex quadratic, so its maximum over $[1,\gamma_\sharp]$ is attained at an endpoint; at the endpoints,
\begin{align*}
\mathcal D(1,d)&=(d-2)^2-8(d-2)=(d-2)(d-10),
\\
\mathcal D(\gamma_\sharp,d)&=(d-2)^2\frac{16}{(d+2)^2}-16\frac{d-2}{d+2}=-\frac{64(d-2)}{(d+2)^2}<0,
\end{align*}
using $2-\gamma_\sharp=4/(d+2)$ and $2(d-1)-d\gamma_\sharp=2(d-2)/(d+2)$. Hence $\mathcal D<0$ on all of $[1,\gamma_\sharp]$ whenever $d<10$.

\emph{Step 2: winding at a focus.} Suppose $\mathcal D<0$, and write $\mb F$ for the vector field of~\eqref{E:DS} (so that $\frac{\d}{\d\tau}\mb v=\mb F(\mb v)$), $A:=\grad\mb F(\mb v^*)$, $\alpha:=\Re\lambda_\pm=2/(2-\gamma)-1-d/2<0$ and $\beta:=\Im\lambda_+=\frac12\sqrt{-\mathcal D}\,(2-\gamma)^{-1}\neq0$. By the real Jordan form there is an invertible matrix $P$ with $PAP^{-1}=\alpha I+\beta J$, where $J$ is the matrix of rotation by $\pi/2$. Let $\mb x(\tau):=P(\mb v(\tau)-\mb v^*)$, then
\begin{align*}
\frac{\d\mb x}{\d\tau}=P\mb F(\mb v)=(\alpha I+\beta J)\mb x+\mb g(\mb x),
\qquad\text{where}\qquad
\abs{\mb g(\mb x)}\le C|\mb x|^2 \ \text{ near } \mb 0,
\end{align*}
since $\mb F$ is real-analytic (in particular $C^2$) near $\mb v^*$ (where $v_1>0$) and so $\mb F(\mb v)=A(\mb v-\mb v^*)+O(|\mb v-\mb v^*|^2)$. The gaseous orbit is not the rest point itself ($v_1(\tau)\to0$ as $\tau\to-\infty$), so by uniqueness for the $C^1$ field it never reaches it: $r(\tau):=|\mb x(\tau)|>0$ for all $\tau$, while $r(\tau)\lesssim e^{-c\tau}\to0$ by Step~1. In polar coordinates $\mb x=r(\cos\theta,\sin\theta)$, and writing $\mb x\wedge\mb y:=x_1y_2-x_2y_1$, we have $\mb x\wedge(\alpha I+\beta J)\mb x=\beta r^2$, whence
\begin{align*}
\frac{\d\theta}{\d\tau}=\frac{\mb x\wedge\frac{\d}{\d\tau}\mb x}{r^2}=\beta+\frac{\mb x\wedge\mb g(\mb x)}{r^2}=\beta+O(r(\tau)).
\end{align*}
Since $\int^\infty r(\tau)\d\tau<\infty$, integrating gives $\theta(\tau)=\beta\tau+c_0+o(1)$: the polar angle of $\mb x$ tends to $\pm\infty$ linearly. For the velocity, identify $\R^2\cong\C$ so that $\alpha I+\beta J$ is multiplication by $\alpha+i\beta$; then $\arg((\alpha I+\beta J)\mb x)=\theta+\arg(\alpha+i\beta)$ exactly, while $|\mb g(\mb x)|/|(\alpha I+\beta J)\mb x|\le Cr/\sqrt{\alpha^2+\beta^2}\to0$, so a continuous polar angle of $\frac{\d}{\d\tau}\mb x$ equals $\theta(\tau)+\arg(\alpha+i\beta)+o(1)$ and also tends to $\pm\infty$ linearly.

It remains to remove the linear map $P$. If $Q$ is any fixed invertible $2\times2$ matrix, the induced map $\mb u\mapsto Q\mb u/\abs{Q\mb u}$ is a homeomorphism of the unit circle, of degree $\pm1$; any lift $\tilde q:\R\to\R$ of it therefore satisfies $\tilde q(\theta+2\pi)=\tilde q(\theta)\pm2\pi$, so $\tilde q(\theta)\mp\theta$ is bounded and $\tilde q(\theta)\to\pm\infty$ as $\theta\to+\infty$ (and correspondingly as $\theta\to-\infty$). Applying this with $Q=P^{-1}$ to the unit vectors $\mb x/|\mb x|$ and $\frac{\d}{\d\tau}\mb x/|\frac{\d}{\d\tau}\mb x|$ shows that the polar angles of $\mb v(\tau)-\mb v^*=P^{-1}\mb x$ and of $\frac{\d}{\d\tau}\mb v=P^{-1}\frac{\d}{\d\tau}\mb x$ tend to $+\infty$ or $-\infty$ linearly as well. This proves~(i).

\emph{Step 3: asymptotic direction at a node.} Suppose $\mathcal D\ge0$, so that $\lambda_-\le\lambda_+<0$ are real (negative by the exponential stability of Step~1) and $\mb v^*$ is a stable node of the real-analytic field $\mb F$. By the classical theory of planar hyperbolic nodes~\cite{Hartman_2002}, every orbit converging to $\mb v^*$ does so with a definite limiting secant direction: $(\mb v(\tau)-\mb v^*)/|\mb v(\tau)-\mb v^*|\to\mb e$ for some unit eigenvector $\mb e$ of $A$, say $A\mb e=\lambda\mb e$ with $\lambda\in\{\lambda_-,\lambda_+\}$. Then, since $\frac{\d}{\d\tau}\mb v=A(\mb v-\mb v^*)+O(|\mb v-\mb v^*|^2)$ with $|A(\mb v-\mb v^*)|\ge|\mb v-\mb v^*|/\|A^{-1}\|$,
\begin{align*}
\frac{\frac{\d}{\d\tau}\mb v}{|\frac{\d}{\d\tau}\mb v|}
=\frac{A\frac{\mb v-\mb v^*}{|\mb v-\mb v^*|}+O(\abs{\mb v-\mb v^*})}{\big|A\frac{\mb v-\mb v^*}{|\mb v-\mb v^*|}+O(\abs{\mb v-\mb v^*})\big|}
\to\frac{A\mb e}{|A\mb e|}=\frac{\lambda}{|\lambda|}\mb e=-\mb e ,
\end{align*}
a unit eigenvector of the linearisation. This proves~(ii).

\emph{Step 4: transcription to $\mathcal C$.} By Lemma~\ref{L:orbit}, $(R_\kappa,M_\kappa)=\Phi(\mb v(\tau_\kappa))$ with $\tau_\kappa\nearrow\sup I=\infty$ as $\kappa\to\infty$ (the support is infinite since $\gamma<\gamma_\sharp$). Continuity of $\Phi$ and Step~1 give $(R_\kappa,M_\kappa)\to\Phi(\mb v^*)=(R_\infty,M_\infty)$, which is~\eqref{E:LIMITPOINT}; that $(R_\infty,M_\infty)$ is the radius and mass of the truncated singular profile $\bar\rho_s(r)=v_1^*\,r^{-2/(2-\gamma)}$ is a direct computation: $\bar\rho_s(R)=1$ gives $R=(v_1^*)^{(2-\gamma)/2}=R_\infty$, and
\begin{align*}
4\pi\int_0^{R_\infty}r^{d-1}v_1^*r^{-\frac{2}{2-\gamma}}\d r
=\frac{4\pi v_1^*}{d-\frac{2}{2-\gamma}}R_\infty^{\,d-\frac{2}{2-\gamma}}
=v_2^*(v_1^*)^{\frac{2-\gamma}{2}\brac{d-\frac{2}{2-\gamma}}}
=(v_1^*)^{a}v_2^*=M_\infty,
\end{align*}
using $v_2^*=4\pi v_1^*/(d-\frac{2}{2-\gamma})$ from~\eqref{E:RESTPOINT} and the rest point equation, and $\frac{2-\gamma}{2}(d-\frac{2}{2-\gamma})=\frac{d(2-\gamma)-2}{2}=a$.

For the tangent, differentiating~\eqref{E:CURVEORBIT} in $\kappa$,
\begin{align*}
\brac{R'(\kappa),M'(\kappa)}=\frac{\d\tau_\kappa}{\d\kappa}D\Phi\brac{\mb v(\tau_\kappa)}\frac{\d}{\d\tau}\mb v(\tau_\kappa),
\qquad
\frac{\d\tau_\kappa}{\d\kappa}>0 ,
\end{align*}
so the tangent direction of $\mathcal C$ at $\kappa$ is that of $D\Phi(\mb v(\tau))\mb u(\tau)$ at $\tau=\tau_\kappa$, where $\mb u(\tau):=\frac{\d}{\d\tau}\mb v/|\frac{\d}{\d\tau}\mb v|$. As $\tau\to\infty$, $D\Phi(\mb v(\tau))\to B:=D\Phi(\mb v^*)$, which is invertible; hence $|D\Phi(\mb v(\tau))\mb u(\tau)-B\mb u(\tau)|\to0$ uniformly while $|B\mb u(\tau)|$ is bounded below, so the unit directions of $D\Phi(\mb v(\tau))\mb u(\tau)$ and of $B\mb u(\tau)$ are uniformly close for large $\tau$, and continuous lifts of their polar angles differ by $o(1)$ up to a constant in $2\pi\Z$. In case~(ii), $\mb u(\tau)\to-\mb e$, so the normalised tangent of $\mathcal C$ converges to $-B\mb e/|B\mb e|$. In case~(i), the lifted polar angle of $\mb u(\tau)$ tends to $\pm\infty$ by Step~2, hence, by the circle-homeomorphism argument of Step~2 applied to $Q=B$, so does that of $B\mb u(\tau)$, and therefore so does any continuous polar angle of the tangent of $\mathcal C$; the identical argument applied to the positions, using $\Phi(\mb v(\tau))-\Phi(\mb v^*)=B(\mb v(\tau)-\mb v^*)+O(|\mb v(\tau)-\mb v^*|^2)$ and Step~2 for the polar angle of $\mb v(\tau)-\mb v^*$, shows that the polar angle of $(R_\kappa,M_\kappa)-(R_\infty,M_\infty)$ tends to $\pm\infty$: the curve $\mathcal C$ spirals into $(R_\infty,M_\infty)$.
\end{proof}

The penultimate ingredient is the behaviour of the curve at its two ends: near the incompressible limit $\kappa\searrow1$, where the curve leaves the origin, and, in the compact-support regime, in the large central density limit, where it returns to it.

\begin{lemma}[Endpoints of the mass--radius curve]\label{L:endpoints}
Let $\gamma\in[1,2)$.
\begin{enumerate}[(i)]
\item As $\kappa\searrow1$ one has $S_\kappa^2=(\gamma d/2\pi)\brac{1-1/\kappa}\brac{1+o(1)}$ and $(R_\kappa,M_\kappa)\to(0,0)$, and there is $\kappa_s>1$ such that $R'>0$ and $M'>0$ on $(1,\kappa_s)$.
\item If $\gamma=\gamma_\sharp$, then explicitly
\begin{align}\label{E:SHARPEXPL}
R_\kappa=\frac{d}{\sqrt{2\pi}}\,\kappa^{-\frac{2}{d+2}}\brac{\kappa^{\frac{2}{d+2}}-1}^{\frac12},
\qquad
M_\kappa=\frac{4\pi}{d}\brac{\frac{d^2}{2\pi}}^{\frac d2}\kappa^{-\frac{2(d-1)}{d+2}}\brac{\kappa^{\frac{2}{d+2}}-1}^{\frac d2};
\end{align}
in particular $(R_\kappa,M_\kappa)\to(0,0)$ as $\kappa\to\infty$, and $\kappa\mapsto M_\kappa$ has exactly one critical point, a strict maximum, at $\kappa_1=(2(d-1)/(d-2))^{(d+2)/2}$, with $R'(\kappa_1)<0$.
\item If $\gamma_\sharp<\gamma<\gamma_*$, then $S_\kappa\to S_*<\infty$ and $(R_\kappa,M_\kappa)\to(0,0)$ as $\kappa\to\infty$, and there is $\kappa_\ell>1$ such that $M'<0$ and $R'<0$ for $\kappa\ge\kappa_\ell$.
\end{enumerate}
\end{lemma}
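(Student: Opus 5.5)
The plan is to treat the three regimes separately, always working through the scaling representation~\eqref{E:RMK} (equivalently, through Lemma~\ref{L:orbit}). Each endpoint statement then reduces to one-variable asymptotics of the unit gaseous profile $\bar\rho_*$ near $s=0$ or near the edge of its support. Throughout, $\kappa\mapsto S_\kappa$ is strictly increasing and real-analytic, with $\partial_\kappa S_\kappa=-1/(\kappa^2\bar\rho_*'(S_\kappa))>0$.

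\emph{Part (i).} I would expand $\bar\rho_*$ at the centre using the hydrostatic identity~\eqref{E:HYDRO} for the unit-central-density profile. Since $m_*(s)=\frac{4\pi}{d}s^d(1+O(s^2))$, we get $\partial_s\bar\rho_*^\gamma=-\frac{4\pi}{d}s(1+O(s^2))$. Hence $\bar\rho_*^\gamma=1-\frac{2\pi}{d}s^2+O(s^4)$, and so $\bar\rho_*(s)=1-\frac{2\pi}{\gamma d}s^2+O(s^4)$. The centre expansion is even in $s$ by Lemma~\ref{L:joint smoothness}. Solving $\bar\rho_*(S_\kappa)=1/\kappa$ gives $S_\kappa^2=(\gamma d/2\pi)(1-1/\kappa)(1+o(1))$. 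By~\eqref{E:RMK}, $R_\kappa=S_\kappa(1+O(S_\kappa^2))$ and $M_\kappa=\frac{4\pi}{d}S_\kappa^d(1+O(S_\kappa^2))$, both tending to $0$.

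For the signs of the derivatives, I would use the chain rule through $S$. Both $R$ and $M$ are real-analytic functions of $S^2$ of the form $S\,g_1(S^2)$ and $S^d g_2(S^2)$, with $g_j(0)>0$; this uses that $\kappa=1/\bar\rho_*(S)$ is analytic in $S^2$. Therefore $\partial_S R\sim1$ and $\partial_S M\sim 4\pi S^{d-1}$, both positive for small $S$. Multiplying by $\partial_\kappa S_\kappa>0$ gives $R',M'>0$ on some $(1,\kappa_s)$.

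\emph{Part (ii).} At $\gamma=\gamma_\sharp$ we have $\gamma-1=(d-2)/(d+2)$, and I would use the Aubin--Talenti ansatz $\bar\rho_*(s)=(1+s^2/b)^{-(d+2)/2}$. The constant $b$ is fixed by matching the centre expansion of part (i), which gives $b=\gamma_\sharp d(d+2)/(4\pi)=d^2/(2\pi)$. The ODE~\eqref{E:LEODE} is then verified directly: $m_*$ has the closed form $\frac{4\pi}{d}s^d(1+s^2/b)^{-d/2}$, which one checks by differentiation, and~\eqref{E:HYDRO} becomes an algebraic identity. With $x:=\kappa^{2/(d+2)}$, the relation $\bar\rho_*(S_\kappa)=1/\kappa$ gives $S_\kappa^2=b(x-1)$. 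Substituting into~\eqref{E:RMK} yields~\eqref{E:SHARPEXPL}, namely $R\propto x^{-1}(x-1)^{1/2}$ and $M\propto x^{-(d-1)}(x-1)^{d/2}$. Both expressions vanish as $x\to\infty$.

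Logarithmic differentiation in $x$ gives $\partial_x\ln M=\frac{d}{2(x-1)}-\frac{d-1}{x}$. This has the single zero $x=2(d-1)/(d-2)$, and it is positive before and negative after that zero, so the mass has a strict maximum there. Likewise $\partial_x\ln R=\frac{1}{2(x-1)}-\frac1x$ is negative for $x>2$. Since $2(d-1)/(d-2)>2$, we get $R'(\kappa_1)<0$.

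\emph{Part (iii).} Here $\bar\rho_*$ has compact support $[0,S_*]$, so $S_\kappa\nearrow S_*$ and $m_*(S_\kappa)\to m_*(S_*)<\infty$. Since $a>0$ (because $\gamma<\gamma_*$) and $2-\gamma>0$, formula~\eqref{E:RMK} gives $(R_\kappa,M_\kappa)\to(0,0)$. For the derivatives I would compute
\[
\kappa\,\partial_\kappa\ln R_\kappa=-\tfrac{2-\gamma}{2}+\kappa\,\tfrac{\partial_\kappa S_\kappa}{S_\kappa},\qquad
\kappa\,\partial_\kappa\ln M_\kappa=-a+\kappa\,\tfrac{m_*'(S_\kappa)\,\partial_\kappa S_\kappa}{m_*(S_\kappa)}.
\]
It then suffices to show $\kappa\,\partial_\kappa S_\kappa\to0$, since $m_*'$ is bounded and $m_*(S_\kappa)$ stays bounded below; both derivatives are then eventually negative.

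This is the main obstacle: it requires the behaviour of $\bar\rho_*$ at its vacuum edge. Since $\gamma>\gamma_\sharp>1$, the enthalpy $w=\bar\rho_*^{\gamma-1}$ satisfies $\partial_s w=-\frac{\gamma-1}{\gamma}s^{1-d}m_*(s)$, which tends to a strictly negative limit at $S_*$. Hence $w\sim c(S_*-s)$, which gives $\bar\rho_*\sim c'(S_*-s)^{1/(\gamma-1)}$ and $\bar\rho_*'\sim -c''(S_*-s)^{(2-\gamma)/(\gamma-1)}$. From $\bar\rho_*(S_\kappa)=1/\kappa$ we get $S_*-S_\kappa\sim\kappa^{-(\gamma-1)}$, so $\kappa^2|\bar\rho_*'(S_\kappa)|\sim\kappa^{2-(2-\gamma)}=\kappa^\gamma$. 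Consequently $\kappa\,\partial_\kappa S_\kappa\sim\kappa^{1-\gamma}\to0$. I expect the care to be needed in making these edge asymptotics rigorous, which should follow from the integral form of the equation for $w$ near $S_*$. The sign conclusions for $\kappa\ge\kappa_\ell$ then follow.
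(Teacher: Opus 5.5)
Your proposal is correct and follows the paper's proof in all three parts: the centre expansion from the hydrostatic identity in (i), the explicit profile with $x=\kappa^{2/(d+2)}$ in (ii) (which you verify directly rather than cite), and the $C^1$ enthalpy $w=\bar\rho_*^{\gamma-1}$ with $\partial_y w(S_*)<0$ in (iii). The edge step you flag as the main obstacle is in fact immediate, exactly as in the paper: differentiating $w(S_\kappa)=\kappa^{1-\gamma}$ gives $\partial_\kappa S_\kappa=(\gamma-1)\kappa^{-\gamma}/|\partial_y w(S_\kappa)|$. Hence no separate asymptotics for $\bar\rho_*$ or $S_*-S_\kappa$ near the edge are needed.
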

\begin{proof}
Throughout we use $R_\kappa=\kappa^{-\frac{2-\gamma}{2}}S_\kappa$ and $M_\kappa=\kappa^{-a}m_*(S_\kappa)$ from~\eqref{E:RMK}, with $a$ as in~\eqref{E:PHI}, together with $m_*'(S)=4\pi S^{d-1}\bar\rho_*(S)$ and, from differentiating $\bar\rho_*(S_\kappa)=1/\kappa$,
\begin{align}\label{E:SKPRIME}
\frac{\d S_\kappa}{\d\kappa}=\frac{1}{\kappa^2|\partial_y\bar\rho_*(S_\kappa)|}>0.
\end{align}

(i) Integrating the hydrostatic identity $\partial_y\bar\rho_*^\gamma=-4\pi\bar\rho_*y^{1-d}\int_0^ys^{d-1}\bar\rho_*\d s=-(4\pi/d)y(1+O(y^2))$ near the origin gives $\bar\rho_*^\gamma(y)=1-(2\pi/d)y^2+O(y^4)$, hence
\begin{align*}
\bar\rho_*(y)=1-\frac{2\pi}{\gamma d}y^2+O(y^4),
\qquad
\partial_y\bar\rho_*(y)=-\frac{4\pi}{\gamma d}y(1+O(y^2)).
\end{align*}
Solving $\bar\rho_*(S_\kappa)=1/\kappa$ for $S_\kappa\to0$ yields the stated asymptotics of $S_\kappa^2$, and then $R_\kappa\to0$, $m_*(S_\kappa)=(4\pi/d)S_\kappa^d(1+o(1))\to0$, so $M_\kappa\to0$. By~\eqref{E:SKPRIME}, $\frac{\d}{\d\kappa}S_\kappa=(\gamma d/4\pi)\kappa^{-2}S_\kappa^{-1}(1+o(1))\to+\infty$, so
\begin{align*}
R'=\kappa^{-\frac{2-\gamma}{2}}\brac{\frac{\d S_\kappa}{\d\kappa}-\frac{2-\gamma}{2\kappa}S_\kappa}>0
\end{align*}
for $\kappa$ close to $1$, while
\begin{align*}
M'=\kappa^{-a}\brac{m_*'(S_\kappa)\frac{\d S_\kappa}{\d\kappa}-\frac a\kappa m_*(S_\kappa)}
=\kappa^{-a}\brac{\gamma d\kappa^{-3}S_\kappa^{d-2}\brac{1+o(1)}+O(S_\kappa^d)}>0
\end{align*}
for $\kappa$ close to $1$, since $S_\kappa^d=o(S_\kappa^{d-2})$; here we used $m_*'(S_\kappa)=4\pi S_\kappa^{d-1}\bar\rho_*(S_\kappa)=4\pi S_\kappa^{d-1}\kappa^{-1}$.

(ii) At $\gamma=\gamma_\sharp$ the unit gaseous profile is explicit~\cite{Lam_2024}: $\bar\rho_*(s)=(1+(2\pi/d^2)s^2)^{-(d+2)/2}$. From the antiderivative identity
\begin{align*}
\frac{\d}{\d s}\sbrac{s^d(1+cs^2)^{-\frac d2}}=ds^{d-1}(1+cs^2)^{-\frac{d+2}{2}},
\qquad c:=\frac{2\pi}{d^2},
\end{align*}
we get $m_*(S)=(4\pi/d)S^d(1+cS^2)^{-d/2}$. The relation $\bar\rho_*(S_\kappa)=1/\kappa$ reads $1+cS_\kappa^2=\kappa^{2/(d+2)}=:x$, i.e. $S_\kappa^2=(d^2/2\pi)(x-1)$, and substituting into~\eqref{E:RMK}, with $(2-\gamma_\sharp)/2=2/(d+2)$ and $a=(d-2)/(d+2)$, yields~\eqref{E:SHARPEXPL}. In the variable $x=\kappa^{2/(d+2)}>1$, which is an increasing function of $\kappa$,
\begin{align*}
M=\text{const}\cdot(x-1)^{\frac d2}\,x^{-(d-1)},
\qquad
\frac{\d M}{\d x}=\text{const}\cdot(x-1)^{\frac d2-1}x^{-d}\brac{(d-1)-\frac{d-2}{2}x},
\end{align*}
so $M$ is strictly increasing for $x<2(d-1)/(d-2)$ and strictly decreasing for $x>2(d-1)/(d-2)$: the unique critical point is the strict maximum at $x_1=2(d-1)/(d-2)$, i.e.\ at $\kappa_1=\smash{x_1^{(d+2)/2}}$. Similarly $R^2=(d^2/2\pi)(x-1)x^{-2}$ satisfies $\frac{\d}{\d x}[(x-1)x^{-2}]=x^{-3}(2-x)$, so $R$ is strictly decreasing for $x>2$; since $x_1=2+2/(d-2)>2$, we get $R'(\kappa_1)<0$. Finally $M\sim x^{1-d/2}\to0$ and $R^2\sim(d^2/2\pi)x^{-1}\to0$ as $x\to\infty$.

(iii) Here $\gamma>\gamma_\sharp>1$ and $\bar\rho_*$ has compact support $[0,S_*]$. The enthalpy $w:=\bar\rho_*^{\gamma-1}$ satisfies, by the hydrostatic identity,
\begin{align*}
\partial_yw=\frac{\gamma-1}{\gamma}\,\bar\rho_*^{-1}\partial_y\bar\rho_*^{\gamma}=-\frac{\gamma-1}{\gamma}\,y^{1-d}m_*(y),
\end{align*}
whose right side is continuous up to $S_*$; hence $w\in C^1$ up to the support boundary with
\begin{align*}
\partial_yw(S_*)=-\frac{\gamma-1}{\gamma}\,S_*^{1-d}m_*(S_*)=:-\omega<0.
\end{align*}
The relation $\bar\rho_*(S_\kappa)=1/\kappa$ reads $w(S_\kappa)=\kappa^{-(\gamma-1)}$, so $S_\kappa\to S_*$ as $\kappa\to\infty$ and, differentiating, $\frac{\d}{\d\kappa}S_\kappa=(\gamma-1)\omega^{-1}\kappa^{-\gamma}(1+o(1))$. Then $m_*(S_\kappa)\to m_*(S_*)>0$ and
\begin{align*}
M'&=\kappa^{-a-1}\brac{\kappa m_*'(S_\kappa)\frac{\d S_\kappa}{\d\kappa}-am_*(S_\kappa)}\\
&=\kappa^{-a-1}\brac{4\pi S_*^{d-1}\,\frac{\gamma-1}{\omega}\kappa^{-\gamma}\brac{1+o(1)}-am_*(S_*)\brac{1+o(1)}}<0
\end{align*}
for all large $\kappa$, since $\gamma>1$ and $a>0$ for $\gamma<\gamma_*$; here again $m_*'(S_\kappa)=4\pi S_\kappa^{d-1}\kappa^{-1}$. Likewise
\begin{align*}
R'=\kappa^{-\frac{2-\gamma}{2}-1}\brac{\kappa\frac{\d S_\kappa}{\d\kappa}-\frac{2-\gamma}{2}S_\kappa}
=\kappa^{-\frac{2-\gamma}{2}-1}\brac{O(\kappa^{1-\gamma})-\frac{2-\gamma}{2}S_*(1+o(1))}<0
\end{align*}
for all large $\kappa$. Finally $R_\kappa=\kappa^{-\frac{2-\gamma}{2}}S_\kappa\to0$ and $M_\kappa=\kappa^{-a}m_*(S_\kappa)\to0$.
\end{proof}

The last ingredient is stability at small relative central density; we include the short proof to keep the counting theory self-contained (cf.~\cite{Lam_2024}, where this was first proven).

\begin{lemma}[Stability at small relative central density]\label{L:smallkappa}
There exists $\kappa_0>1$ such that $\nmodes(\kappa)=0$ for all $\kappa\in(1,\kappa_0)$.
\end{lemma}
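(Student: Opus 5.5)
The plan is to show directly that the energy form~\eqref{E:TE} is positive definite on $H_\kappa$ for all $\kappa$ close to $1$. Then $\Mor(\L_\kappa)=0$, and Theorem~\ref{T:COUNT}(i), i.e. \eqref{E:COUNT}, gives $\nmodes(\kappa)=0$. If $\gamma\ge\gamma_*$ there is nothing to do: $\partial_y\bar\rho_\kappa^\gamma\le0$ by~\eqref{E:HYDRO}, so the potential term of~\eqref{E:TE} is non-negative and the form is $\ge0$ for every $\kappa$. Hence I assume $\gamma<\gamma_*$, where the potential term is the only negative contribution.

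\textbf{Step 1: pass to the fixed interval.} I work in the rescaled variables of Step~1 of the proof of Lemma~\ref{L:jump}. By~\eqref{E:PULLBACK}, positivity of $\<\L_\kappa\chi,\chi\>$ is equivalent to positivity of
\[
Q_\kappa[\tilde\chi,\tilde\chi]=\int_0^1\brac{\gamma\sigma_\kappa^\gamma z^{d+1}(\partial_z\tilde\chi)^2+W_\kappa z^{d+1}\tilde\chi^2}\d z+d\gamma\tilde\chi(1)^2 .
\]
By~\eqref{E:WDEF} and $1\le\sigma_\kappa\le\kappa$, the potential coefficient satisfies
\[
|W_\kappa|\le \tfrac{4\pi}{d}\,d(\gamma_*-\gamma)\kappa^2R_\kappa^2 .
\]
Lemma~\ref{L:endpoints}(i) gives $R_\kappa\to0$ as $\kappa\searrow1$, so $\sup_{[0,1]}|W_\kappa|\to0$. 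Note that this estimate does not require the family to be defined at $\kappa=1$; only the explicit bound above is used.

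\textbf{Step 2: a $\kappa$-independent weighted Poincar\'e inequality with boundary term.} I claim there is $c_0=c_0(d,\gamma)>0$ such that
\[
\gamma\int_0^1z^{d+1}(\partial_z\tilde\chi)^2\d z+d\gamma\tilde\chi(1)^2\ \ge\ c_0\int_0^1z^{d+1}\tilde\chi^2\,\d z
\qquad\text{for all }\tilde\chi\in X .
\]
The proof starts from $\tilde\chi(z)=\tilde\chi(1)-\int_z^1\partial_s\tilde\chi\,\d s$ and applies Cauchy--Schwarz:
\[
\Big(\int_z^1|\partial_s\tilde\chi|\,\d s\Big)^2\le\int_0^1s^{d+1}(\partial_s\tilde\chi)^2\,\d s\cdot\int_z^1s^{-d-1}\,\d s\le \frac{z^{-d}}{d}\int_0^1s^{d+1}(\partial_s\tilde\chi)^2\,\d s .
\]
Multiplying by $z^{d+1}$ leaves the integrable weight $z$. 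Integrating over $[0,1]$ therefore gives
\[
\int_0^1 z^{d+1}\tilde\chi^2\,\d z\le 2\tilde\chi(1)^2+\tfrac1d\int_0^1s^{d+1}(\partial_s\tilde\chi)^2\,\d s .
\]
This yields the claim. One checks it first for smooth $\tilde\chi$ and then extends by density in $X$. Using also $\sigma_\kappa^\gamma\ge1$, we get $Q_\kappa[\tilde\chi,\tilde\chi]\ge(c_0-\sup|W_\kappa|)\int_0^1 z^{d+1}\tilde\chi^2\,\d z$.

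\textbf{Step 3: conclude.} Choose $\kappa_0>1$ so that $\sup|W_\kappa|<c_0/2$ on $(1,\kappa_0)$. Then, using $\sigma_\kappa\le\kappa$, we have $Q_\kappa[\tilde\chi,\tilde\chi]\ge\frac{c_0}{2\kappa}\ell_\kappa[\tilde\chi,\tilde\chi]>0$ for $\tilde\chi\neq0$. Hence $\Mor(\L_\kappa)=0$, and the claim follows from~\eqref{E:COUNT}. As a by-product, this shows $\lambda_1(\kappa)>0$, so there is no kernel either; this is consistent with $M'>0$ near $1$ from Lemma~\ref{L:endpoints}(i).

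The only real obstacle is Step~2: one needs the weighted Poincar\'e constant to be uniform in $\kappa$ even though no boundary condition is imposed on $H_\kappa$. It is exactly the liquid boundary term $d\gamma\tilde\chi(1)^2$ that excludes constants and makes the Hardy-type estimate work, and the degenerate weight $z^{d+1}$ at the centre is harmless since $d\ge3$.
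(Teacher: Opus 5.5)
Your proof is correct and is essentially the paper's argument: the same rescaled pencil $Q_\kappa$, the same bound $\sup|W_\kappa|\le 4\pi|\gamma_*-\gamma|\kappa^2R_\kappa^2\to0$ via Lemma~\ref{L:endpoints}(i), and the same Cauchy--Schwarz derivation of the weighted Poincar\'e inequality with the liquid boundary term. The differences are cosmetic. Your constant $2$ in front of $\tilde\chi(1)^2$ is looser than the paper's $\frac{2}{d+2}$, which your own computation gives once you use $\int_0^1 z^{d+1}\,\d z=\frac{1}{d+2}$. You also absorb everything into the bound $Q_\kappa\ge\frac{c_0}{2\kappa}\ell_\kappa$, whereas the paper keeps separate positive coefficients on the gradient and boundary terms; your route spares you the paper's closing remark that a constant vanishing at $z=1$ must be zero.
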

\begin{proof}
We use the rescaled pencil of Step~1 of the proof of Lemma~\ref{L:jump}. By~\eqref{E:WDEF} and the bounds $1\le\sigma_\kappa\le\kappa$ of~\eqref{E:RHOBOUNDS},
\begin{align*}
\sup_{[0,1]}\,\abs{W_\kappa}\le 4\pi|\gamma_*-\gamma|\kappa^2R_\kappa^2\to 0
\qquad\text{as}\quad\kappa\searrow1,
\end{align*}
since $R_\kappa\to0$ by Lemma~\ref{L:endpoints}(i). Next, for $\tilde\chi\in X=H^1([0,1],z^{d+1})$ we claim the weighted Poincar\'e inequality
\begin{align}\label{E:POINCARE}
\int_0^1z^{d+1}\tilde\chi^2\d z\le\frac{2}{d+2}\tilde\chi(1)^2+\frac1d\int_0^1z^{d+1}(\partial_z\tilde\chi)^2\d z.
\end{align}
Indeed, writing $\tilde\chi(z)=\tilde\chi(1)-\int_z^1\partial_z\tilde\chi\,\d s$ and using $(a+b)^2\le2a^2+2b^2$ followed by the Cauchy--Schwarz inequality
\begin{align*}
\brac{\int_z^1\abs{\partial_z\tilde\chi}\d s}^2
\le\int_z^1s^{d+1}(\partial_z\tilde\chi)^2\d s\int_z^1s^{-(d+1)}\d s
\le\frac{z^{-d}}{d}\int_0^1s^{d+1}(\partial_z\tilde\chi)^2\d s,
\end{align*}
we get, upon multiplying by $z^{d+1}$ and integrating (with $\int_0^1z^{d+1}\d z=\frac1{d+2}$ and $\int_0^1z\,\d z=\frac12$), the inequality~\eqref{E:POINCARE}. Since $\sigma_\kappa\ge1$ and the boundary term of~\eqref{E:QLDEF} is non-negative, \eqref{E:POINCARE} gives, for every $\tilde\chi\in X$,
\begin{align*}
Q_\kappa[\tilde\chi,\tilde\chi]
&\ge\gamma\int_0^1z^{d+1}(\partial_z\tilde\chi)^2\d z+d\gamma\tilde\chi(1)^2-\sup_{[0,1]}\abs{W_\kappa}\int_0^1z^{d+1}\tilde\chi^2\d z\\
&\ge\brac{\gamma-\frac{\sup\abs{W_\kappa}}{d}}\int_0^1z^{d+1}(\partial_z\tilde\chi)^2\d z+\brac{d\gamma-\frac{2\sup\abs{W_\kappa}}{d+2}}\tilde\chi(1)^2.
\end{align*}
Choose $\kappa_0>1$ so small that $\sup_{[0,1]}\abs{W_\kappa}<d\gamma$ for $\kappa\in(1,\kappa_0)$; then both coefficients are positive, and $Q_\kappa[\tilde\chi,\tilde\chi]>0$ for every $\tilde\chi\in X\setminus\set0$ (if both terms vanish, $\tilde\chi$ is a constant vanishing at $z=1$, hence zero). By~\eqref{E:PULLBACK}, $\<\L_\kappa\chi,\chi\>=R_\kappa^d\,Q_\kappa[\tilde\chi,\tilde\chi]>0$ for every $\chi\in H_\kappa\setminus\set0$, so $\nmodes(\kappa)=\Mor(\L_\kappa)=0$ for $\kappa\in(1,\kappa_0)$.
\end{proof}

With all the lemmas in place we prove the turning point principle.

\begin{proof}[Proof of Theorem~\ref{T:TPP}]
Throughout, $\nmodes(\kappa)=\Mor(\L_\kappa)$ by Theorem~\ref{T:COUNT}, mass critical points are isolated, $\brac{R'(\kappa),M'(\kappa)}\neq(0,0)$ for every $\kappa$, and~\eqref{E:MORLOC} holds near every mass critical point, all by Lemma~\ref{L:jump}. It is convenient to prove parts~(i), (ii)(a) and~(iii) first, and part~(ii)(b) last.

\emph{Part (i).} On an interval free of mass critical points, $\Mor(\L_\kappa)$ is constant by Lemma~\ref{L:jump}. If $\kappa_0$ is not a mass critical point then, $M'$ being continuous and the critical points isolated, a whole neighbourhood of $\kappa_0$ is free of mass critical points and the count is constant there; hence the count can change only at mass critical points. At a mass critical point with $\partial_\kappa^2M_\kappa\neq0$, Lemma~\ref{L:jump} gives the jump $\pm1$ according to the direction of the sign change of the bending product, equal to the jump of the turning index $i_\kappa$. The geometric formulation is the final sentence of Lemma~\ref{L:winding}: a growing mode is gained exactly when the tangent of $\mathcal C$ crosses the horizontal counter-clockwise, i.e.\ when $\mathcal C$ bends counter-clockwise at the mass extremum, and lost at a clockwise crossing.

\emph{Part (ii)(a).} Let $\gamma\ge\gamma_*$. Then $\gamma_*-\gamma\le0$ while $\partial_y\bar\rho_\kappa^\gamma<0$ by Lemma~\ref{L:PROFBND}, so all three terms of the energy form~\eqref{E:TE} are non-negative: $\<\L_\kappa\chi,\chi\>\ge0$ for every $\chi\in H_\kappa$ and every $\kappa$, whence $\Mor(\L_\kappa)=0$ and $\nmodes\equiv0$ (cf.~\cite{Lam_2024}). For the monotonicity, in~\eqref{E:RMK} the prefactor $\kappa^{\frac12(d\gamma-2(d-1))}$ is positive and non-decreasing (the exponent is $\ge0$), while $\kappa\mapsto m_*(S_\kappa)$ is positive and strictly increasing, since $S_\kappa=\bar\rho_*^{-1}(1/\kappa)$ is strictly increasing and $m_*'=4\pi S^{d-1}\bar\rho_*>0$ on the interior of the support. Hence $\kappa\mapsto M_\kappa$ is strictly increasing and $\mathcal C$ has no turning point.

\emph{Part (iii)(a).} Let $\gamma<\gamma_\sharp$ and $\mathcal D(\gamma,d)<0$; by Proposition~\ref{P:geometry} this holds automatically when $d<10$. Let $\varphi:(1,\infty)\to\R$ be the continuous tangent angle of Lemma~\ref{L:winding}. By Proposition~\ref{P:geometry}(i) and its final assertion, $\varphi(\kappa)\to+\infty$ or $\varphi(\kappa)\to-\infty$ as $\kappa\to\infty$. Fix a non-critical $a_0\in(1,\kappa_s)$ with $\kappa_s$ as in Lemma~\ref{L:endpoints}(i). For every non-critical $b$, Lemma~\ref{L:winding} gives
\begin{align*}
\nmodes(b)=\nmodes(a_0)+\left\lfloor\frac{\varphi(b)}{\pi}\right\rfloor-\left\lfloor\frac{\varphi(a_0)}{\pi}\right\rfloor.
\end{align*}
Were $\varphi(\kappa)\to-\infty$, the right side would tend to $-\infty$ along non-critical $b\to\infty$, contradicting $\nmodes\ge0$. Hence $\varphi(\kappa)\to+\infty$, and $\nmodes(b)\to\infty$ along non-critical $b\to\infty$. At a mass critical point $\kappa_0$, \eqref{E:MORLOC} gives $\nmodes(\kappa_0)=\Mor(\L_{\kappa_0})\ge\nmodes(b)-1$ for nearby non-critical $b$, so in fact $\nmodes(\kappa)\to\infty$ unrestrictedly, which is~\eqref{E:INFTY}. Since $\varphi$ is continuous and tends to $+\infty$, it crosses every sufficiently large level $k\pi$, $k\in\Z$, so $M'=\abs{(R',M')}\sin\varphi$ vanishes at infinitely many parameters: $\mathcal C$ has infinitely many turning points. Finally, $\mathcal C$ spirals into $(R_\infty,M_\infty)$ by Proposition~\ref{P:geometry}.

\emph{Part (iii)(b).} Let $\gamma<\gamma_\sharp$ and $\mathcal D(\gamma,d)\ge0$; by Proposition~\ref{P:geometry} this forces $d\ge10$. By Lemma~\ref{L:orbit}, $M_\kappa=F(\mb v(\tau_\kappa))$ with the mass observable $F(v_1,v_2):=v_1^{a}v_2$, so
\begin{align}\label{E:MASSALONG}
M'(\kappa)=\frac{\d\tau_\kappa}{\d\kappa}\grad F(\mb v(\tau))\cdot\frac{\d}{\d\tau}\mb v(\tau)\Big|_{\tau=\tau_\kappa}
=\frac{\d\tau_\kappa}{\d\kappa}\,\Big|\frac{\d}{\d\tau}\mb v(\tau_\kappa)\Big|\grad F(\mb v(\tau_\kappa))\cdot\mb u(\tau_\kappa),
\end{align}
where $\mb u:=\frac{\d}{\d\tau}\mb v/|\frac{\d}{\d\tau}\mb v|$ and $\frac{\d}{\d\kappa}\tau_\kappa>0$. By Proposition~\ref{P:geometry}(ii), $\mb u(\tau)\to\mb e_0$, a unit eigenvector of $A=\grad\mb F(\mb v^*)$, while $\grad F(\mb v(\tau))\to\grad F(\mb v^*)$. We claim
\begin{align}\label{E:NOTANGENCY}
\grad F(\mb v^*)\cdot\mb e\neq0\qquad\text{for \emph{every} eigenvector $\mb e$ of $A$.}
\end{align}
Granting~\eqref{E:NOTANGENCY}, the right side of~\eqref{E:MASSALONG} has a constant sign for all large $\kappa$, so the mass critical points do not accumulate at $\infty$; being isolated (Lemma~\ref{L:jump}), they are finitely many, so $\mathcal C$ has finitely many turning points, and by Lemma~\ref{L:jump} the count $\nmodes(\kappa)$ is constant beyond the last of them: it stabilises to a finite value. The convergence of $\mathcal C$ to $(R_\infty,M_\infty)$ with an asymptotic tangent is Proposition~\ref{P:geometry}.

To prove~\eqref{E:NOTANGENCY}, note first that $\grad F(\mb v^*)=(a(v_1^*)^{a-1}v_2^*,(v_1^*)^{a})$ is parallel to $\mb w:=(av_2^*/v_1^*,1)$. From~\eqref{E:RESTPOINT} and the rest point equation $4\pi v_1^*=(d-2/(2-\gamma))v_2^*$ we get $v_2^*/v_1^*=4\pi/b$ with $b:=d-2/(2-\gamma)$, and since $2a=b(2-\gamma)$,
\begin{align*}
w_1=\frac{av_2^*}{v_1^*}=\frac{4\pi a}{b}=2\pi(2-\gamma).
\end{align*}
Next, for a $2\times2$ matrix $A=(A_{ij})$ and a non-zero vector $\mb w$, denote by $J$ the rotation by $\pi/2$, so that the vectors orthogonal to $\mb w$ are the multiples of $J\mb w=(-w_2,w_1)$; then $\mb w$ is orthogonal to an eigenvector of $A$ if and only if $J\mb w$ is an eigenvector, i.e.\ if and only if $\det(J\mb w,AJ\mb w)=0$, and expanding the determinant,
\begin{align*}
\det(J\mb w,AJ\mb w)=-\brac{A_{12}w_1^2+(A_{22}-A_{11})w_1w_2-A_{21}w_2^2}=:-Q(\mb w).
\end{align*}
Here, computing the linearisation of~\eqref{E:DS} at $\mb v^*$ (using $\frac1\gamma(v_1^*)^{1-\gamma}v_2^*=\frac{2}{2-\gamma}$ and $\frac{2-\gamma}{\gamma}(v_1^*)^{2-\gamma}=\frac{b}{2\pi}$, both from~\eqref{E:RESTPOINT}),
\begin{align*}
A=\mat{\displaystyle\frac{2}{2-\gamma}-2 &\displaystyle -\frac{a}{\pi(2-\gamma)^2}\\[1mm] 4\pi & -b},
\end{align*}
and therefore, with $\mb w=\brac{2\pi(2-\gamma),1}$ and using $b(2-\gamma)=2a$ repeatedly,
\begin{align*}
Q(\mb w)&=-\frac{a}{\pi(2-\gamma)^2}4\pi^2(2-\gamma)^2+\brac{-b-\frac{2}{2-\gamma}+2}2\pi(2-\gamma)-4\pi\\
&=-4\pi a-4\pi a-4\pi+4\pi(2-\gamma)-4\pi
=-4\pi(2a+\gamma)
=-4\pi(2(d-1)-d\gamma+\gamma)\\
&=-4\pi(d-1)(2-\gamma)\neq0.
\end{align*}
Hence $\mb w$, and with it $\grad F(\mb v^*)$, is orthogonal to no eigenvector of $A$, which is~\eqref{E:NOTANGENCY}.

\emph{Part (iii)(c).} Let $\gamma_\sharp\le\gamma<\gamma_*$. Suppose first $\gamma=\gamma_\sharp$. By Lemma~\ref{L:endpoints}(ii) the mass has the single critical point $\kappa_1=(2(d-1)/(d-2))^{(d+2)/2}$, a strict maximum with $R'(\kappa_1)<0$; the bending product $M'R'$ therefore passes from $(+)(-)<0$ to $(-)(-)>0$ across $\kappa_1$, so by Lemma~\ref{L:jump} the count jumps by $+1$ there and is constant elsewhere. By Lemma~\ref{L:smallkappa}, $\nmodes=0$ near $\kappa=1$, whence $\nmodes(\kappa)=0$ for $1<\kappa<\kappa_1$, also $\nmodes(\kappa_1)=\Mor(\L_{\kappa_1})=0$ by~\eqref{E:MORLOC} (equal to the value on the side where $M'R'<0$), and $\nmodes(\kappa)=1$ for $\kappa>\kappa_1$: the count stabilises to $1$. The endpoint limits $(R_\kappa,M_\kappa)\to(0,0)$ at both ends are Lemma~\ref{L:endpoints}(i)--(ii).

Suppose now $\gamma_\sharp<\gamma<\gamma_*$. By Lemma~\ref{L:endpoints}(i) and~(iii), $M'>0$ on $(1,\kappa_s)$ and $M'<0$ on $[\kappa_\ell,\infty)$, so all mass critical points lie in the compact interval $[\kappa_s,\kappa_\ell]$; being isolated (Lemma~\ref{L:jump}), they are finitely many, and $\mathcal C$ has finitely many turning points, with $(R_\kappa,M_\kappa)\to(0,0)$ at both ends of the family: $\mathcal C$ is a single arc leaving and returning to the origin. Beyond the last critical point the count is constant (Lemma~\ref{L:jump}); call this value $\nmodes(\infty)$. Since $M'$ is positive near $\kappa=1$ and negative for large $\kappa$, and has a constant sign between consecutive critical points, the number of critical points at which $M'$ changes sign is \emph{odd}. By Lemma~\ref{L:jump} the count jumps by $\pm1$ at each sign-changing critical point (there $R'\neq0$, so the bending product changes sign with $M'$) and by $0$ at every other critical point. Using again Lemma~\ref{L:smallkappa} to start the count at $\nmodes=0$ near $\kappa=1$, the stabilised value $\nmodes(\infty)$ is a sum of an odd number of terms $\pm1$, hence odd; being non-negative, $\nmodes(\infty)\ge1$, so the stars are linearly unstable at all sufficiently large central densities.

\emph{Part (ii)(b).} Let $\gamma<\gamma_*$, and suppose $\gamma\ge\gamma_\sharp$ or $\mathcal D(\gamma,d)<0$; by Proposition~\ref{P:geometry} the latter holds automatically when $d<10$ and $\gamma<\gamma_\sharp$, so the hypothesis contains the case $d<10$. By Lemma~\ref{L:endpoints}(i), $M'>0$ on $(1,\kappa_s)$, and by Lemma~\ref{L:smallkappa}, after shrinking $\kappa_s$, also $\nmodes=0$ on $(1,\kappa_s)$. By part~(iii) --- case~(a) when $\gamma<\gamma_\sharp$ (where then $\mathcal D<0$), and case~(c) when $\gamma_\sharp\le\gamma<\gamma_*$ --- we have $\nmodes(\kappa)\ge1$ for all sufficiently large $\kappa$, recovering --- and, when $d\ge10$ and $\gamma\in[\gamma_\sharp,\gamma_*)$, sharpening --- the large central density instability of~\cite{Lam_2024}. In particular $\nmodes$ is non-constant, so by Lemma~\ref{L:jump} there exists a mass critical point with a non-zero jump, i.e.\ one at which the bending product, equivalently $M'$ (as $R'\neq0$ there), changes sign.

Since $M'>0$ on $(1,\kappa_s)$, all mass critical points lie in $[\kappa_s,\infty)$, and being isolated they form a discrete, well-ordered subset of $[\kappa_s,\infty)$. Let $\kappa_1$ be the least mass critical point at which $M'$ changes sign; by the previous paragraph $\kappa_1$ exists. On $(1,\kappa_1)$ the derivative $M'$ is positive off the finitely many earlier critical points, at each of which it does not change sign; so $\kappa_1$ is the first extremum of the mass and, the sign change there being from $+$ to $-$, a strict local maximum. Every jump of the count on $(1,\kappa_1)$ vanishes (Lemma~\ref{L:jump}), so $\nmodes=0$ on $(1,\kappa_1)$, and moreover $\nmodes(\kappa)=0$ \emph{at} each of these critical points and at $\kappa_1$ itself: indeed by~\eqref{E:MORLOC} the value at an isolated critical point equals the value on the punctured sides where $M'R'<0$, and a side with $M'R'>0$ would carry the value $\Mor(\L_{\kappa_0})+1\ge1$, which is excluded on $(1,\kappa_1]$ where the neighbouring values vanish. This argument also forces $R'(\kappa_1)<0$: were $R'(\kappa_1)>0$, the side $\kappa<\kappa_1$ would have $M'R'>0$ and hence, by~\eqref{E:MORLOC}, $\nmodes=\Mor(\L_{\kappa_1})+1\ge1$ there, contradicting $\nmodes=0$. Thus $\nmodes(\kappa)=0$ for $1<\kappa\le\kappa_1$, and just beyond $\kappa_1$ the bending product is $M'R'=(-)(-)>0$, so by~\eqref{E:MORLOC} $\nmodes=\Mor(\L_{\kappa_1})+1=1$ there; by Lemma~\ref{L:jump} this value persists on the whole interval $(\kappa_1,\kappa_1')$ up to the following mass critical point $\kappa_1'$. This is~\eqref{E:ONSET}, and $\nmodes(\kappa)\ge1$ for all large $\kappa$ was already established. 
\end{proof}

\begin{proof}[Proof of Corollary~\ref{C:radial}]
For $\gamma\ge\gamma_*$ the statement is Theorem~\ref{T:TPP}(ii)(a). For $\gamma<\gamma_*$ with $\gamma\ge\gamma_\sharp$ or $\mathcal D(\gamma,d)<0$ (in particular, by Proposition~\ref{P:geometry}, whenever $d<10$) Theorem~\ref{T:TPP}(ii)(b) gives linear stability, $\nmodes(\kappa)=0$, exactly on $(1,\kappa_1]$ --- which contains all sufficiently small relative central densities --- together with $\nmodes(\kappa)=1$ on an interval immediately beyond $\kappa_1$ and $\nmodes(\kappa)\ge1$ for all sufficiently large $\kappa$; this recovers the radial (in)stability theorem of~\cite{Lam_2024}, and for $d\ge10$, $\gamma\in[\gamma_\sharp,\gamma_*)$ sharpens the large central density instability of~\cite{Lam_2024} beyond the range $d<10$ treated there. The final counting statement is Lemma~\ref{L:jump} summed over the mass critical points in $(a_0,\kappa]$ for a fixed non-critical $a_0$ close to $1$ (where $\nmodes(a_0)=0$), equivalently the winding formula~\eqref{E:WINDING} of Lemma~\ref{L:winding}.
\end{proof}

\begin{proof}[Proof of Corollary~\ref{C:nonradial}]
As shown in~\cite{Lam_nonradial}, under spherical-harmonic decomposition the quadratic form $\<\mb L_\kappa\ph,\ph\>_{\bar\rho}$ of $\mb L_\kappa$ splits into angular blocks: $\<\mb L_\kappa\bs\theta,\bs\theta\>_{\bar\rho}=\sum_{l\ge 0}Q_l[\bs\theta]$. Those of degree $l\ge1$ are non-negative regardless of $\gamma$ and $\kappa$ --- the momentum kernel $\Span\set{\mb e_1,\mb e_2,\mb e_3}$ sitting inside the $l=1$ block as its kernel, hence not affecting its sign --- while the $l=0$ block equals the radial energy form~\eqref{E:TE},
\begin{align}\label{E:ZEROBLOCK}
Q_0[\bs\theta]=\<\L_\kappa\chi_{\bs\theta},\chi_{\bs\theta}\>,
\end{align}
where $\bs\theta\mapsto\chi_{\bs\theta}$ is the linear map of~\cite{Lam_nonradial} carrying
$\bs\theta\in\mathbb H_\kappa$ to a radial amplitude $\chi_{\bs\theta}\in H_\kappa$.

We have $\Mor(\mb L_\kappa)\le\Mor(\L_\kappa)$. To see this, suppose $W\subset\mathbb H_\kappa$ is a subspace on
which $\<\mb L_\kappa\ph,\ph\>_{\bar\rho}$ is negative definite. If $\chi_{\bs\theta}=0$ for some $\bs\theta\in W$, only the blocks of degree $l\ge1$ survive and $\<\mb L_\kappa\bs\theta,\bs\theta\>_{\bar\rho}\ge0$, forcing $\bs\theta=\mb0$; so $\bs\theta\mapsto\chi_{\bs\theta}$ is injective on $W$ and, by~\eqref{E:ZEROBLOCK}, its image is a subspace of $H_\kappa$ of dimension $\dim W$ on which $\<\L_\kappa\ph,\ph\>$ is negative definite.

We have $\Mor(\mb L_\kappa)\ge\Mor(\L_\kappa)$. To see this, put $N:=\Mor(\L_\kappa)$ and let $\chi_1,\dots,\chi_N$ be eigenfunctions of~\eqref{E:GEVP} for the negative eigenvalues $\mu_1\le\dots\le\mu_N<0$ (Lemma~\ref{L:discrete}); by Remark~\ref{R:weakstrong} each is smooth and satisfies the Robin condition~\eqref{E:Robin}, hence so does every $\chi$ in $V:=\Span\set{\chi_1,\dots,\chi_N}$. For such $\chi$ the radial field $\bs\theta_\chi:=(4\pi)^{-1/2}r\chi\mb e_r$ has $(\grad\cdot\bs\theta_\chi)|_{\partial B_R}=0$, so $\bs\theta_\chi\in\mathbb H_\kappa$, and $\chi_{\bs\theta_\chi}=\chi$; thus $\chi\mapsto\bs\theta_\chi$ maps $V$ injectively onto an $N$-dimensional subspace of $\mathbb H_\kappa$ on which, by~\eqref{E:ZEROBLOCK}, $\<\mb L_\kappa\bs\theta_\chi,\bs\theta_\chi\>_{\bar\rho}=\<\L_\kappa\chi,\chi\><0$.

Hence $\Mor(\mb L_\kappa)=\Mor(\L_\kappa)$. A growing mode $e^{\lambda t}\bs\theta$ with $\lambda>0$ solves $\mb L_\kappa\bs\theta=-\lambda^2\bs\theta$, so the span of any set of growing modes carries a negative definite form, which means $\nmodes_{\mathrm{full}}(\kappa)\le\Mor(\mb L_\kappa)$. Also, the lift $\chi\mapsto\bs\theta_\chi$ turns a radial growing mode into a growing mode of the full system, so $\nmodes(\kappa)\le\nmodes_{\mathrm{full}}(\kappa)$. Since $\Mor(\L_\kappa)=\nmodes(\kappa)$ by Theorem~\ref{T:COUNT}, we now have
\[\nmodes(\kappa)\le\nmodes_{\mathrm{full}}(\kappa)\le\Mor(\mb L_\kappa)=\Mor(\L_\kappa)=\nmodes(\kappa),\]
so all four agree; the counts agreeing with multiplicity, every growing mode is radial.
\end{proof}

\section*{Acknowledgements}

The author is supported by NWO grants VI.Vidi.223.019 and OCENW.M20.194, and he thanks Mahir Had\v{z}i\'c for helpful discussions and for introducing him to this problem. As a non-native English speaker, the author used Gemini and Claude to assist with the language and presentation of mathematics, specifically for proofreading and stylistic feedback. All of their outputs were reviewed and confirmed by the author before inclusion.

\bibliographystyle{plain}
\bibliography{tpp-refs}

\end{document}